\newcommand{\ifrac}[2]{#1/#2}
\newcommand\myhrulefill[1]{\leavevmode\leaders\hrule height#1\hfill\kern0pt}
\renewcommand{\paragraph}[1]{\vspace{.07in}\noindent\textbf{{#1}~~~}}
\renewcommand{\qedsymbol}{{\footnotesize{$\blacksquare$}}}
\lstdefinestyle{customc}{
  belowcaptionskip=1\baselineskip,
  breaklines=true,
  xleftmargin=\parindent,
  language=C,
  showstringspaces=false,
  basicstyle=\ttfamily,
  keywordstyle=\bfseries\ttfamily\color{keywordcolor},
  commentstyle=\itshape\color{black},
  identifierstyle=\ttfamily\color{black},
  stringstyle=\itshape\color{NavyBlue},
  keywords={ map, flatMap, reduce, then, in, if, else, reduceByKey},
moreattributes={let, where}, 
attributestyle = \bfseries\ttfamily\color{attributecolor}
}
\tt\color{gray}
\newcommand{\aws}[1]{{\color{brown}{\noindent\textsf{\textbf{Aws}---#1}}}}
 \newcommand{\jh}[1]{{\color{blue}{\noindent\textsf{\textbf{Justin}---#1}}}}
 \newcommand{\cs}[1]{{\color{magenta}{\noindent\textsf{\textbf{Calvin}---#1}}}}
 \definecolor{darkgray}{gray}{0.4}
 \newcommand{\note}[1]{{\color{darkgray}{\noindent\textsf{\textbf{Note}---#1}}}}
 \definecolor{mypink3}{cmyk}{0, 0.7808, 0.4429, 0.1412}
\newcommand{\maybe}[1]{{\color{mypink3} \textsf{\textbf{Remove?}---#1}}}
\newcommand{\todo}[1]{{\color{red} \textsf{\textbf{To do}---#1}}}
 \renewcommand{\aws}[1]{}
 \renewcommand{\jh}[1]{}
 \renewcommand{\cs}[1]{}
 \renewcommand{\note}[1]{}
 \renewcommand{\maybe}[1]{}
 \renewcommand{\todo}[1]{}
\definecolor{lightgray}{gray}{0.9}
\definecolor{midgray}{gray}{0.65}
\definecolor{darkgray}{gray}{0.4}
\newcommand{\abr}[1]{\textsc{\MakeLowercase{#1}}}
\newcommand{\abrs}[1]{\abr{#1}{\footnotesize{s}}\xspace}
\renewcommand{\leq}{\leqslant}
\renewcommand{\geq}{\geqslant}
\newcommand{\rone}{(\emph{i})~}
\newcommand{\rtwo}{(\emph{ii})~}
\newcommand{\rthree}{(\emph{iii})~}
\newcommand{\rfour}{(\emph{iv})~}
\definecolor{keywordcolor}{gray}{0.0}
\definecolor{attributecolor}{gray}{0.0}
\definecolor{wildcolor}{gray}{0.8}
\newcommand\earr\hookrightarrow
\newcommand{\wild}{{\scriptsize \color{black} \CIRCLE}}
\let\oldwild\wild
\renewcommand{\wild}{%
  \mathchoice{\raisebox{.4pt}{$\displaystyle\oldwild$}}
             {\raisebox{.5pt}{$\oldwild$}}
             {\raisebox{0.5pt}{$\scriptstyle\oldwild$}}
             {\raisebox{0.2pt}{$\scriptscriptstyle\oldwild$}}}
\newcommand{\aset}{C}
\newcommand{\aelem}{c}
\newcommand{\dist}{\mathit{dist}}
\newcommand{\sdist}{\dist}
\newcommand{\support}{\mathit{supp}}
\newcommand{\pr}{\mathrm{Pr}}
\newcommand{\model}{M}
\newcommand{\prog}{P}
\newcommand{\var}{v}
\newcommand{\vars}{V}
\newcommand{\varsd}{V^\mathsf{det}}
\newcommand{\varsi}{\vars^\mathsf{in}}
\newcommand{\wpprob}{\mathit{wp}^\mathsf{f}}
\newcommand{\locs}{L}
\newcommand{\stmt}{\mathit{s\!t}}
\newcommand{\stmts}{\Sigma}
\newcommand{\expr}{\mathit{e}}
\newcommand{\bexpr}{\mathit{b}}
\newcommand{\dexpr}{\mathit{d}}
\newcommand{\bern}{\mathsf{bern}}
\newcommand{\lap}{\mathsf{lap}}
\newcommand{\olap}{\mathsf{exp}}
\newcommand{\unif}{\mathsf{unif}}
\newcommand{\assume}{\mathsf{assume}}
\newcommand{\sem}[1]{\llbracket #1\rrbracket}
\newcommand{\dom}{D}
\newcommand{\trace}{\tau}
\newcommand{\cost}{\omega}
\newcommand{\halt}{h}
\newcommand{\fsm}{A}
\newcommand{\fsms}{\mathcal{A}}
\newcommand{\lang}{\mathcal{L}}
\newcommand{\tuple}[1]{\langle #1 \rangle}
\newcommand{\entry}{\mathsf{in}}
\newcommand{\exit}{\mathsf{ac}}
\newcommand{\loc}{\ell}
\newcommand{\lentry}{\ell^\entry}
\newcommand{\lexit}{\ell^\exit}
\newcommand{\lto}[1]{\xrightarrow{#1}}
\newcommand{\stts}{S}
\newcommand{\stt}{s}
\newcommand{\store}{\stt}
\newcommand{\labelp}{\lambda}
\newcommand{\labele}{\kappa}
\newcommand{\labeltrace}{\mathsf{label}}
\newcommand{\qentry}{q^\entry}
\newcommand{\qexit}{q^\exit}
\newcommand{\eps}\epsilon
\newcommand{\cinit}{\textsc{init}}
\newcommand{\ctrace}{\textsc{trace}}
\newcommand{\cgen}{\textsc{generalize}}
\newcommand{\cmerge}{\textsc{merge}}
\newcommand{\ccorr}{\textsc{correct}}
\newcommand{\mergeop}{\merge}
\algrenewcommand\algorithmicfunction{\textsf{\textbf{fun}}}
\newcommand{\dunit}{\mathit{unit}}
\newcommand{\dbind}{\mathit{bind}}
\newcommand{\pre}{\varphi_\mathsf{pre}}
\newcommand{\post}{\varphi_\mathsf{post}}
\newcommand{\errp}{\beta}
\newcommand{\hoare}[4]{\vdash_{#1} \{#2\}~ #3 ~\{#4\}}
\newcommand{\true}{\mathit{true}}
\newcommand{\false}{\mathit{false}}
\renewcommand{\epsilon}{\varepsilon}
\newcommand{\encode}{\mathit{enc}}
\newcommand{\encodeD}{\mathit{enc}}
\newcommand{\fv}{\mathit{fv}}
\newcommand{\axasm}{\varphi^\mathsf{ax}}
\newcommand{\axub}{\expr^\mathsf{ub}}
\crefname{section}{\S}{\S}
\Crefname{section}{\S}{\S}
\newcommand{\Skip}{\textbf{skip}}
\newcommand{\Assm}[1]{\assume(#1)}
\newcommand{\Assg}[2]{#1 \gets #2}
\newcommand{\Rand}[2]{#1 \sim #2}
\newcommand{\Seqn}[2]{#1 \mathbin{;} #2}
\newcommand{\Cond}[3]{\textbf{if } #1 \textbf{ then } #2 \textbf{ else } #3}
\newcommand{\Whil}[2]{\textbf{while } #1 \textbf{ do } #2}
  \newcommand{\smref}[2]{the extended version~\citep{#1}}
  \newcommand{\smref}[2]{\cref{#2} in the supplementary materials}
\begin{document}

\title[]{Trace Abstraction Modulo Probability}         


\author{Calvin Smith}
\affiliation{%
  \institution{University of Wisconsin--Madison}            
  \department{Computer Sciences Department}
  \streetaddress{1210 West Dayton St.}
  \city{Madison}
  \state{WI}
  \postcode{53706}
  \country{USA}}
\email{cjsmith@cs.wisc.edu}
\author{Justin Hsu}
\affiliation{%
  \institution{University of Wisconsin--Madison}            
  \department{Computer Sciences Department}
  \streetaddress{1210 West Dayton St.}
  \city{Madison}
  \state{WI}
  \postcode{53706}
  \country{USA}}
\email{email@justinh.su}
\author{Aws Albarghouthi}
\affiliation{%
  \institution{University of Wisconsin--Madison}            
  \department{Computer Sciences Department}
  \streetaddress{1210 West Dayton St.}
  \city{Madison}
  \state{WI}
  \postcode{53706}
  \country{USA}}
\email{aws@cs.wisc.edu}






\begin{abstract}
  We propose \emph{trace abstraction modulo probability}, a proof technique for
  verifying high-probability accuracy guarantees of probabilistic programs.
  Our proofs overapproximate the set of program traces using \emph{failure automata}, finite-state automata that upper bound the probability of failing to satisfy a target specification.
  %
%
  We automate proof construction by reducing probabilistic reasoning to logical
  reasoning: we use
  program synthesis methods to
  select axioms for sampling instructions, and then apply Craig
  interpolation to prove that traces fail the target
  specification with only a small probability.
  %
  %
  Our method handles programs with
  unknown inputs, parameterized distributions,  infinite state spaces,
  and parameterized  specifications.
  We evaluate our technique
  on a range of randomized algorithms drawn from the differential privacy
  literature and beyond.
  To our knowledge, our approach is the first to automatically establish
  accuracy properties of these algorithms.

\end{abstract}

\maketitle


\section{Introduction}
\label{sec:introduction}


With the recent explosion of interest in data
analysis, randomized algorithms are increasingly seeing applications across all
of computer science. These algorithms satisfy a wide variety of subtle
probabilistic properties: modeling statistical privacy of database queries~\citep{dwork2014algorithmic},
stability and generalization of machine learning procedures~\citep{bousquet2002stability},
fairness of decision-making algorithms~\citep{dwork2012fairness},
and more.

Many probabilistic properties are tailored to specific applications, but perhaps
the most fundamental properties are \emph{accuracy} gurantees, the probabilistic
analogue of functional correctness.
While such specifications would ideally hold all of the time---with probability
$1$---such stringent guarantees rule out many useful applications of of
randomization.
Accuracy properties are often phrased
as \emph{high-probability guarantees}: for all program inputs, an output
sampled from the final distribution satisfies $\varphi$ except with
some probability $\beta$. For instance, a noisy
numeric output $r_{\emph{noisy}}$ might satisfy a precision bound $\varphi \triangleq
|r_{\emph{noisy}} - r_\emph{exact}| < 5.2$ except with probability at most
$\beta \triangleq 0.01$, where $r_\emph{exact}$ is the answer without
noise. While the \emph{failure probability} $\beta$ can be a
concrete value in $[0, 1]$, it is often treated symbolically
so that the guarantee $\varphi$ may depend on $\beta$---e.g.,
$\varphi(\beta) \triangleq |r_\emph{noisy}-r_\emph{exact}| < 1/\beta$
gives higher confidence guarantees by widening the error range. This kind of
property describes how well an algorithm will perform at varying levels of
confidence, crucial information for the algorithm designer.
Our goal is to enable algorithm designers to prove accuracy
specifications fully automatically.

While simple to state, accuracy guarantees---like other probabilistic
properties---pose interesting challenges for automated verification.
Current techniques have focused on more tractable models of randomized
computation, especially probabilistic automata and Markov Decision
Processes. (Recent surveys by \citet{DBLP:reference/mc/BaierAFK18} and
\citet{Katoen:2016:PMC:2933575.2934574} provide a good overview.)
By treating parameters and inputs as known constants, tools
can apply numerical methods to compute event
probabilities in the output distribution. There are now several mature
verification tools (e.g.,
\citep{kwiatkowska2011prism,DBLP:journals/corr/DehnertJK017}), which have found
notable success in helping designers automatically analyze complex probabilistic
systems rigorously. However, their common foundation leads to common weaknesses:
they are mostly restricted to closed programs with fixed inputs and
finite state spaces, and support for properties with symbolic parameters remains
limited.

In this paper, we start from established automated verification techniques for
non-probabilistic programs and extend them to the probabilistic setting.
Our logic-based approach yields several benefits.
By reasoning symbolically instead of numerically, we can
\rone directly establish properties for all inputs rather than requiring fixed inputs,
\rtwo handle programs that sample from distributions with unknown parameters,
possibly over infinite ranges, and
\rthree prove parametric accuracy properties, making it possible to
automatically establish tradeoffs between accuracy and failure probabilities,
and capture the dependence on other input parameters.

\subsection{An Overview of Our Approach}

\paragraph{Trace Abstraction Modulo Probability.}
Our approach is based on \emph{trace
abstraction}~\citep{heizmann2009refinement,Heizmann10,heizmann2013software,Farzan:2013},
a proof technique for non-probabilistic verification. A program $\prog$ is represented by a language
$\lang(\prog)$ of syntactic execution \emph{traces} through the control-flow
graph. To prove that $\prog$ satisfies $\varphi$, we
first overapproximate $\lang(P)$ with finite-state
automata: 
\[
\lang(P) \subseteq \lang(\fsm_1) \cup \cdots \cup \lang(\fsm_n)
\]
To show that all traces in $\cup_i \lang(\fsm_i)$ satisfy $\varphi$, the automata
are annotated with Hoare-style assertions overapproximating reachable states along
execution traces---annotations are usually computed via \emph{predicate
abstraction}~\citep{graf1997construction} or \emph{Craig interpolation}~\citep{mcmillan2006lazy}.
Intuitively, trace abstraction constructs proofs for
subsets of traces; the combined proofs verify the
whole program.

To extend this idea to probabilistic programs, suppose we want to prove that
$\varphi$ holds except with probabiltity at most $\beta$. We construct a set of
automata that \rone overapproximates all traces in $\lang(\prog)$ and \rtwo
satisfies the following probabilistic bound:
\[
  \sum_{\trace \in \cup_i\lang(\fsm_i)} \Pr[ \varphi \text{ not true after running } \trace ] \leq \beta
\]
In words, the total failure probability---across all traces represented by
the automata---is at most $\beta$.

While trace inclusion is relatively simple to check, verifying the probabilistic
bound over infinitely many traces is more challenging. To ease the task, our
automata are annotated with Hoare-style assertions describing reachable program
states, and also \emph{failure probabilities}, upper bounds on the probability
of \emph{not} reaching those program states.

\paragraph{Automating Proofs.}
To construct trace abstractions automatically,
our approach proves that individual program traces $\trace$ satisfy $\varphi$ with a failure probability $\beta$, and then generalizes the proof into a  (potentially infinite) set of traces represented by an automaton $\fsm_\trace$ such that the sum of failure probabilities of traces in $\fsm_\trace$ is at most $\beta$.
We repeatedly pick, prove, and generalize program traces until we have overapproximated the set of all possible traces $\lang(\prog)$.

The most technically intricate piece of our algorithm is the proving step. Given
a trace $\trace$, we want to
automatically prove that it satisfies $\varphi$ with failure probability $\beta$.
In the non-probabilistic setting, trace semantics can be encoded as a
logical formula and verification conditions can be discharged with an \abr{SMT}
solver. In our setting, however, traces have probabilistic semantics and a
na\"ive encoding would be prohibitively complex.
Instead, we reduce probabilistic reasoning to logical reasoning. Specifically,
we encode the verification condition as a
\emph{constraint-based synthesis problem} of the form $\exists f \ldotp \forall
X \ldotp \varphi$, where the function $f$ chooses between different
\emph{axiomatizations} of probability distributions that are sampled along the
given trace.
This choice affects the failure
probability of the entire computation, and also determines what we can assume
about the result of the random sampling statements later on in the trace. The
axioms can be
seen as approximating the semantics of probabilistic samplings using
a first-order theory amenable for checking by \abr{SMT}.

After proving
correctness of a trace $\trace$ by solving the formula
$\exists f \ldotp \forall X \ldotp \varphi$ for $f$, we demonstrate how to
use \emph{Craig interpolation}---a well-studied proof technique in
traditional, non-probabilistic verification---to construct a Hoare-style proof
of the trace along with failure probabilities. This proof can then be
generalized to cover a potentially infinite set of program traces.

\paragraph{Implementation \& Case Studies.}
We have implemented our algorithm and applied it to a range of sophisticated
randomized algorithms, mainly from the \emph{differential privacy}
literature~\citep{DMNS06}. In differential privacy, algorithm designers must add
noise to protect personal information, but try to guarantee good accuracy for
their data analyses.  Our technique automatically proves intricate accuracy specifications
capturing this tradeoff.
To demonstrate our approach's generality, we apply our technique
to reason about reliability of programs running on approximate hardware with probabilistic failures.

\paragraph{Design Principle: Reduce Probabilistic Reasoning.}
Conceptually, our approach divides assertions about state distributions into two
pieces: a standard, non-probabilistic predicate $\varphi$ on states, and a
single number $\beta$ bounding the probability that $\varphi$ fails to hold.
As a result, much of the reasoning deals with logical state predicates instead
of probabilistic assertions; the failure probability can be cleanly
tracked off to the side. This idea is inspired by the recent probabilistic Hoare logic
\abr{aHL}~\citep{DBLP:conf/icalp/BartheGGHS16}.

Reducing probabilistic reasoning to non-probabilistic reasoning is a useful
methodology to tame the complexity of deductive
verification~\citep{DBLP:conf/icalp/BartheGGHS16,barthe12,barthe14,JHThesis}. Avoiding probabilistic assertions means that our technique cannot take advantage of more precise analyses, such as concentration bounds based on independence of random variables, but it enables the application of classical verification techniques---trace abstraction, synthesis, and interpolation---yielding a high degree of automation. We view this as well worth the cost. Furthermore, our proof technique compares favorably to standard trace abstraction \citep{heizmann2009refinement, heizmann2013software} and \abr{aHL} \citep{DBLP:conf/icalp/BartheGGHS16} as discuss in \cref{sec:theory}.

\subsection{Outline and Contributions}

After demonstrating our verification strategy on two worked examples
(\cref{sec:overview}) and introducing the program model (\cref{sec:problem}), we
offer the following technical contributions.
\begin{itemize}
  \item \textbf{Trace abstraction modulo probability (\cref{sec:annotations}):}
  We present a proof rule for probabilistic accuracy properties, extending
  trace abstraction to the probabilistic setting. Our proof technique is
  based on the new notion of \emph{failure automata}, labeled automata
  overapproximating program traces and their probability of failing to satisfy a
  given postcondition.

  \item \textbf{Automating trace-based proofs (\cref{sec:algorithm}):}
  We present an algorithm for constructing trace-based proofs by iteratively
  building failure automata, proving correctness of finite program traces,
  and then generalizing the automata to cover potentially infinite sets of
  traces.

  \item \textbf{Proofs \& interpolation for probabilistic traces (\cref{sec:interpolation}):}
  We show that we can prove correctness of individual probabilistic program
  traces via a reduction to a \emph{constraint-based synthesis} problem, making
  probabilistic reasoning unnecessary. Then, we demonstrate how to apply
  \emph{Craig interpolation} to construct failure automata.

  \item \textbf{Implementation \& case studies (\cref{sec:evaluation}):}
   We implement our approach and use it to automatically prove accuracy
   guarantees of a range of randomized algorithms from the theory of differential privacy.
   We also study an example from the approximate computing literature. Our
   implementation establishes accuracy properties with symbolic parameters for
   programs with parametric inputs and infinite states, a first for automated
   verification.
\end{itemize}
Finally, we survey related work (\cref{sec:relatedwork}) and conclude
(\cref{sec:conclusions}).

%

\section{Overview and Illustration}\label{sec:overview}

In this section, we provide an overview of our proof technique on two
simple examples.

\begin{figure}[t]
\includegraphics[scale=1]{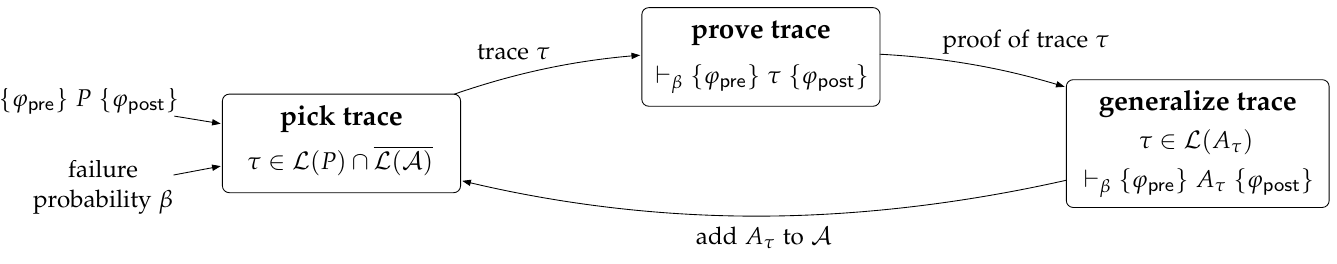}
\caption{Main loop of verification algorithm}
\label{fig:overview}
\end{figure}

\subsection{High-Level Overview}
Suppose we are given a probabilistic program $\prog$, pre- and
post-conditions $\pre$ and $\post$, and a numeric expression $\errp$
representing the maximum allowed failure probability.
Our goal is to prove that if we start executing $\prog$ from any state satisfying
$\pre$, the probability that the output state does not satisfy $\post$ upon termination is at most $\errp$.
This property is denoted by the following formula, reminiscent of a Hoare triple:
\[
  \hoare{\errp}{\pre}{\prog}{\post}
\]

\paragraph{Proof Rule.}
We view $\prog$ as a \emph{control-flow automaton} whose language $\lang(\prog)$
is the set of all \emph{traces} from the program's entry location to its exit
location. Our proof rule \emph{overapproximates} $\lang(\prog)$ by a larger set
of traces, represented by a set of finite automata $\fsms$, while ensuring that
the total failure probability across all traces in $\lang(\fsms)$ is at most
$\errp$.

\paragraph{Automation.}
To apply our proof rule automatically, we apply an algorithmic technique
summarized in \cref{fig:overview}. The technique repeatedly tries to \rone pick
a program trace $\trace \in \lang(\prog)$ outside the approximation
$\lang(\fsms)$, \rtwo prove that $\hoare{\errp}{\pre}{\trace}{\post}$, i.e., the
probability that the trace falsifies the Hoare triple is at most $\errp$, and
then \rthree generalize the trace $\trace$ into an automaton $\fsm_\trace$
encoding a set of traces with total failure probability at most $\errp$. Our
approach succeeds if it constructs a set of automata $\fsms$ modeling all
program traces $\lang(\prog)$, with total failure probability at most $\errp$.

\subsection{Illustrative Example: Loop-free Program}\label{sec:ex1}
To warm up, we consider
the loop-free program in \cref{fig:ex1_prog}.
The function $\mathsf{ex1}$ takes a single [0,1]-valued input $p$
and returns a Boolean value $y$.
Our goal is to prove the following accuracy property:
\[
\hoare{p}{\true}{\mathsf{ex1}(p)}{\neg y}
\]
In words, the program fails to return $y = \false$ with probability at most $p$.
This property can be established informally:
\rone the probability that the program takes the \emph{then} branch and returns $y = \true$ is $0.5p$;
\rtwo the probability that it takes the \emph{else} branch and returns $y = \true$ is $0.25p$.
Therefore, the failure probability is $0.5p + 0.25p \leq p$.

\paragraph{Illustrating Proof Artifacts.}
We begin by describing the proof artifacts constructed by our approach.  The
program $\mathsf{ex1}$ is presented as a \emph{control-flow automaton} over the
alphabet of program statements, as shown in the left side of
\cref{fig:ex1}.  Edge labels of the form $[c]$ are guards (also known as
\emph{assume} statements) encoding possible branches of the conditional statement.
Accepted traces start from the initial node $\texttt{in}$ and end in the final, accepting node $\texttt{ac}$.

\begin{wrapfigure}{r}{0.3\textwidth}
  \centering
  \begin{algorithmic}
  \Function{\sf ex1}{$p$}
  \State $x \sim \texttt{bern}(0.5)$
  \If {$x$}
    \State $y \sim \texttt{bern}(p)$
  \Else
   \State $y \sim \texttt{bern}(0.5p)$
  \EndIf
  \State \Return $y$
  \EndFunction
  \end{algorithmic}
  \caption{Loop-free example}
  \label{fig:ex1_prog}
\end{wrapfigure}

\begin{figure}[t!]
\includegraphics[scale=1]{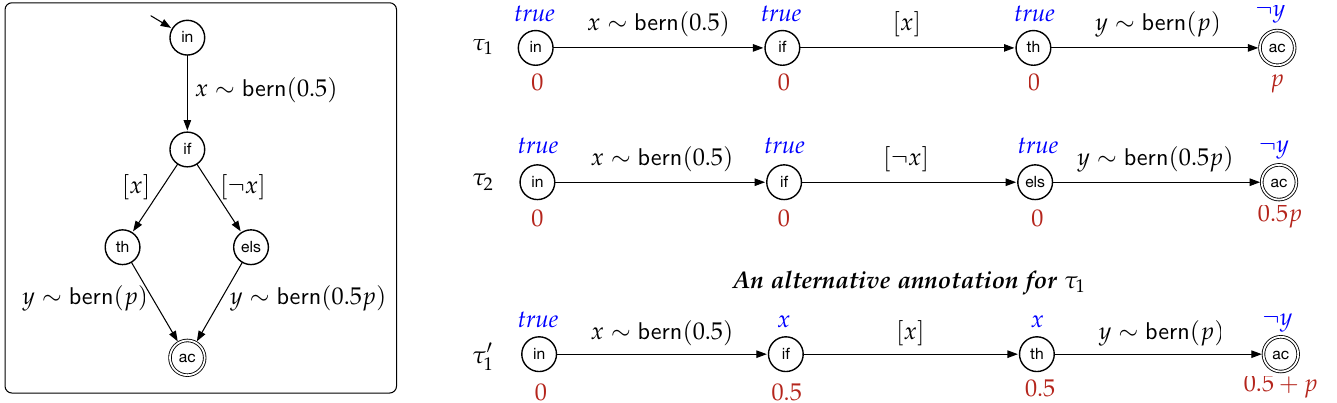}
\caption{A simple probabilistic program and possible trace annotations}
\label{fig:ex1}
\end{figure}

Our verification approach focuses on one trace at a time.  There are two
possible traces in our example program: one through the then branch and one
through the else branch of the conditional. We refer to these traces as
$\trace_1$ and $\trace_2$, respectively.

To prove accuracy properties about each trace, our technique annotates traces
with auxiliary information. Let us consider the annotated trace $\trace_1$
in \cref{fig:ex1_prog}. Each node along the trace is annotated with two
labels:
\rone the top/blue label is a logical formula representing a set of reachable program states at that point (these can be viewed as Hoare-style annotations);
\rtwo the bottom/red label is an expression representing the probability that the program does not end up in the blue states.
Consider node $\texttt{in}$ from $\trace_1$: it is labeled by
$\true$ and $0$, indicating that the probability of failing to arrive in a
program state satisfying $\true$ is 0 (as expected).
However, consider node $\texttt{ac}$: it is labeled with $\neg y$ and $p$, indicating that the probability of failing to arrive in a state where $y = \false$ is at most $p$.
The other program trace $\trace_2$, which traverses the other branch, is
similar; the annotation of $\trace_2$ demonstrates that its failure probability is at most $0.5p$.

At this point we have considered all of the $\mathsf{ex1}$'s traces.
If we na\"ively sum up their probabilities of failure to bound the total
failure probability, we get a failure probability of at most $p + 0.5p = 1.5p$,
which is too weak---we wanted to prove an upper bound of $p$.
However, we can give a more precise analysis since
the two traces consider two mutually disjoint events: one path assumes $x$
is true while the other assumes $x$ is false.
In this case, we can soundly take the \emph{maximum} of the two failure
probabilities, $0.5p$ and $p$, arriving at a total failure probability of $p$
and concluding the proof.\footnote{%
  This argument is an instance of a more general proof technique over sets of
  traces represented as automata; we will later formalize this idea as
  \emph{merging} two automata.}

Given a labeled trace, it is relatively straightforward to check if the
annotations are valid. However, constructing the annotations may not be so easy.
The main challenge is selecting labels for the results of sampling
instructions---the invariants are not fully determined by the program, and in
general the proper choice depends on the target property we are trying to
establish. For instance, it is also possible to give an alternative annotation of
$\trace_1$, denoted $\trace_1'$ in \cref{fig:ex1_prog}. Node $\texttt{if}$ is
labeled with $x$ and $0.5$, indicating that the probability of \emph{not}
arriving in a state where $x = \true$ is at most $0.5$.

This annotated trace illustrates another general feature of our analysis:
failure probabilities sum up along traces. Intuitively, this principle
corresponds to a basic property of probabilities called the \emph{union bound}:
$\pr(A \cup B) \leq \pr(A) + \pr(B)$ for any two events $A$ and $B$. In
particular, if $A$ and $B$ are interpreted as \emph{bad} events---events
violating labels at different nodes---the probability of any failure
occurring along a trace is at most the sum of the failure probabilities of
individual steps.
In $\trace_1'$, the probability of $y = \true$ at node $\texttt{ac}$ of
$\trace_1'$ is $p$, so the final
failure probability computed for this trace is $0.5 + p$.  While this annotation
in $\trace_1'$ is sound, it is too weak to prove our desired property.

\paragraph{Encoding Trace Semantics.}
Our technique cleanly separates probabilistic assertions into two pieces: a
non-probabilistic component describing the state of program variables (the blue
annotations in \cref{fig:ex1}), and a single number summarizing the
probabilistic part of the assertion (the red annotations in \cref{fig:ex1}). As a
result, we can reduce probabilistic reasoning to logical reasoning, allowing us
to harness the power of \abr{SMT} solvers and \emph{synthesis} techniques.

To illustrate, we show how to construct trace labels for $\trace_1$.
Our method proceeds in two steps.
First, like in traditional \emph{verification-condition generation}, we encode the semantics of trace $\trace_1$ and the specification as a logical formula, which, if valid, implies that $\hoare{p}{\true}{\trace_1}{\neg y}$.
Specifically, we construct the following verification condition:\footnote{We have simplified some aspects of
the encoding here; \cref{sec:interpolation} provides a formal treatment.}
\begin{equation}\label{eq:ex1}
\exists f_x, f_y \ldotp \forall x,y,\cost_i \ldotp (\cost_0 = 0 \land  \varphi ) \Rightarrow (\neg y \land \cost_3 \leq p)
\end{equation}
Above, $\varphi$ is a set of conjuncts, each encoding the semantics of one statement in $\trace_1$:
\[
\varphi \triangleq
\underbrace{
\left(\begin{array}{ll}
f_x = 1 \Rightarrow& x \land \cost_1 = \cost_0 + 0.5\\
f_x = 2 \Rightarrow& \neg x \land \cost_1 = \cost_0 + 0.5\\
f_x = 3 \Rightarrow& \cost_1 = \cost_0\\
\end{array}\right)}_{x \sim \bern(0.5)}
\land
\underbrace{
 (x \land \cost_2 = \cost_1)
}_{[x]}
\land
\underbrace{
\left(\begin{array}{ll}
f_y = 1 \Rightarrow& y \land \cost_3 = \cost_2 + 1-p\\
f_y = 2 \Rightarrow& \neg y \land \cost_3 = \cost_2 + p\\
f_y = 3 \Rightarrow& \cost_3 = \cost_2
\end{array}\right)
}_{y \sim \bern(p)}
\]

Let us explain how the encoding models the program $\prog$.
The variables $\cost_i$ are fresh real-valued variables that represent the probability of failure along the path---$\cost_0$, the initial probability at node $\texttt{in}$, is constrained to 0.
The right-hand side of the implication in \cref{eq:ex1} encodes the postcondition $\neg y$ and the upper bound on the failure probability $\cost_3 \leq p$.

The more interesting parts of the encoding are the existentially quantified variables $f_x,f_y$, which appear in $\varphi$; we assume that $f_x,f_y \in \{1,2,3\}$.
These are used to \emph{select} an \emph{axiomatization} for each sampling statement.
\emph{Synthesizing} the right values for $f_x$ and $f_y$ allows us to show that \cref{eq:ex1} is valid, and therefore prove correctness of $\trace_1$.
For instance, if $f_x$ is set to $1$, then $x\sim \bern(0.5)$ is encoded as an assignment statement $x \gets \true$ with an accumulated failure probability of $0.5$, since $x$ is not $\true$ with a probability of 0.5;
if $f_x$ is set to $3$, then $x$ is treated as a non-deterministic Boolean, incurring no probability of failure.

It is not hard to check that any proof of validity of \cref{eq:ex1} must set $f_x = 3$ and $f_y = 2$, as otherwise we cannot establish the postcondition, $\neg y$, or the upper bound on failure, $p$.
In general, we treat $f_x$ and $f_y$ as uninterpreted functions whose arguments
are program inputs, so that the choice of axiomatization may depend on the
program state (\cref{sec:interpolation} presents the general form).

\paragraph{Labels via Craig Interpolation.}
Suppose that we have proved validity of \cref{eq:ex1} and discovered that
setting $f_x = 3$ and $f_y = 2$ yields a satisfiable formula.
Plugging these values into \cref{eq:ex1} and negating the postcondition, we arrive at the following unsatisfiable formula:
\[
\cost_0 = 0
\land
\underbrace{
\cost_1 = \cost_0
}_{x \sim \bern(0.5)}
\land
\underbrace{
 x \land \cost_2 = \cost_1
}_{[x]}
\land
\underbrace{
 \neg y \land \cost_3 = \cost_2 + p
}_{y \sim \bern(p)}  \land (y \lor \cost_3 > p)
\]
In first-order logic, it is known that if $A \land B$ is unsatisfiable,
then there is a formula $I$ over the shared vocabulary of $A$ and $B$
such that $A \Rightarrow I$ and $I \Rightarrow \neg B$ are valid.
$I$ is called a \emph{Craig interpolant}.
Intuitively, an interpolant overapproximates $A$ while maintaining
unsatisfiability with $B$; this overapproximation can be seen as trying to
generalize the assertions as much as possible.
In our unsatisfiable formula above, we can compute a \emph{sequence} of
interpolants by splitting the formula into $A$ and $B$ segments after every
statement's encoding.\footnote{Equivalently, we can encode the problem as
solving \emph{recursion-free Horn clauses}~\citep{rummer2013classifying}.}
The resulting interpolants compactly encode the two labels on traces, the sets of states and probabilities of failure.
E.g., consider the  split:
\[
\begin{array}{rrr}
A \triangleq \cost_0 = 0
\land
\underbrace{
\cost_1 = \cost_0
}_{x \sim \bern(0.5)}
\land
\underbrace{
 x \land \cost_2 = \cost_1
}_{[x]}
&&
B \triangleq
\underbrace{
 \neg y \land \cost_3 = \cost_2 + p
}_{y \sim \bern(p)}  \land (y \lor \cost_3 > p)
\end{array}
\]
A possible interpolant for $A \land B$ is $I \triangleq \cost_2 = 0$.  This
indicates that any program state is reachable at node $\texttt{th}$ (since
program variables are unconstrained in $I$) with a probability of failure $0$.  The
interpolant condition ensures that $I$ can only mention $\cost_2$,
the only variable shared by $A$ and $B$.

\subsection{Illustrative Example: Handling Loops}\label{sec:ex2}
\begin{wrapfigure}{r}{0.3\textwidth}
  \vspace{-.2in}
\begin{algorithmic}
\Function{\sf ex2}{$q, n, \eps, p$}
\State $i \gets  0$
\While {$i < n$}
  \State $a[i] \gets \lap(q[i],\frac{1}{\eps})$
  \State $i \gets i + 1$
\EndWhile
\State \Return $a$
\EndFunction
\end{algorithmic}
\caption{Example with a loop}
\label{fig:ex2_prog}
\end{wrapfigure}
We now consider a more complex example with loops, $\mathsf{ex2}$ in
\cref{fig:ex2_prog}. $\mathsf{ex2}$ is a simplified sketch of mechanisms from
\emph{differential privacy}~\citep{DMNS06}, which carefully add random noise to
query results before disclosing them.  The program $\mathsf{ex2}$ takes an array
of integers $q$ of length $n$, and constructs an array $a$ whose values are
noisy versions of those in $q$.  Specifically, for each element $q[i]$, $a[i]$
is noise drawn from the Laplace distribution with \emph{mean} $q[i]$ and
\emph{scale} $1/\eps$, where $\eps > 0$ is a real-valued input to the program.
(All primitive distributions are defined in \cref{sec:problem}.)

Our goal is to prove the accuracy property
$
\hoare{p \cdot n}{\true}{\mathsf{ex2}(q,n,\eps)}{\post},
$
where the post-condition is defined to be
\[
\post \triangleq \forall j \in [0,n) \ldotp |a[j] - q[j]| \leq \frac{1}{\eps}\log\left(\frac{1}{p}\right) .
\]
In other words, for any $p \in (0,1]$, we want to verify that the difference
between $a[j]$ and $q[j]$ is bounded by a function of $\eps$ and $p$. Observe
that $\post$
involves input parameters $q, n, \epsilon$, and $p$, but $p$ does not
appear in the program---the accuracy property is a parameterized family of
properties. From our postcondition, we see that we can guarantee tighter bounds
on the error---the difference between the exact answer $q[j]$ and the noisy
answer $a[j]$---if we are willing to allow this property to be violated with
larger probability $p \cdot n$. This style of postcondition is common for
many randomized algorithms, capturing the
relationship between \emph{accuracy}---how far the results are from the exact
values---and probability of failure, or how often the target property will not hold.

\begin{figure}[t!]
  \includegraphics[scale=1]{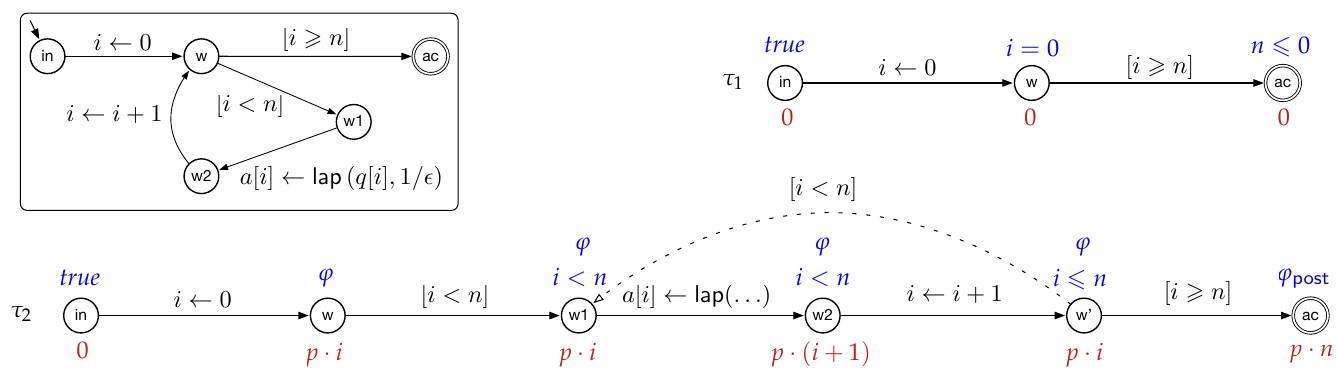}
  \caption{A looping illustrative example}
\label{fig:ex2}
\end{figure}

\paragraph{Trace Generalization.}
The control-flow automaton representation of $\mathsf{ex2}$ is shown in the box in
\cref{fig:ex2}.
While the total number of loop iterations is at most the input parameter $n$ in
the original program, the automaton abstraction overapproximates these program
behaviors with an infinite number of traces due to the loop.
Therefore, unlike our first example, we cannot construct a proof for every trace individually.
Our technique proceeds by \emph{picking} traces, \emph{proving} them correct, and \emph{generalizing} them into automata representing infinite sets.

Let us first consider trace $\trace_1$ in \cref{fig:ex2}; this trace does
not enter the loop. The trace is easily shown to be correct since not entering
the loop implies that $n\leq 0$, vacuously implying $\post$ with failure
probability $0$. More interesting is trace $\trace_2$ in
\cref{fig:ex2}, which executes the loop body once and exits.  The formula
$\varphi$ in the annotation is defined as follows:
\[
\varphi \triangleq \forall j \in [0,i) \ldotp |a[j] - q[j]| \leq \frac{1}{\eps}\log\left(\frac{1}{p}\right)
\]
Notice the probability of failure is $p \cdot i$ on nodes $\texttt{w},\texttt{w1},\texttt{w2},$ and $\texttt{w'}$.
After loop exit, using the exit condition, we conclude that the probability of
failure is $p \cdot n$.  Informally, these labels capture the fact that the
failure probability depends on how many times we have executed the loop, which
is tracked in the counter $i$.

Our algorithm discovers that the labels are \emph{inductive}: no matter how many times we execute the loop, the probability of failing to satisfy $\varphi \land i < n$ at loop entry is $p \cdot i$.
Therefore, the algorithm generalizes this trace
into an infinite set of traces by adding an edge from node $\texttt{w'}$ to
$\texttt{w1}$ with the statement $[i < n]$.
%
With this additional edge in place, we now have an automaton representing  all
traces that go through the loop at least once. The total failure probability of
those traces is the label under node $\texttt{ac}$: $p \cdot n$.  Combined with
trace $\trace_1$, we have covered all the traces of $\mathsf{ex2}$, proving that
the total probability of failure is $p \cdot n + 0 = p \cdot n$ as desired.


\paragraph{Selecting Axioms for the Laplace Distribution.}
The sampling statement $a[i] \sim \lap(\ldots)$ in $\trace_2$ is encoded by the following logical formula:\footnote{In practice, we treat non-linear arithmetic operations and transcendentals (e.g., log) as uninterpreted functions and use the theorem-enumeration technique recently proposed by~\citet{Srikanth2017} to axiomatize them.}
\[
\begin{array}{lcr}
|a[i] - q[i]| \leq \frac{1}{\eps}\log\left(\frac{1}{f_a(i,p,n)}\right) & \land &
\cost_3 = \cost_2 + f_a(i,p,n)
\end{array}
\]
The left conjunct specifies that we can assume that the difference between $a[i]$ and
$q[i]$ is at most $\frac{1}{\eps}\log\left(\frac{1}{f_a(i,p,n)}\right)$; the
right conjunct specifies that this assumption fails with a probability of
$f_a(i,p,n)$.  We treat $f_a$ as an uninterpreted function with range $(0,1]$,
so that there are infinitely many possible interpretations of $f_a$
corresponding to different failure probability/accuracy tradeoffs for the
Laplace distribution. To get the annotation proving correctness of $\trace_2$ in
\cref{fig:ex2}, our technique synthesizes the interpretation $f_a(i,p,n) = p$. With
this choice, our analysis accumulates a probability of failure of $p$ for every
loop iteration, ending up with a total probability of failure of $p \cdot n$.

\section{Preliminaries} \label{sec:problem}

In this section, we formalize our program model and accuracy specifications.

\subsection{Program Model and Semantics}\label{sec:model}

\paragraph{Probability Distributions.}
To model probabilistic computation mathematically, we use probability
sub-distributions. A function $\mu : \aset \to [0,1]$ defines a \emph{discrete
sub-distribution} over a set $\aset$ if it is non-zero for at most countably
many elements in $\aset$, and $\sum_{\aelem \in \aset} \mu(\aelem) \leq 1$; we
will abbreviate discrete sub-distribution as distribution throughout this paper.
We will often write $\mu(\aset')$ for a subset $\aset' \subseteq \aset$ to mean
$\sum_{\aelem \in \aset'} \mu(\aelem)$. We write $\sdist(\aset)$ for the set of
all distributions over $\aset$. The \emph{support} of a distribution $\mu$ is
defined as $\support(\mu) \triangleq \{\aelem \in \aset \mid \mu(\aelem) > 0\}$.

We focus on discrete sub-distributions to keep measure-theory overhead to a
minimum. As a consequence, we only allow programs to sample from primitive
discrete distributions. Supporting continuous primitive distributions (e.g., the
Gaussian distribution) would not introduce any difficulties beyond requiring a
more technically involved definition of the program semantics.

\begin{table}[t]
  \caption{Distribution expressions and their semantics}
  \label{tbl:dexpr}
  \footnotesize
  \begin{tabular}{llll}
    \toprule
    Name & Dist. expr. $\dexpr$ & Parameters & Semantics $\stt(\dexpr)$ \\
    \midrule
    Bernoulli
    &
    $\bern(\expr)$
    &
    $\expr \in [0,1]$
    &
    $\mu(\true) = \stt(\expr)$ and $\mu(\false) = 1-\stt(\expr)$
    \\
    Uniform
    &
    $\unif(\expr)$
    &
    $\expr$ is a finite set
    &
    $\mu(c) = 1/|\stt(\expr)|$, for $c \in \stt(\expr)$
    \\
    Laplace
    &
    $\lap(\expr_1,\expr_2)$
    &
    mean $\expr_1 \in \mathds{Z}$; scale $\expr_2 \in \mathds{R}^{>0}$
    &
    $\mu(c) \propto \exp\left( - \frac{ |c - s(\expr_1)| }{ s(\expr_2) } \right)
    $, for $c \in \mathds{Z}$
    \\
    Exponential
    &
    $\olap(\expr_1,\expr_2)$
    &
    shift $\expr_1 \in \mathds{Z}$;  scale $\expr_2 \in \mathds{R}^{>0}$
    &
    $\mu(c) \propto \exp\left( - \frac{ c - s(\expr_1) }{ s(\expr_2) } \right)$,
    for $c \in \mathds{Z}$ and $c \geq s(\expr_1)$
    \\
    \bottomrule
  \end{tabular}
\end{table}

\paragraph{Program Expressions.}
We fix a set of variables $\vars$ that appear in the program.
A program \emph{state} $\stt$ is a map assigning every variable $\var \in
\vars$ to a value. We will use $\stts$ to denote the set of all possible
states.
Given variable $v$, we use $\stt(v)$ to denote the value of $v$ in state
$\stt$.
Given constant $c$, we use $\stt[v\mapsto c]$ to denote the state $\stt$
with variable $v$ mapped to $c$.
The semantics of an expression $\expr$ is a function $\sem{\expr} : \stts \to
\dom$ from a state to an element of some type $\dom$.
For instance, the expression $x + y$ in state $s$ is interpreted as
$\sem{x+y}(\stt) = \stt(x) + \stt(y)$.
We will often abbreviate $\sem{\expr}(\stt)$ by $\stt(\expr)$.



\paragraph{Distribution Expressions.}
A \emph{distribution expression} $\dexpr$ is interpreted as a distribution
family $\sem{\dexpr} : \stts \to \sdist(\dom)$, mapping a state in $\stts$
to a distribution over $\dom$ with countable support.
Our framework can naturally handle any distribution expression that can be
interpreted as a discrete distribution. For concreteness, we will consider the
four primitive distributions in \cref{tbl:dexpr}.

Consider the Bernoulli distribution expression, $\bern(\expr)$.  Given a state
$\stt$, semantically $\bern(\expr)$ is the distribution $\mu \in
\dist(\mathds{B})$ where $\mu(\true) = \stt(\expr)$ and $\mu(\false) =
1-\stt(\expr)$.  Similarly, the uniform distribution expression $\unif(\expr)$,
where $\expr$ encodes to a finite set, is interpreted as the distribution
assigning equal probability to every element in $\stt(\expr)$.

We also use the (discrete) \emph{Laplace distribution}, a common primitive
distribution in the theory of differential privacy.  For a state $\stt$, the
distribution expression $\lap(\expr_1,\expr_2)$ is semantically the discrete
Laplace distribution with \emph{mean} $\stt(\expr_1)$ and \emph{scale}
$\stt(\expr_2)$: for every integer $c \in \mathds{Z}$, it assigns a probability
proportional to $\exp\left( - \frac{ |c - s(\expr_1)| }{ s(\expr_2) } \right)$.
The (discrete) \emph{exponential distribution} expression
$\olap(\expr_1,\expr_2)$ is similar, but only assigning positive probability to
integers above  the \emph{shift} $\stt(\expr_1)$.

We implicitly assume that arguments of distribution expressions are well-typed and valid.

\paragraph{Programs, Statements, and Traces.}
Our verification technique will target programs written in a probabilistic,
imperative language. The \emph{basic statements} are drawn from
a set $\stmts$:
\begin{itemize}
  \item \emph{Assignment} statements $\Assg{\var}{\expr}$, where $\expr$ is an
    expression over $\vars$, e.g., $\var_1 + \var_2$.
  \item \emph{Sampling} statements $\Rand{\var}{\dexpr}$, where $\dexpr$ is a
    distribution expression.
  \item \emph{Assume} statements $\Assm{\bexpr}$, where $\bexpr$ is a Boolean
    expression over $\vars$.
\end{itemize}
%
%
A \emph{trace} $\trace$ is a finite sequence of statements $\stmt_1 ; \cdots ;
\stmt_n$, and a program $\prog$ is interpreted as a (possibly infinite) set of
traces $\lang(\prog)$.
We include full details of the programming language in
\smref{smith2019:arxiv}{app:ahl};
the interpretation is standard, using assume
statements to model typical control-flow constructs.
For instance, a conditional statement $\Cond{\bexpr}{\trace_1}{\trace_2}$ can be
modeled as the pair of traces $\Assm{\bexpr} ; \trace_1$ and $\Assm{\neg \bexpr}
; \trace_2$.
By construction, traces in $\lang(\prog)$ are semantically disjoint---no trace
in $\lang(\prog)$ is a prefix of (or equal to) any other trace in
$\lang(\prog)$, and the first differing statements between any two traces are of
the form $\assume(\bexpr)$ and $\assume(\neg \bexpr)$.

\paragraph{Trace Semantics.}
We interpret a trace $\trace$ as a function $\sem{\trace}: \stts
\to \sdist(\stts)$ from input states to distributions over output states. To define this
semantics formally, we need two standard constructions on distributions. The map
$\dunit : \dom \to \sdist(\dom)$ maps $a \in \dom$ to the Dirac distribution
$\delta_a$ at $a$, i.e., the distribution that returns $1$ at $a$ and $0$ otherwise.
The map $\dbind : \sdist(\dom_1) \to (\dom_1 \to \sdist(\dom_2)) \to
\sdist(\dom_2)$ combines probabilistic computations in sequence:
$
  \dbind(\mu, f)(a_2) = \sum_{a_1 \in \dom_1} \mu(a_1) \cdot f(a_1)(a_2) .
$
These maps are the usual unit and bind for the (sub-)distribution monad. Then,
we can give semantics to basic statements and traces as shown in~\cref{fig:sem}.
\begin{figure}[t!]
\begin{align*}
  \sem{\Assg{\var}{\expr}}(\stt) &\triangleq \dunit(\stt[\var \mapsto \sem{\expr}(\stt)])
  &&&
  \sem{\Rand{\var}{\dexpr}}(\stt) &\triangleq \dbind(\sem{\dexpr}(\stt), \lambda x.\, \dunit(\stt[\var \mapsto x]))
  \\
  \sem{\Assm{\bexpr}}(\stt) &\triangleq \text{if } \sem{\bexpr}(\stt) \text{ then } \dunit(\stt) \text{ else } 0
  &&&
  \sem{\Seqn{\stmt}{\trace}}(\stt) &\triangleq \dbind(\sem{\stmt}(\stt), \sem{\trace})
\end{align*}
\caption{Statement and trace semantics}
\label{fig:sem}
\end{figure}
%
Finally, the semantics of a program $\prog$ is defined as the aggregate of its
traces. Formally, $\sem{\prog}: \stts \to \sdist(\stts)$ is defined as
\[
  \sem{\prog}(s) \triangleq \sum_{\trace \in \lang(\prog)} \sem{\trace}(\stt)
\]
where each term $\sem{\trace}(\stt)$ is the output distribution from running
$\trace$ starting from input $\stt$, and the sum of distributions is defined
pointwise. For any disjoint set of traces corresponding to a program $\prog$,
the sum on the right-hand side is indeed a distribution.


\subsection{Programs as Automata}
We can encode
the set of possible traces of a program $\prog$ as a regular language $\lang(\prog)$ represented by all paths through its control-flow graph.
We begin with a general definition of automata over program statements, and then show how we represent programs as automata.

\paragraph{Automata over Statements.}
A \emph{finite-state automaton over statements} $\fsm$ is a graph
$\tuple{Q,\delta}$, where
\begin{itemize}
  \item $Q$ is a finite set of \emph{nodes}.
  \item $\delta \subseteq Q \times \stmts \times Q$ is the \emph{transition
    relation}, where $\stmts$ are basic statements.
  \item $\qentry, \qexit \in Q$ are special nodes called the \emph{initial} and \emph{accepting} nodes, respectively.
\end{itemize}
We will use $q_i \lto{\stmt} q_j$ to denote that $\tuple{q_1,\stmt,q_j} \in
\delta$. We write $\lang(\fsm)$ for the language of traces \emph{accepted} by
 $\fsm$, where a trace $\stmt_1,\ldots,\stmt_n$ is accepted iff
$
  \{\qentry \lto{\stmt_1} q_1, q_1 \lto{\stmt_2} q_2, \ldots, q_{n-1} \lto{\stmt_n} \qexit\} \subseteq \delta.
$
It will sometimes be useful to use multiple automata to model the traces in a
single program. We will use $\lang(\fsms)$ to denote the union of all languages
accepted by a set of automata $\fsms$, i.e., $\bigcup_{\fsm \in \fsms}
\lang(\fsm)$.

We assume that all nodes $q \in Q$ can reach the accepting node $\qexit$ via the
transition relation $\delta$, and that there are no transitions starting from
$\qexit$.  We also assume that automata model well-formed control flow, i.e.,
\rone all nodes $q_i \in Q$ have at most two outgoing transitions and \rtwo if
$q_i \lto{\stmt_1} q_j$ and $q_i \lto{\stmt_2} q_k$ for $j \neq k$, then
$\stmt_1, \stmt_2$ are of the form $\assume(\bexpr_1)$ and $\assume(\bexpr_2)$,
such that $\bexpr_1 \equiv \neg \bexpr_2$.

\paragraph{From Program Traces to Automata}
We will identify a \emph{program} with an automaton representing its
its \emph{control-flow graph} (\abr{CFG}). A program $\prog$ is of
the form $\tuple{\locs,\delta}$, where the nodes $\locs$ of the automaton
denote the set of \emph{program locations} (e.g., line numbers).  The special nodes
$\lentry,\lexit \in \locs$ model the first and last lines of the program.
To ensure there is no control-flow non-determinism, we assume that for any
$\loc_i \lto{\assume(\bexpr)} \loc_j$, there is a transition $\loc_i
\lto{\assume(\neg \bexpr)} \loc_k$.

We use $\varsi \subseteq \vars$ to denote the set of input variables, which are
not modified by the program.  We will also use $\varsd \subseteq \vars$ to
denote the set of program variables whose values are assigned
deterministically, i.e., not affected by probabilistic choice---by definition,
$\varsi \subseteq \varsd$. (We may not be able to determine $\varsd$ exactly in
practice, but we can under-approximate it via a simple static analysis.)

\subsection{Probabilistic Accuracy Properties}

We will define specifications using the Hoare-style statement
\[
\hoare{\errp}{\pre}{\trace}{\post}
\]
where
the \emph{precondition} $\pre \subseteq \stts$ and \emph{postcondition} $\post \subseteq \stts$ are  sets of program states,
and the \emph{failure probability} $\errp$ is a $[0,1]$-valued function
   over input variables $\varsi$.
For simplicity, we will treat $\errp$ as an expression over $\varsi$---%
e.g., $0$ or $p \cdot n$ in \cref{sec:ex2}---and use $\stt(\errp)$ to denote the value of $\errp$ in state $\stt$.

We say that $\hoare{\errp}{\pre}{\trace}{\post}$ is \emph{valid} iff
for any state $\stt \in \pre$, we have $\mu(\overline{\post})
\leq \stt(\errp)$, where $\mu = \sem{\trace}(s)$ and $\overline{\post} = \stts
\setminus \post$.  In other words, the probability that the trace starts in
$\pre$ and \emph{does not} end up in $\post$ is \emph{upper bounded} by $\errp$.
We extend this notation to programs $\prog$ in the natural way, writing
$\hoare{\errp}{\pre}{\prog}{\post}$ iff for any input state
$\stt \in  \pre$, the output distribution $\mu = \sem{\prog}(s)$ satisfies the
bound $\mu(\overline{\post}) \leq \stt(\errp)$.

%

\section{Trace Abstraction Modulo Probability} \label{sec:annotations}

With the preliminaries out of the way,
we begin to introduce a version of trace abstraction for probabilistic
programs and show how to use it to prove accuracy specifications. Given a
program $\prog$, suppose we want to establish the following accuracy
specification:
$
  \hoare{\errp}{\pre}{\prog}{\post}.
$
We will overapproximate the traces of $\prog$ with a set of automata $\fsms$ and
analyze each automaton separately; this way, we can focus on smaller groups of
possible traces. If we can show that the probability $\post$ does not hold
across all automata is at most $\errp$, this implies the accuracy specification.
We formalize this argument in the following proof rule. (We defer all
proofs to \smref{smith2019:arxiv}{app:proofs}.)

\begin{theorem}[General proof rule]\label{thm:gproofrule}
  The specification $
    \hoare{\errp}{\pre}{\prog}{\post}
  $ is valid
  if there exists a set of automata $\fsms$ such that
  \begin{align}
    \lang(\prog) \subseteq \lang(\fsms) \tag{Trace inclusion}\\
    \forall s \in \pre \ldotp \sum_{\trace \in \lang(\fsms)}  \sem{\trace}(s)(\overline{\post}) \leq \stt(\errp) \tag{Failure probability upper bound}
  \end{align}
\end{theorem}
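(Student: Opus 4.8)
The plan is to unfold the definition of program validity and reduce the claim to a routine comparison of nonnegative sums. First I would fix an arbitrary input state $\stt \in \pre$. By the definition of validity extended to programs, it suffices to show $\sem{\prog}(\stt)(\overline{\post}) \leq \stt(\errp)$. Using the program semantics $\sem{\prog}(\stt) = \sum_{\trace \in \lang(\prog)} \sem{\trace}(\stt)$, the convention $\mu(\overline{\post}) = \sum_{c \in \overline{\post}} \mu(c)$, and the fact that the sum of distributions is taken pointwise, I would rewrite the left-hand side as the double sum $\sum_{c \in \overline{\post}} \sum_{\trace \in \lang(\prog)} \sem{\trace}(\stt)(c)$.

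The key step is to interchange the order of summation so as to recover the per-trace failure masses, yielding $\sum_{\trace \in \lang(\prog)} \sem{\trace}(\stt)(\overline{\post})$. Since each term $\sem{\trace}(\stt)(c)$ is a probability, and hence nonnegative, this rearrangement of a possibly infinite series is justified by Tonelli's theorem, so there are no convergence subtleties. This rewriting exhibits the program's failure probability as the sum, ranging over exactly the program traces in $\lang(\prog)$, of the individual per-trace failure probabilities $\sem{\trace}(\stt)(\overline{\post})$.

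Next I would invoke the \emph{Trace inclusion} hypothesis $\lang(\prog) \subseteq \lang(\fsms)$. Because every summand $\sem{\trace}(\stt)(\overline{\post})$ is nonnegative, enlarging the index set from $\lang(\prog)$ to the superset $\lang(\fsms)$ can only increase the total, giving $\sum_{\trace \in \lang(\prog)} \sem{\trace}(\stt)(\overline{\post}) \leq \sum_{\trace \in \lang(\fsms)} \sem{\trace}(\stt)(\overline{\post})$. Finally, the \emph{Failure probability upper bound} hypothesis instantiated at this $\stt$ bounds the right-hand side by $\stt(\errp)$, completing the chain of inequalities and establishing validity of the specification.

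I do not anticipate a genuine obstacle: the heart of the argument is monotonicity of nonnegative sums under a growing index set. The only point needing care is the interchange of the two sums, over $\overline{\post}$ and over the trace set, both of which may be infinite; nonnegativity makes Tonelli's theorem applicable and guarantees that each intermediate quantity is a well-defined element of $[0,\infty]$ (in fact bounded by $\stt(\errp) \leq 1$). It is worth noting that the disjointness of $\lang(\prog)$ is what makes $\sem{\prog}(\stt)$ a genuine sub-distribution in the first place, but this proof uses only the pointwise-sum identity and never disjointness; consequently, the fact that $\fsms$ may overapproximate $\prog$ with overlapping automaton languages causes no difficulty for the upper bound.
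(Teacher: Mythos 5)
Your proposal is correct and follows essentially the same route as the paper's proof: unfold $\sem{\prog}(\stt)(\overline{\post})$ as a sum of per-trace failure masses, enlarge the index set from $\lang(\prog)$ to $\lang(\fsms)$ using trace inclusion and nonnegativity, then apply the failure probability upper bound. The only difference is that you explicitly justify the interchange of summations via Tonelli's theorem, a step the paper leaves implicit.
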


This proof rule is concise but difficult to apply in practice, even given the
set of automata $\fsms$---while the trace inclusion property can be checked via
regular language inclusion, the failure probability upper bound is more
complicated. To make this second condition easier to check, we enrich the
automata with additional information on each state; local properties of these
labeled automata will then imply the failure probability upper bound.

\paragraph{Enriching Automata with Labels.}
We work with automata where every node is labeled with a predicate on states
(equivalently, a set of states), and a function representing the failure
probability---we call such automata \emph{failure automata}. The rough intuition
is that at each node $q$, the predicate label represents a program invariant
that holds on all traces reaching $q$ from the beginning of the program, except
with probability given by the failure probability label.

\begin{definition}[Failure automata]\label{def:abstract}
A \emph{failure automaton} $\fsm = \tuple{Q,\delta,\labelp,\labele}$ is an
automaton $\tuple{Q,\delta}$ with two labeling functions,
$\labelp$ and $\labele$, where
\begin{itemize}
  \item $\labelp$ maps every node $q \in Q$ to a set of states, and
  \item $\labele$ maps every node $q \in Q$ to a [0,1]-valued function over $\varsd$.
\end{itemize}
We say that $\fsm$ is \emph{well-labeled} iff the following conditions hold:
\begin{enumerate}
\item $\labele(\qentry) = 0$ and $\labele(\qexit)$ is a $[0, 1]$-valued
  function over the input variables $\varsi \subseteq \varsd$, and
\item for every transition $q_i \lto{\stmt} q_j$, the statement
  $
    \hoare{\wpprob(\labele(q_j), \stmt) - \labele(q_i)}{\labelp(q_i)}{\stmt}{\labelp(q_j)},
  $ is valid
  where $\wpprob$ is a weakest-precondition operation over failure-probabilities:
  $\wpprob(\expr,\stmt) \triangleq \expr$ for assume and sampling statements, and
  $\wpprob(\expr_1, \var \gets \expr_2) \triangleq \expr_1[\var \mapsto \expr_2]$.
\end{enumerate}
\end{definition}

The two conditions ensure that if we take any trace $\trace \in \lang(\fsm)$,
then
$\hoare{\labele(\qexit)}{\labelp(\qentry)}{\trace}{\labelp(\qexit)}$ is valid.
Point (2) ensures that failure probability accumulates
additively as we move along the trace, starting from being 0 at $\qentry$, as
stipulated by point (1). Crucially, both points are \emph{local} conditions:
they can be easily checked given a failure automaton. However, coming
up with well-labeled automata for a given program is not at all trivial---we
return to this question in the next two sections.

\begin{example}
  Recall our example from~\cref{sec:ex1}, illustrated in~\cref{fig:ex1}.
  The lower part of~\cref{fig:ex1} shows a failure automaton named $\trace_1'$
  with $\labelp$ and $\labele$ shown above and below the nodes, respectively.
  Notice that the initial node $\texttt{in}$ is labeled with $\labelp(\texttt{in}) \triangleq \true$
  and $\labele(\texttt{in}) \triangleq 0$.
  Focusing on the edge from node $\texttt{th}$,
  the labeling at $\texttt{ac}$ satisfies condition (2) for well-labeledness in~\cref{def:abstract}.
  The condition says that the following statement must be valid:
  $
  \hoare{p}{x}{y \sim \bern(p)}{\neg y}.
  $
  The failure probability $p$ is the simplification of the expression $\wpprob(0.5 + p, y \sim \bern(p)) - 0.5$.
  The statement is valid since $y$ is true with probability $p$
  after executing $y \sim \bern(p)$.
%
\end{example}

The following theorem establishes soundness of annotations on well-labeled
automata. Specifically, the failure probability label on $\qexit$---namely,
$\labele(\qexit)$---is an upper bound on the probability that executions through
$\fsm$ do not end up in a state in $\labelp(\qexit)$.

\begin{theorem}[Well-labeled automata soundness] \label{thm:labeled-sound}
Let $\fsm$ be a well-labeled failure automaton.
Then, for every $\stt \in \labelp(\qentry)$ and $\mu = \sem{\trace}(\stt)$, we have
$
\sum_{\trace \in \lang(\fsm)} \mu(\overline{\labelp(\qexit)}) \leq \stt(\labele(\qexit)) .
$
\end{theorem}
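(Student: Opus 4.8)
The plan is to prove the aggregate bound $\sum_{\trace\in\lang(\fsm)}\sem{\trace}(\stt)(\overline{\labelp(\qexit)})\leq\stt(\labele(\qexit))$ by a union bound over the \emph{edges} of $\fsm$, organized around a ``first-failure'' decomposition of runs. A naive sum of the per-trace guarantees $\sem{\trace}(\stt)(\overline{\labelp(\qexit)})\leq\stt(\labele(\qexit))$ (which follow from condition (2) on each path) would overcount the shared prefixes of distinct accepting traces, so I first refine the trace semantics into a sub-distribution $R$ over \emph{runs}: sequences of stores $\stt_0,\dots,\stt_m$ with $\stt_0=\stt$ that follow a single accepting path $\qentry\lto{\stmt_1}q_1\cdots\lto{\stmt_m}\qexit$, with $R$ assigning each run the product of its transition probabilities. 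By construction $\sum_{\text{runs ending in }u}R=\mu(u)$ for every store $u$, where $\mu=\sum_{\trace}\sem{\trace}(\stt)$, so $\mu(\overline{\labelp(\qexit)})$ is exactly the total $R$-mass of runs whose final store escapes $\labelp(\qexit)$. For each edge $e=(q_i\lto{\stmt}q_j)$, let $\mathsf{Bad}_e$ be the event that the \emph{first} store at which a run leaves its node's label occurs when traversing $e$ (pre-store in $\labelp(q_i)$, post-store outside $\labelp(q_j)$). Since every run starts good ($\stt_0=\stt\in\labelp(\qentry)$), any failing run has a first bad transition, giving $\mu(\overline{\labelp(\qexit)})\leq\sum_e R(\mathsf{Bad}_e)$.

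Next I would bound each $R(\mathsf{Bad}_e)$ locally. Grouping the crossings of $e$ from good pre-stores $\stt'$ gives $R(\mathsf{Bad}_e)\leq\sum_{\stt'\in\labelp(q_i)}\mu_{q_i}(\stt')\cdot\sem{\stmt}(\stt')(\overline{\labelp(q_j)})$, where $\mu_{q_i}$ is the aggregate arrival sub-distribution at $q_i$; well-labeledness condition (2) then bounds $\sem{\stmt}(\stt')(\overline{\labelp(q_j)})$ by the nonnegative increment $\iota_e(\stt')\triangleq\stt'(\wpprob(\labele(q_j),\stmt))-\stt'(\labele(q_i))$. The crux is a telescoping identity: because $\labele$ ranges over the deterministic variables $\varsd$, all post-stores of $\stmt$ reachable from $\stt'$ agree on $\varsd$, so by the definition of $\wpprob$ the value $\stt'(\wpprob(\labele(q_j),\stmt))$ equals $\stt_j(\labele(q_j))$ at the actual post-store. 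Hence along any accepting run the increments collapse: $\sum_{k}\iota_{e_k}=\stt_m(\labele(\qexit))-\stt_0(\labele(\qentry))=\stt(\labele(\qexit))$, using $\labele(\qentry)=0$ and condition (1), which guarantees $\labele(\qexit)$ is a function of the input variables $\varsi$ and is therefore unchanged along the run. Reorganizing the edge sum as a sum over runs, $\sum_e R(\mathsf{Bad}_e)\leq\sum_{r}R(r)\sum_{k}\iota_{e_k}=\stt(\labele(\qexit))\cdot\sum_{r}R(r)\leq\stt(\labele(\qexit))$, since the total run mass is at most $1$. This reorganization is precisely what prevents double counting: shared prefixes contribute their increments once per run, but weighted by the disjoint run probabilities rather than by a single shared prefix mass.

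I expect the main obstacle to be making the run-level argument fully rigorous in the presence of loops. With cycles, $\fsm$ admits runs of unbounded length, so $R$ is a sub-distribution over a countably infinite set of runs and a single edge may be traversed many times in one run; establishing $\sum_{\text{runs ending in }u}R=\mu(u)$, defining the aggregate arrival distributions $\mu_{q_i}$ (as a least fixed point, or via finite unrollings and a limit), and justifying the interchange of the possibly infinite edge and run sums all require care. The telescoping and union-bound steps themselves are insensitive to loops, since they are statements about individual finite runs and remain valid with multiplicities, so the technical work concentrates on setting up the run probability space and the convergence of sums. A secondary point to verify carefully is the determinism of $\varsd$ underlying the telescoping identity: it is exactly what makes the per-edge bound from condition (2), evaluated at the pre-store, agree with the label value at the post-store where the run actually continues.
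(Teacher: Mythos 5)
Your overall route---a first-failure union bound over edges combined with telescoping of the per-edge increments $\iota_e(\stt') = \stt'(\wpprob(\labele(q_j),\stmt)) - \stt'(\labele(q_i))$---is genuinely different from the paper's proof, which proceeds by induction on the number of branching nodes for acyclic automata (discharging the base case via soundness of sequential composition in the union bound logic) and then handles cycles by finite unrollings and a monotone limit. Your telescoping identity is correct, and it isolates the same key fact the paper exploits ($\labele$ depends only on $\varsd$, so $\wpprob$ evaluated at the pre-store equals the label's value at the post-store). But there is a genuine gap at the reorganization step, and it is not the loop-convergence bookkeeping you flag at the end: the inequality
\[
\sum_e \sum_{\stt'\in\labelp(q_i)} \mu_{q_i}(\stt')\,\iota_e(\stt') \;\leq\; \sum_r R(r) \sum_k \iota_{e_k}
\]
is false in general. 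The left side charges $\iota_e$ to \emph{all} good mass arriving at $q_i$, while the right side charges increments only to mass that completes an accepting run. These coincide only when every unit of arrival mass eventually reaches $\qexit$; they differ whenever mass is killed by an assume edge that has no complementary sibling (the normal situation in the single-trace automata produced by $\labeltrace$) or is trapped forever in a cycle. Such mass accrues increments on the left but has weight zero on the right, so the inequality you need points the wrong way.

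The paper's own annotation $\trace_1'$ in \cref{fig:ex1} is already a counterexample: for $x\sim\bern(0.5);\assume(x);y\sim\bern(p)$ with labels $(\true,0),(x,0.5),(x,0.5),(\neg y,0.5+p)$, half of the mass dies at $\assume(x)$ after accruing the increment $0.5$ on the first edge, so your edge sum is $0.5+0.5p$ while your run sum is $0.5\cdot(0.5+p)=0.25+0.5p$; the middle link of your chain fails even though the endpoints (and the theorem) remain true. Repairing the argument requires an ingredient the proposal never introduces: a monotonicity lemma stating that the accrued potential of \emph{non-completing} good mass is also dominated by the budget, i.e., $u(\labele(q)) \leq \stt(\labele(\qexit))$ for every node $q$ and every good store $u$ reachable there. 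Proving that lemma is exactly where the structural assumptions (every node can reach $\qexit$; condition (2) forces $\iota_e \geq 0$ on stores in nonempty labels; determinism of $\varsd$) must be deployed, and it is essentially as delicate as the theorem itself. The alternative reading of your $\mu_{q_i}$ as the arrival mass of \emph{completing} runs rescues this step but breaks the previous one instead, since condition (2) bounds the badly-crossing fraction of all crossing mass, not of its completing portion. The paper's structural induction avoids the issue entirely: soundness of the union-bound-logic sequencing rule silently absorbs dead mass (a blocked assume validates any postcondition with failure probability $0$), and divergent mass is handled by the unrolling limit.
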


\paragraph{Proofs from Well-labeled Automata.}
Now that we have established soundness of well-labeled automata, we refine our
original proof rule (\cref{thm:gproofrule}) using failure automata. The
following theorem demonstrates how to establish correctness using a set of
failure automata. 

\begin{theorem}[Proof rule with failure automata]\label{thm:proofrule}
  The statement
  $
  \hoare{\errp}{\pre}{\prog}{\post}
  $
  is valid
  if there exist well-labeled automata $\fsms = \{\fsm_1, \ldots, \fsm_n\}$
  such that the following conditions hold:
  \begin{align}
   \lang(\prog) \subseteq \lang(\fsms) \tag{Trace inclusion}\\
    \forall i \in [1,n] \ldotp \pre \subseteq \labelp_i(\qentry_i) \tag{Precondition inclusion}\\
   \forall i \in [1,n] \ldotp \labelp_i(\qexit_i) \subseteq \post \tag{Postcondition inclusion}\\
   \forall \stt \in \pre \ldotp \sum_{i=1}^n \stt(\labele_i(\qexit_i)) \leq \stt(\errp) \tag{Failure probability upper bound}
  \end{align}
\end{theorem}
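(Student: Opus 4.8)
The plan is to reduce this refined rule to the general proof rule of \cref{thm:gproofrule}, discharging that rule's failure-probability obligation using the soundness of well-labeled automata (\cref{thm:labeled-sound}). Since trace inclusion already appears among the hypotheses of both rules, the only thing left to supply is the general rule's bound
\[
\forall \stt \in \pre \ldotp \sum_{\trace \in \lang(\fsms)} \sem{\trace}(\stt)(\overline{\post}) \leq \stt(\errp).
\]
I would fix an arbitrary $\stt \in \pre$ and bound the left-hand side step by step.

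First I would pass from the union to the individual automata. Because $\lang(\fsms) = \bigcup_{i=1}^n \lang(\fsm_i)$ and every summand $\sem{\trace}(\stt)(\overline{\post})$ is non-negative, each trace in the union is counted at least once on the right below, so
\[
\sum_{\trace \in \lang(\fsms)} \sem{\trace}(\stt)(\overline{\post}) \leq \sum_{i=1}^n \sum_{\trace \in \lang(\fsm_i)} \sem{\trace}(\stt)(\overline{\post}).
\]
Next, for each $i$ I would replace $\overline{\post}$ by $\overline{\labelp_i(\qexit_i)}$: postcondition inclusion gives $\labelp_i(\qexit_i) \subseteq \post$, hence $\overline{\post} \subseteq \overline{\labelp_i(\qexit_i)}$, and since a sub-distribution is monotone in its argument set, $\sem{\trace}(\stt)(\overline{\post}) \leq \sem{\trace}(\stt)(\overline{\labelp_i(\qexit_i)})$ for every trace.

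Then I would invoke \cref{thm:labeled-sound} on each $\fsm_i$. Precondition inclusion ensures $\stt \in \pre \subseteq \labelp_i(\qentry_i)$, so the hypothesis of that theorem is met and it yields $\sum_{\trace \in \lang(\fsm_i)} \sem{\trace}(\stt)(\overline{\labelp_i(\qexit_i)}) \leq \stt(\labele_i(\qexit_i))$. Chaining these three inequalities and finally applying the failure-probability upper bound hypothesis $\sum_{i=1}^n \stt(\labele_i(\qexit_i)) \leq \stt(\errp)$ establishes the displayed goal; together with trace inclusion, \cref{thm:gproofrule} then delivers validity of $\hoare{\errp}{\pre}{\prog}{\post}$.

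I do not expect a genuinely hard step here, but the place to be careful is the union-to-sum passage: when the languages $\lang(\fsm_i)$ overlap, treating the union as a disjoint sum would double-count, so I must argue only the inequality I actually need and rely on non-negativity of each term. A secondary subtlety is reading the terse statement of \cref{thm:labeled-sound} with its intended meaning, namely that $\mu$ abbreviates $\sem{\trace}(\stt)$ under the sum; once that is interpreted correctly, the monotonicity and soundness steps compose cleanly.
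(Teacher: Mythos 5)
Your proposal is correct and follows essentially the same argument as the paper: fix $\stt \in \pre$, apply \cref{thm:labeled-sound} to each $\fsm_i$ via precondition inclusion, weaken $\overline{\labelp_i(\qexit_i)}$ to $\overline{\post}$ via postcondition inclusion, and use non-negativity plus trace inclusion to pass from the (possibly overlapping) union of languages to the sum over automata before invoking the failure-probability bound. The only cosmetic difference is that you discharge the final step by citing \cref{thm:gproofrule}, whereas the paper inlines that rule's one-line semantic unfolding $\sem{\prog}(\stt)(\overline{\post}) = \sum_{\trace \in \lang(\prog)} \sem{\trace}(\stt)(\overline{\post})$ directly.
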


The trace, precondition, and postcondition inclusion conditions are the same as
in trace abstraction for non-probabilistic programs.
The failure probability upper bound condition
ensures that the overapproximation of failure probability resulting from
abstraction does not exceed $\errp$. Notice that precondition and postcondition
inclusion checks can be performed using an \abr{SMT} solver, assuming labels are
encoded in a supported first-order theory. Similarly, the failure probability
upper bound condition involves summing up the labels on the accepting nodes of
all failure automata, allowing us to perform the check with an \abr{SMT} solver.

\begin{example}
  Recall the example program $\mathsf{ex2}$ from~\cref{sec:ex2}, illustrated in~\cref{fig:ex2}.
  The two automata, denoted $\trace_1$ and $\trace_2$ in~\cref{fig:ex2},
  are well-labeled.
  The automata cover all program traces.
  The initial nodes, denoted $\texttt{in}$, have the labels $\labelp$
  as $\true$, therefore satisfying the precondition inclusion condition.
  The accepting nodes, denoted $\texttt{ac}$, both imply the postcondition, $\post$.
  Finally, the sum of the failure probabilities on accepting nodes is $0 + p
  \cdot n \leq p \cdot n$,  satisfying the failure probability condition.
\end{example}

%

\section{Constructing Trace Abstractions}\label{sec:algorithm}

\begin{figure}[t!]
  \centering
\begin{prooftree}
\justifies
\fsms \longrightarrow \emptyset
\using \cinit
\end{prooftree}
\hspace{.4in}
\begin{prooftree}
  \trace \in \lang(\prog) \cap \overline{\lang(\fsms)}
  \quad
  \quad
  \fsm_\trace = \labeltrace(\trace, \pre,\post,\errp)
\justifies
\fsms \longrightarrow \fsms \cup \{\fsm_\trace\}
\using \ctrace
\end{prooftree}

\vspace{.35in}

\begin{prooftree}
  \begin{array}{c}
  \fsm = \tuple{Q,\delta,\labelp,\labele} \in \fsms
  \quad
  \quad
  q_i,q_j \in Q
  \quad
  \quad
  \stmt \in \stmts
  \vspace{.05in}\\
  \fsm' = \tuple{Q,\delta\cup\{q_i\lto{\stmt}q_j\},\labelp,\labele}
  \quad\quad
  \hoare{\wpprob(\labele(q_j),\stmt) - \labele(q_i)}{\labelp(q_i)}{\stmt}{\labelp(q_j)}
\end{array}
\justifies
\fsms \longrightarrow (\fsms \setminus \{\fsm\}) \cup \{\fsm'\}
\using \cgen
\end{prooftree}

\vspace{.35in}

\begin{prooftree}
  \begin{array}{c}
    \fsm_1, \fsm_2 \in \fsms
    \quad\quad
    \fsm = \fsm_1 \mergeop \fsm_2
  \end{array}
\justifies
\fsms \longrightarrow (\fsms \setminus \{\fsm_1,\fsm_2\}) \cup \fsm
\using \cmerge
\end{prooftree}

\vspace{.35in}

\begin{prooftree}
  \begin{array}{c}
    \lang(\prog) \subseteq \lang(\fsms)
    \quad\quad
    \forall \stt \in \pre \ldotp \sum_{i=1}^{|\fsms|} \stt(\labele_i(\qexit_i)) \leq \stt(\errp)
  \end{array}
\justifies
  \hoare{\errp}{\pre}{\prog}{\post}
\using \ccorr
\end{prooftree}

\caption{Overall abstract algorithm}\label{alg:overall}
\end{figure}

\Cref{thm:proofrule} reduces checking accuracy properties to finding a set of
well-labeled automata. Our algorithm for automating this proof rule is
technically complex, and spans the following two sections. Here, we will present
the algorithm and prove soundness, assuming a procedure for well-labeling
single traces; we will detail this last piece in \cref{sec:interpolation}.
Then, we compare our algorithm with two existing techniques: the union bound
logic \abr{aHL}, and standard trace abstraction.

\subsection{Algorithm Overview}
Our algorithm maintains a set $\{\fsm_i\}_i$ of well-labeled failure automata
modeling some of the program traces, and repeatedly finds traces $\trace \in
\lang(\prog)$ that are not in $\{\fsm_i\}_i$. If a trace can be
well-labeled, it is converted into a well-labeled automaton $\fsm_i$ proving
that $\hoare{\errp}{\pre}{\trace}{\post}$ and added to the current automaton
set. Throughout, the algorithm may simplify or transform the automaton set by
merging automata together and generalizing automata by adding new edges. The
process terminates successfully if the set of failure automata $\{\fsm_i\}_i$
satisfies the conditions in \cref{thm:proofrule}.

The input to the algorithm is a program $\prog$, a pre- and post-condition
$\pre$ and $\post$, and a target failure probability $\beta$, a function over
the input variables of the program. The entire algorithm is presented in
\cref{alg:overall} as a set of non-deterministic guarded rules.
The algorithm preserves the invariant that the set
of automata $\fsms$ are well-labeled. We briefly consider each rule in turn.

\paragraph{Initialization.}
The rule $\cinit$ is the only rule with no premises and serves as the
initialization rule. Not surprisingly, the set of failure automata $\fsms$ is
initially empty.

\paragraph{Trace Sampling.}
The rule $\ctrace$ picks a trace $\trace$ that is in the program $\prog$ but
not covered by the set of automata $\fsms$.  It then uses the function
$\labeltrace$ to construct a well-labeled automaton $\fsm_\trace$ implying that
$\hoare{\errp}{\pre}{\trace}{\post}$.
We will detail the $\labeltrace$
operation in \cref{sec:interpolation};
for now, we just note that $\labeltrace$ may
fail, in which case the rule $\ctrace$ does not fire and the algorithm tries a
different trace.

\paragraph{Generalizing Automata.}
The rule $\cgen$ expands the language $\lang(\fsms)$ by adding new
edges to an automaton $\fsm \in \fsms$. When the new edges form loops, this rule
can be seen as generalizing from automata modeling finite unrollings of looping
statements to automata overapproximating loops. The side-conditions ensure that
this transformation preserves well-labeledness.

\paragraph{Merging Automata.}
The rule $\cmerge$ combines automata whose traces are \emph{mutually exclusive},
allowing us to take the \emph{maximum} failure probability instead of the
\emph{sum}. Intuitively, automata that begin with the
same prefix of statements before making mutually exclusive assumptions---say,
$\assume(b)$ and $\assume(\neg b)$---can have their prefixes merged together if
they have equivalent labels. This operation can  be seen as constructing an
automaton combining two branches of a conditional.

Concretely, the operator $\mergeop$ takes two automata, $\fsm_1$ and $\fsm_2$,
and returns a new automaton that accepts the union of the traces. We formalize
 $\mergeop$  and its preconditions below:

\begin{definition}\label{def:merge}
  We assume the two automata $\fsm_1,\fsm_2$ are of the form $\fsm_i =
  \tuple{Q_i,\delta_i,\labelp_i,\labele_i}$ with the initial and final nodes
  $\qentry_i,\qexit_i$. Suppose there is a prefix of statements
  $\stmt_1,\ldots,\stmt_n$ such that
  \begin{enumerate}
    \item every path from $\qentry_1$ to $\qexit_1$ is of the form
      $
        \qentry_1 \lto{\stmt_1} q_{1,1} \lto{\stmt_2} q_{1,2} \ldots q_{1,n}\lto{\assume(b)}q_{1,n+1} \cdots \qexit_1 .
      $
    \item every path from $\qentry_2$ to $\qexit_2$ is of the form
      $
        \qentry_2 \lto{\stmt_1} q_{2,1} \lto{\stmt_2} q_{2,2} \ldots q_{2,n} \lto{\assume(\neg b)}q_{2,n+1} \cdots \qexit_2 .
      $
    \item each prefix node $q \in \{\qentry_1,q_{1,1},\ldots,q_{1,n}\}$ has
      equivalent labels ($\labelp$ and $\labele$) to its corresponding node in
      $\{\qentry_2,q_{2,1},\ldots,q_{2,n}\}$.
  \end{enumerate}
  Then, $\fsm_1 \mergeop \fsm_2$ yields a failure automaton $\fsm = \tuple{Q,\delta,\labelp,\labele}$ with
  \begin{itemize}
    \item $Q = Q_1 \cup (Q_2 \setminus \{\qentry_2,q_{2,1},\ldots,q_{2,n},\qexit_2\})$;
    \item $\delta = \delta_1 \cup \delta_2 \cup \{q_i \lto{\stmt} \qexit \mid q_i \lto{\stmt} \qexit_2 \in \delta_2\}$, with all edges to/from undefined nodes removed;
    \item $\qentry = \qentry_1$ and $\qexit = \qexit_1$;
    \item $\labelp$ agrees with $\labelp_1$ and $\labelp_2$, except that $\labelp(\qexit) = \labelp(\qexit_1) \cup \labelp(\qexit_2)$; and
    \item $\labele$ agrees with $\labele_1$ and $\labele_2$, except that
      $\labele(\qexit) = \max(\labele(\qexit_1),\labele(\qexit_2))$.
  \end{itemize}
  More advanced extensions of this operation are also possible, for instance,
  also merging common post-fixes along with common prefixes, but we stick with
  this version for concreteness.
\end{definition}

If two automata are well-labeled and the $\cmerge$ rule applies, then the
resulting merged automaton is also well-labeled.  It is, however, important to
note  condition (3) in~\cref{def:merge}, which states that the shared prefix
between the two automata must have the same labels on both automata.  If that
condition is violated, the result may not be well-labeled, as illustrated in the
following example.

\begin{figure}
  \footnotesize

  \begin{subfigure}[t]{0.5\textwidth}
    \centering
  \includegraphics[scale=1]{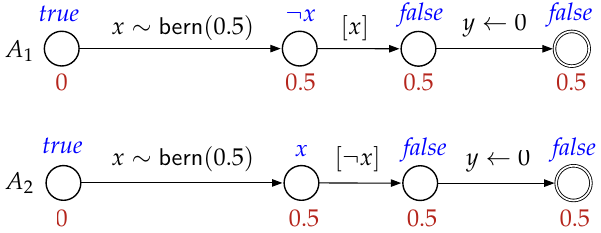}

  (a) Example demonstrating $\mergeop$'s condition (3)
\end{subfigure}
\begin{subfigure}[t]{0.49\textwidth}
  \centering
  \includegraphics[scale=1]{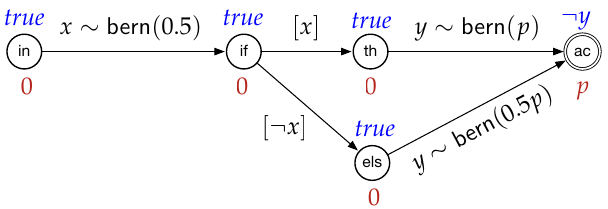}

  (b) Example demonstrating $\mergeop$ on $\mathsf{ex1}$
\end{subfigure}
\caption{Merge examples}\label{fig:merge}
\end{figure}

\begin{example}
  Consider the two well-labeled single-trace automata $\fsm_1$ and $\fsm_2$ in~\cref{fig:merge}(a), which model a conditional statement and share the prefix $x \sim \bern(0.5)$.
  The annotations prove that both traces satisfy $\hoare{0.5}{\true}{A_i}{y>0}$.
  The operation $\mergeop$ does not apply here, since the automata disagree on the label of the second node.
  However, suppose that we apply $\mergeop$ nonetheless.
  This results in a final node with failure probability $\max(0.5,0.5) = 0.5$.
  But this is not sound, since the probability of failing to achieve $y > 0$
  is $1$ when both traces are considered together, since
  both traces set $y$ to $0$.
\end{example}

We also give an example of a sound application of merge.
\begin{example}
  Consider the two well-labeled automata $\trace_1$ and $\trace_2$ from~\cref{fig:ex1} in~\cref{sec:ex1}.
  They satisfy the conditions for $\mergeop$.
  \cref{fig:merge}(b) shows the result of applying $\mergeop$ to these two automata.
  Notice that the accepting node, denoted $\texttt{ac}$,
  has a label $\labele(\texttt{ac}) = \max(p,0.5p)$, which is equal to $p$.
\end{example}
\begin{lemma} \label{lem:sound-merge}
  If $\fsm_1,\fsm_2$ are well-labeled and satisfy the
  $\mergeop$ conditions, then $\fsm = \fsm_1 \mergeop
  \fsm_2$ is well-labeled.
\end{lemma}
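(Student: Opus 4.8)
The plan is to verify directly that $\fsm = \fsm_1 \mergeop \fsm_2$ satisfies the two conditions of well-labeledness from \cref{def:abstract}. The workhorse will be a \emph{consequence principle} for the failure Hoare triples: if $\hoare{\errp}{\pre}{\stmt}{\post}$ is valid, $\post \subseteq \post'$, and $\stt(\errp) \leq \stt(\errp')$ for all states $\stt$, then $\hoare{\errp'}{\pre}{\stmt}{\post'}$ is valid. This is immediate from the definition of validity: for $\mu = \sem{\stmt}(\stt)$ we have $\mu(\overline{\post'}) \leq \mu(\overline{\post}) \leq \stt(\errp) \leq \stt(\errp')$. I will also use that $\wpprob$ is monotone in its failure-probability argument---it is the identity for assume and sampling statements, and substitution $\expr_1[\var \mapsto \expr_2]$ for assignments, both of which preserve the pointwise order $\leq$.

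Condition (1) is quick. Since $\qentry = \qentry_1$ and $\fsm_1$ is well-labeled, $\labele(\qentry) = \labele_1(\qentry_1) = 0$. The accepting label is $\labele(\qexit) = \max(\labele_1(\qexit_1), \labele_2(\qexit_2))$; as both arguments are $[0,1]$-valued functions over $\varsi$, so is their pointwise maximum. For condition (2), I partition the transitions $q_i \lto{\stmt} q_j$ of $\fsm$ according to \cref{def:merge}. For \emph{interior transitions}, where $q_j \neq \qexit$, both endpoints retain their original labels (from $\fsm_1$, or from the surviving non-prefix part of $\fsm_2$), so the required triple is literally the one already valid in the source automaton. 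The only genuinely new interior transition is the branch out of the shared prefix into $\fsm_2$'s suffix, namely $q_{1,n} \lto{\assume(\neg b)} q_{2,n+1}$, inherited from $\fsm_2$'s edge $q_{2,n} \lto{\assume(\neg b)} q_{2,n+1}$ under the identification of $q_{2,n}$ with $q_{1,n}$. Here condition (3) of \cref{def:merge} is essential: it guarantees $\labelp(q_{1,n}) = \labelp_2(q_{2,n})$ and $\labele(q_{1,n}) = \labele_2(q_{2,n})$, so the triple required in $\fsm$ coincides exactly with the one valid in $\fsm_2$.

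It remains to handle \emph{transitions into} $\qexit$. Suppose $q_i \lto{\stmt} \qexit$ arises from $q_i \lto{\stmt} \qexit_1$ in $\fsm_1$ (the $\fsm_2$ case is symmetric). Since accepting nodes have no outgoing edges, $q_i \neq \qexit$, so $\labele(q_i) = \labele_1(q_i)$ is unchanged. At the target, $\labelp(\qexit) = \labelp_1(\qexit_1) \cup \labelp_2(\qexit_2) \supseteq \labelp_1(\qexit_1)$ and $\labele(\qexit) = \max(\labele_1(\qexit_1), \labele_2(\qexit_2)) \geq \labele_1(\qexit_1)$. By monotonicity of $\wpprob$, the failure budget $\wpprob(\labele(\qexit), \stmt) - \labele(q_i)$ is at least the original budget $\wpprob(\labele_1(\qexit_1), \stmt) - \labele_1(q_i)$, while the postcondition is weakened. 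The consequence principle therefore lifts $\fsm_1$'s valid triple to the one required in $\fsm$, completing condition (2).

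The main obstacle is not any single calculation but the bookkeeping at the merge seam: one must check that condition (3) supplies exactly the label agreement needed to validate the new cross-branch transition, and that the simultaneous enlargement of both the postcondition set and the failure budget at $\qexit$ never breaks the local Hoare check (which is why the argument is packaged as a monotonicity-plus-weakening step rather than a direct recomputation). The example following \cref{def:merge} shows that dropping condition (3) genuinely falsifies the lemma, so the proof must invoke it precisely in the cross-branch case above.
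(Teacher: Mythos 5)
Your proof is correct, but there is in fact nothing in the paper to compare it against: the paper states \cref{lem:sound-merge} without proof (it is absent from the appendix of omitted proofs), and the proof of \cref{thm:sound-alg} simply cites the lemma as holding ``by definition.'' Your write-up supplies exactly the details being elided, and it is the natural direct verification: non-seam edges keep their labels and hence their already-valid local Hoare triples; the one genuinely new edge $q_{1,n} \lto{\assume(\neg b)} q_{2,n+1}$ is validated by condition (3) of \cref{def:merge}, which forces the labels at $q_{1,n}$ to coincide with those at $q_{2,n}$ in $\fsm_2$; and edges redirected into the merged accepting node are handled by your consequence principle together with monotonicity of $\wpprob$, since $\labelp(\qexit)$ only grows under union and $\labele(\qexit)$ only increases under pointwise maximum. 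Two small remarks. First, your reading of \cref{def:merge}---identifying the prefix nodes of $\fsm_2$ with those of $\fsm_1$ so that the $\assume(\neg b)$ edge survives into the merged automaton---is the intended one; a fully literal reading of ``edges to/from undefined nodes removed'' would delete that edge and disconnect $\fsm_2$'s suffix, contradicting the merged automaton shown in \cref{fig:merge}(b). Second, in the degenerate case where $\fsm_2$'s paths end immediately after the branch, i.e.\ $q_{2,n+1} = \qexit_2$, the redirected edge $q_{1,n} \lto{\assume(\neg b)} \qexit$ is simultaneously a seam edge and an exit edge, so the ``symmetric'' $\fsm_2$ case of your exit analysis also needs condition (3) at the source node; all the ingredients for this are present in your proof, but your case split as written treats the seam edge only as interior. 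Neither point is a gap in substance---the argument goes through.
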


\paragraph{Termination.}
Finally, the rule $\ccorr$ gives the termination condition for the algorithm,
corresponding to the conditions from \cref{thm:proofrule}.  Notice that
precondition and postcondition inclusion hold by construction, since they were
ensured by the labeling function $\labeltrace$ when the
first trace in each automaton was added to the automaton set by rule $\ctrace$.

\subsection{Theoretical Properties}\label{sec:theory}

\paragraph{Soundness.}
As expected, the algorithm is sound.

\begin{theorem}[Soundness of algorithm] \label{thm:sound-alg}
  If $\ccorr$ applies, then $\hoare{\errp}{\pre}{\prog}{\post}$
  is valid.
\end{theorem}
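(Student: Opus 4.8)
The plan is to reduce soundness of the algorithm to the already-proved \cref{thm:proofrule}. The two premises of the $\ccorr$ rule are verbatim the trace-inclusion and failure-probability-upper-bound conditions of that theorem, so it suffices to show that the theorem's remaining hypotheses---well-labeledness of every automaton in $\fsms$, precondition inclusion, and postcondition inclusion---hold whenever $\ccorr$ can fire. I would establish these by carrying them as an invariant of the rewriting relation $\fsms \longrightarrow \fsms'$: namely, that for every $\fsm = \tuple{Q,\delta,\labelp,\labele} \in \fsms$ the automaton is well-labeled, $\pre \subseteq \labelp(\qentry)$, and $\labelp(\qexit) \subseteq \post$.

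First I would prove the invariant is preserved, by induction on the length of a derivation, examining each rule in turn. The base case $\cinit$ is vacuous since $\fsms = \emptyset$. For $\ctrace$, the freshly added $\fsm_\trace = \labeltrace(\trace,\pre,\post,\errp)$ satisfies all three clauses by the specification of $\labeltrace$ (established in \cref{sec:interpolation} and assumed here), and the existing automata are untouched. For $\cgen$, only $\delta$ changes while $\labelp$ and $\labele$ are preserved, so the entry/exit labels---and hence both inclusions---are unchanged; well-labeledness survives because the rule's side condition is exactly clause~(2) of \cref{def:abstract} for the new transition, and well-labeledness is a conjunction of such local edge conditions together with the boundary condition of clause~(1), which is unaffected. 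For $\cmerge$, \cref{lem:sound-merge} directly yields well-labeledness of $\fsm_1 \mergeop \fsm_2$; for the inclusions, \cref{def:merge} gives $\qentry = \qentry_1$ with $\labelp(\qentry) = \labelp_1(\qentry_1) \supseteq \pre$, and $\labelp(\qexit) = \labelp_1(\qexit_1) \cup \labelp_2(\qexit_2) \subseteq \post$ because both operands satisfy postcondition inclusion.

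With the invariant in hand, the conclusion is immediate: when $\ccorr$ fires, the invariant supplies well-labeledness and the two inclusion conditions for all $\fsm_i \in \fsms$, while the premises of $\ccorr$ supply trace inclusion and the failure-probability upper bound, so all hypotheses of \cref{thm:proofrule} are met and $\hoare{\errp}{\pre}{\prog}{\post}$ is valid. I expect the delicate step to be the $\cmerge$ case: preserving well-labeledness across a merge is exactly the content of \cref{lem:sound-merge} (and fails without its condition~(3), as the earlier counterexample shows), and one must additionally check that postcondition inclusion is not broken when the exit state-set becomes the union $\labelp_1(\qexit_1) \cup \labelp_2(\qexit_2)$. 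A secondary reliance is on the deferred correctness of $\labeltrace$, which is the only place the $\ctrace$ case can introduce a violation of the invariant.
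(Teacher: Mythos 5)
Your proposal is correct and follows essentially the same route as the paper's own proof: an induction on the derivation maintaining the invariant that all automata in $\fsms$ are well-labeled and satisfy pre-/post-condition inclusion (with $\cinit$ trivial, $\ctrace$ by construction via $\labeltrace$, $\cgen$ by its side condition, and $\cmerge$ by \cref{lem:sound-merge}), followed by an appeal to \cref{thm:proofrule} when $\ccorr$ fires. If anything, you are slightly more explicit than the paper in checking that the inclusion conditions survive $\cmerge$'s union of exit labels, which the paper leaves implicit.
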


\paragraph{(In)completeness.}
Our approach is generally incomplete.
The incompleteness primarily results from the application of the \emph{union bound}, which, in some programs, does not allow us to prove the tightest possible failure probabilities.
E.g., consider
$\hoare{0.75}{\true}{x \sim \bern(0.5); y \sim \bern(0.5)}{x\land y}.$
Any well-labeled automaton will upper bound the failure probability by 1,
since we have no means of assuming independent sampling in both statements.
This example can be handled by coalescing the two sampling statements into a single statement; however, the general issue arises in loops, too.

Nevertheless, we can compare the expressivity of our approach with two existing
techniques: the union bound logic~\citep{DBLP:conf/icalp/BartheGGHS16} and
trace abstraction~\citep{heizmann2009refinement,heizmann2013software}.

\paragraph{Union Bound Logic.}
The \emph{union bound logic}~\citep{DBLP:conf/icalp/BartheGGHS16} is an
extension of Hoare logic with failure probabilities, where Hoare triples are
analogous to our annotations $\hoare{\errp}{\pre}{\prog}{\post}$.
Our notion of well-labeled automata can capture
proofs in the union bound logic with the exception of a few points, and our
algorithm can recover a precise class of well-labeled automata.
 We formalize  this correspondence and prove a completeness result in \smref{smith2019:arxiv}{app:ahl}.

\paragraph{Trace Abstraction.}
Our technique generalizes trace abstraction for non-probabilistic,
single-procedure programs~\citep{heizmann2009refinement,heizmann2013software}.
When given a non-probabilistic program $\prog$ and Hoare triple
$\{\pre\}\ \prog\ \{\post\}$, we can construct trace-abstraction
proofs by simply setting the failure probability upper bound to $0$ in the specification.
Consequently, the failure probability labels of nodes of all automata in $\fsms$ must be $0$ for the proof to hold.
In this setting, the state labels ($\labelp$) are overapproximations of reachable states at a specific node, corresponding to the annotations of \emph{Floyd--Hoare automata} defined by~\citet{heizmann2013software}.

%

\section{Labeling Individual Traces: Proofs and Interpolation} \label{sec:interpolation}
In the algorithm we presented in \cref{alg:overall}, the key subroutine is the
$\labeltrace$ operation for rule \ctrace.
Recall that given a single trace $\trace$, pre- and post-conditions $\pre$ and
$\post$, and failure probability $\errp$, $\labeltrace$ attempts to construct a
well-labeled automaton $\fsm_\trace$ for $\trace$ proving
$\hoare{\errp}{\pre}{\trace}{\post}$.
We now show how to reduce this task to a constraint-solving
problem.
Our approach is inspired by \emph{interpolation-based verification}~\cite{mcmillan2006lazy}, where the
semantics of $\trace$ are encoded as a formula in first-order logic to check if
it can falsify a Hoare triple.
If the trace does not falsify the triple, Craig interpolants are computed along
the trace forming a Hoare-style annotation.
However, our setting is richer: we need to \rone handle traces with
probabilistic semantics and \rtwo construct two kinds of annotations---sets of
states and failure probability expressions.
We demonstrate how to reduce this problem to Craig interpolation over a
first-order theory, thus eliminating probabilistic reasoning.
We summarize our approach below:
\begin{enumerate}

\item \textbf{Axiomatizing distributions:}
We demonstrate how to encode $\hoare{\errp}{\pre}{\trace}{\post}$ as a logical formula.
The key challenge is in encoding semantics of sampling statements.
We address this challenge by observing that we can encode sampling statements by introducing appropriate \emph{logical axioms} about the  distributions.
This results in a \emph{constraint-based synthesis} problem of the form $\exists f \ldotp \forall X \ldotp \varphi$,
where discovering a function $f$ amounts to finding an appropriate axiom for
each sampling statement in order to establish correctness of the trace.

\item \textbf{Craig interpolation:}
Once we have solved the synthesis problem by finding a solution for $f$,
we are left with a valid logical formula of the form
$\forall X \ldotp\varphi$, which we can use to compute interpolants using standard techniques.
We demonstrate that
these interpolants can be converted to a well-labeling of $\fsm_\trace$.
\end{enumerate}

\begin{table}[t]
  \caption{Example families of distribution axioms ($\var$ does not occur in dist. expr.)}
  \label{tbl:axioms}
  \footnotesize
  \begin{tabular}{llll}
    \toprule
    Statement & Assumption $\axasm$ & Upperbound $\axub$ & Parameters \\
    \midrule
    Bernoulli: $\var \sim \bern(\var')$
    &
    $(f(\varsi) = 1 \land \var) \lor (f(\varsi) = 2 \land \neg \var)$
    &
    $\begin{array}{lr}
      \var' & \text{if }f(\varsi) = 1\\
      1 - \var' & \text{if }f(\varsi) = 2\\
      0 & \text{otherwise}
    \end{array}$
    &
    $f(\varsi) \in \{1,2,3\}$
    \\
    Uniform: $\var \sim \unif(\var')$
    &
    $\var \in f(\varsi)$
    &
    $|f(\varsi)|/|\var'|$
    &
    $f(\varsi) \subseteq \var'$
    \\
    Laplace: $\var \sim \lap(\var_1,\var_2)$
    &
    $|\var - \var_1| > \var_2 \log\left(\frac{1}{f(\varsi)}\right)$
    &
    $f(\varsi)$
    &
    $f(\varsi) \in (0,1]$
    \\
    Exponential: $\var \sim \olap(\var_1,\var_2)$
    &
    $\var < \var_1 \lor v - \var_1 > \var_2 \log\left(\frac{2}{f(\varsi)}\right)$
    &
    $f(\varsi)$
    &
    $f(\varsi) \in (0,1]$
    \\
    \bottomrule
  \end{tabular}
\end{table}

\subsection{Proofs via Distribution Axiomatization}
We now describe how we can check validity of the specification $\hoare{\errp}{\pre}{\stmt_1;\cdots;\stmt_n}{\post}$.
Our approach is analogous to logical encodings of program paths in verification of non-probabilistic programs; there, each statement $\stmt_i$ is encoded as a formula $\varphi_{\stmt_i}$ in some appropriate first-order theory, e.g., the theories of linear arithmetic or arrays.
Novel to our setting, we use \emph{distribution axioms} to approximate the
semantics of sampling statements in a first-order theory.

\paragraph{Logical Theory.}
We assume that deterministic program expressions correspond to a first-order theory, like linear arithmetic.
Given a formula $\varphi$, a model $\model$ of $\varphi$, denoted $\model \models \varphi$, is a valuation of its free variables $\fv(\varphi)$ satisfying the formula---e.g.,  $\model \models x + y > 0$ where $\model = \{x \mapsto 0, y \mapsto 1\}$.
We use $\model(\varphi)$ to denote $\varphi$ with all free variables replaced by their interpretation in $\model$.
A formula $\varphi$ is \emph{satisfiable} if there exists $\model$ such that $\model \models \varphi$; a formula is \emph{valid} if $\model \models \varphi$ for all models $\model$.


\paragraph{Distribution Axioms.}
Given a sampling statement $\var \sim \dexpr$,
an axiom is of the form
\[
\pr_{\var \sim \dexpr} [\axasm] \leq \axub
\]
where $\axub$ is a [0,1]-valued expression over $\vars$ and $\axasm$ is a formula over $\vars$.
The axiom must be true for all possible valuations of the program variables $\vars \setminus \{\var\}$.
We will use the axioms as follows:
When encoding the effect of a sampling statement $\var \sim \dexpr$,
we can assume that $\neg \axasm$ is true, with a failure probability
of at most $\axub$.
This allows us to sidestep probabilistic reasoning and encode program semantics
in our first-order theory.

Since axioms are approximations of primitive distributions, there are
many possible axioms for any given distribution. In some cases, axioms
may be parameterized, e.g., by the failure probability. We call parameterized
axioms \emph{axiom families}; \cref{tbl:axioms} collects example axiom families for the distributions in \cref{sec:model}.

\begin{definition}[Laplace axiom family]
  Recall that the (discrete) Laplace distribution expression $\lap(\var_1,\var_2)$
  is parameterized by two parameters,
  the \emph{mean} $\var_1 \in \mathds{Z}$ and the \emph{scale} $\var_2 \in \mathds{R}$.
  Sampling from   $\lap(\var_1,\var_2)$ returns an integer $\var$
  with probability proportional to $\exp(- |\var - \var_1| / \var_2)$.
  The Laplace distribution supports the following family of axioms, parameterized
  by a $(0,1]$-valued function $f$:
  \[
  \pr_{\var \sim \lap(\var_1,\var_2)}\left[|\var - \var_2| > \frac{1}{\var_1} \log\left(\frac{1}{f(\varsi)}\right)\right] \leq f(\varsi)
  \]
  Different instantiations of $f$ yield different axioms.
\end{definition}

The exponential distribution's axiom family is similar; note that
 $\olap(\var_1, \var_2)$ has zero probability of returning elements
smaller than $\var_1$, and this information is incorporated into the axiom.
The Bernoulli distribution's family is parameterized by a function $f(\varsi)$
which decides whether to assume $\var$ is $\true$, $\false$ or treat it non-deterministically.
The uniform distribution's axiom family is parameterized by a function $f(\varsi)$
returning a subset of the set defined by $\var'$.

\begin{example}
Recall trace $\trace_2$ (from program $\mathsf{ex2}$) in \cref{sec:ex2} and \cref{fig:ex2},
which contains the statement $a[i] \sim \lap(q[i], 1/\eps)$.
To prove correctness of $\trace_2$, we instantiated the Laplace
axiom family with $f(\varsi) = p$ where $p \in \varsi$,
yielding the axiom
$
\pr_{a[i] \sim \lap(q[i],1/\eps)}\left[|a[i] - q[i]| > \frac{1}{\eps} \log\left(\frac{1}{p}\right)\right] \leq p.
$
\end{example}



\begin{theorem} \label{thm:soundness-axioms}
  Each axiom in \cref{tbl:axioms} is sound: given any input state $\stt$ and
  well-typed distribution expression $\dexpr$, the probability that $\axasm$
  holds in $\stt(\dexpr)$ is at most $\stt(\axub)$.
\end{theorem}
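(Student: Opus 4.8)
The plan is to prove soundness one row of \cref{tbl:axioms} at a time, since each axiom is a self-contained probabilistic fact about a single primitive distribution with no program semantics involved. Fixing an arbitrary input state $\stt$, the claim $\pr_{\var \sim \dexpr}[\axasm] \le \stt(\axub)$ depends only on the distribution $\stt(\dexpr)$ and on the parameter $f$, so I would fix $\stt$, abbreviate the instantiated parameters (mean $m = \stt(\var_1)$, scale $b = \stt(\var_2)$, and $f = \stt(f(\varsi))$), and discharge each distribution independently, universally quantified over all admissible values of the parameter in its stated range.

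The two finite-support cases are immediate and in fact hold with equality. For $\bern(\var')$ I would case-split on $f(\varsi) \in \{1,2,3\}$: when $f(\varsi) = 1$ the event $\axasm$ is exactly $\var = \true$, of probability $\stt(\var')$, matching $\axub$; when $f(\varsi) = 2$ it is $\var = \false$, of probability $1 - \stt(\var')$; otherwise $\axasm$ is unsatisfiable with probability $0$. For $\unif(\var')$, the distribution is uniform on the finite set $\stt(\var')$ and $f(\varsi) \subseteq \var'$, so $\pr[\var \in f(\varsi)] = |f(\varsi)|/|\stt(\var')|$, which is precisely $\axub$. No slack is required in either case.

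The substantive cases are the discrete $\lap$ and $\olap$ distributions, where $\axasm$ is a tail event and the bound must come from summing a geometric series. For $\olap(\var_1,\var_2)$, writing $\alpha = \exp(-1/b)$, the normalizer over the support $\{c \in \mathds{Z} : c \ge m\}$ is $1/(1-\alpha)$, so the mass is $(1-\alpha)\alpha^{c-m}$ and the one-sided tail is $\pr[\var - m \ge j] = \alpha^{j}$ for integer $j$. The disjunct $\var < \var_1$ has probability $0$ since it lies outside the support, so with threshold $t = b\log(2/f)$ the failure event $\var - m > t$ forces $\var - m \ge \lfloor t\rfloor + 1$, giving $\pr[\axasm] = \alpha^{\lfloor t\rfloor + 1} \le \alpha^{t} = \exp(-t/b) = f/2 \le f$; the factor $2$ inside the logarithm is exactly what supplies the slack that absorbs the discretization. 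The Laplace case is the analogous two-sided computation: the normalizer is $(1+\alpha)/(1-\alpha)$ and the tail of $|\var - m| > t$ works out to $2\alpha^{\lfloor t\rfloor + 1}/(1+\alpha)$, to be compared against $f = \exp(-t/b)$ with $t = b\log(1/f)$.

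I expect the main obstacle to be precisely this discretization bookkeeping in the Laplace case, which is the tight one (its axiom carries no compensating constant, unlike the $2$ in the exponential). One must pass from the real threshold $b\log(1/f)$ to the least integer strictly above it and control the ratio between the discrete normalizing constant $2\alpha^{\lfloor t\rfloor + 1}/(1+\alpha)$ and the target $\exp(-t/b)$. The usable inequality $2\alpha/(1+\alpha) \le 1$ combines cleanly with the integer gap only when that gap is favorable, so the delicate step is to show the bound lands at $\le f$ rather than at a constant multiple of $f$; this is where the exact form of the axiom and the integrality of $\var$ and $m$ must be exploited, while all remaining algebra is routine.
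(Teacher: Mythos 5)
Your handling of the Bernoulli and Uniform rows agrees with the paper's proof (both dismiss them as immediate), and your direct geometric-series argument for the exponential row is correct and complete: since $\pr[\var - \var_1 \geq j] = \alpha^j$ for the one-sided geometric with $\alpha = \exp(-1/\stt(\var_2))$, the tail at threshold $t = \stt(\var_2)\log(2/f)$ is $\alpha^{\lfloor t\rfloor + 1} \leq \alpha^t = f/2 \leq f$. This is a genuinely different route from the paper, which instead derives the exponential axiom from the Laplace axiom via the pointwise domination $\stt(\olap(\var_1,\var_2))(z) \leq 2\,\stt(\lap(\var_1,\var_2))(z)$; your version has the advantage of not depending on the Laplace case at all.

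The gap is exactly where you flagged it, and it is fatal to your plan for the Laplace row. You reduce that case to the inequality
\begin{equation*}
  \frac{2\,\alpha^{\lfloor t\rfloor + 1}}{1+\alpha} \;\leq\; \alpha^{t},
  \qquad \alpha = e^{-1/b},\quad t = b\log(1/f),
\end{equation*}
and assert that the integrality of $\var$ and $\var_1$ will close it. It cannot: writing $\delta = \lfloor t\rfloor + 1 - t \in (0,1]$, the inequality reads $2\alpha^{\delta} \leq 1+\alpha$, which fails whenever $\delta < 1/2$ (i.e., the fractional part of $t$ exceeds $1/2$) and $\alpha$ is close to $1$. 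Concretely, take scale $b = 100$ and $f = e^{-3/400}$, so that $t = 0.75$: the event $|\var - \var_1| > t$ is $|\var - \var_1| \geq 1$, with probability $\frac{2\alpha}{1+\alpha} \approx 0.9950$ for $\alpha = e^{-1/100}$, whereas $f \approx 0.9925$. So the bound you are trying to establish is simply false for the discrete Laplace of \cref{tbl:dexpr} in the large-scale regime that the privacy case studies actually use; no bookkeeping completes the step, because the discrete tail exceeds $e^{-t/b}$ by a factor that can approach $2/(1+\alpha)$. The paper does not attempt this computation: it discharges the Laplace row by citing Lemma 5 of \citet{DBLP:conf/icalp/BartheGGHS16}, whose clean tail bound (with equality) is the continuous-Laplace statement. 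Your calculation, pushed one step further, shows that importing that bound verbatim to the discrete Laplace requires a compensating constant --- exactly the $\log(2/f)$ slack that the exponential row already carries and the Laplace row lacks --- so the honest conclusion of your approach is that the Laplace axiom as stated needs weakening, not that your algebra is unfinished.
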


\paragraph{Logical Encoding.}
We now present our encoding for checking $\hoare{\errp}{\pre}{\trace}{\post}$.
First, without loss of generality, we assume that $\trace$ is in \emph{static single assignment} (\abr{SSA}) form; this ensures that variables are not assigned to more than once, simplifying our encoding.
We also assume that $\pre$ and $\post$ are  logical formulas over program variables.
%
Our encoding explicitly maintains failure probability
using a special set of real-valued variables $\cost_i$, which encode
failure probability after statement $\stmt_i$ along $\trace$.
In order to encode failure probability on
unsatisfiable subtraces, we also use a special set of
Boolean variables $\halt_i$ to track if an execution
was blocked by an assume statement.

The function $\encodeD$, defined in \cref{fig:enc}, is used to encode assignment, assume, and sampling statements; it maintains the variables $\cost_i,\halt_i$ and axiomatizes sampling statements using the aforementioned distribution axioms.
\begin{figure}[t]
  \begin{align*}
    \encodeD(i, \var \gets \expr) &\triangleq \var = \expr \land \cost_i = \cost_{i-1} \land \halt_i = \halt_{i-1}\\
    \encodeD(i, \assume(\bexpr)) &\triangleq \cost_i = \cost_{i-1} \land \halt_i = (\halt_{i - 1} \lor \neg \bexpr) \\
    \encodeD(i, \var \sim \dexpr) &\triangleq \cost_i = \cost_{i-1} + \axub \land \halt_i = (\halt_{i-1} \lor \axasm)
    \qquad \text{given axiom family: } \pr_{\var \sim \dexpr} [\axasm] \leq \axub
  \end{align*}
  \caption{Logical encoding of statement semantics}
  \label{fig:enc}
\end{figure}
Consider, for instance, the encoding for assignment statements:
it constrains $\var$ to $\expr$, while maintaining the same failure probability and blocked status, $\cost_i$ and $\halt_i$.
Intuitively, the semantics of assignment statements is precisely captured by our
logical encoding, so assignment statements do not increase the probability of failure.
In contrast, the probability of failure increases when an axiom is applied
for a sampling statement.  Concretely, if
the axiom family $\pr_{\var \sim \dexpr} [\axasm] \leq \axub$ is applied, we assume that $\neg \axasm$
is true while accumulating probability of failure $\axub$, as encoded in the constraint $\cost_i = \cost_{i-1} + \axub$.

The following theorem formalizes the encoding
of $\hoare{\errp}{\pre}{\trace}{\post}$
and states its correctness.

\begin{theorem}[Soundness of logical encoding]\label{thm:enc}
  The specification
  $\hoare{\errp}{\pre}{\stmt_1,\ldots,\stmt_n}{\post}$
  is valid if the following formula is satisfiable:
  \begin{align}\label{eq:enc}
  \forall \vars, \cost_i, \halt_i \ldotp \left(\pre \land \cost_0 = 0 \land \halt_0 = \false \land \bigwedge_{i=1}^n \encodeD(i,\stmt_i)\right) \Longrightarrow (\cost_{n} \leq \beta \land (\neg \halt_n \Rightarrow \post))
  \end{align}
\end{theorem}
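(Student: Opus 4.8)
The plan is to fix an arbitrary input state $\stt \models \pre$ and bound $\mu(\overline{\post})$, where $\mu = \sem{\stmt_1;\cdots;\stmt_n}(\stt)$, by $\stt(\beta)$. Every variable in \cref{eq:enc} is universally quantified and $\pre,\post,\beta$ only pin down the input variables $\varsi$, so satisfiability of the closed sentence means the implication holds for \emph{every} valuation whose input variables agree with $\stt$. I would split the argument into a logical ``correctness on good runs'' part and a probabilistic ``union bound'' part, making the probabilistic bookkeeping rigorous by induction on the trace length (carrying a generalized hypothesis that permits a nonzero initial $\cost_0$ and initial $\halt_0$, since peeling off a leading sampling statement leaves the suffix with accumulated budget $\axub_1$ rather than $0$).

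\emph{Good runs.} I would first show that any concrete non-blocked execution from $\stt$ along which no sampling bad event ``$\axasm_i$ holds'' ever occurs must terminate in a state satisfying $\post$. Given such an execution, build a model $\model$ assigning each SSA variable its run-time value and defining $\cost_i,\halt_i$ by the recurrences of $\encodeD$ in \cref{fig:enc}, with $\cost_0=0,\halt_0=\false$. By construction $\model$ satisfies the antecedent of \cref{eq:enc} (namely $\stt\models\pre$, the cost and halt initializations, and each conjunct $\encodeD(i,\stmt_i)$, verified by the three statement forms). Because every $\assume$ passed and no $\axasm_i$ held, none of the disjuncts that set the halt flag fire, so $\halt_n=\false$; the valid implication then forces $\neg\halt_n\Rightarrow\post$, i.e. the output state lies in $\post$. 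Contrapositively, every non-blocked run ending in $\overline{\post}$ triggers at least one bad event $E_i$.

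\emph{Union bound.} From the bind/unit structure of \cref{fig:sem}, I would regard $\mu$ as the law of the random output of a probabilistic run, where runs blocked by a failed $\assume$ contribute zero mass. By the previous step the event $\{\text{output}\in\overline{\post}\}$ is contained in $\bigcup_i E_i$, so $\mu(\overline{\post}) \le \pr[\bigcup_i E_i] \le \sum_i \pr[E_i]$. Conditioning on the state reaching the $i$-th sampling statement, \cref{thm:soundness-axioms} gives $\pr[E_i \mid \text{reach } i] \le \axub_i$ evaluated at that state, so $\pr[E_i] \le \mathbb{E}[\axub_i\,\mathbb{1}[\text{reach } i]]$ and hence $\mu(\overline{\post}) \le \mathbb{E}\big[\sum_i \axub_i\,\mathbb{1}[\text{reach } i]\big]$. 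To close the bound I would use $\cost_n\le\beta$: since $\cost_0=0$ and only sampling statements raise cost, $\cost_n=\sum_i \axub_i \ge \sum_{i\text{ reached}}\axub_i$ (each $\axub_i\ge 0$); extending any fixed run's reached variables to a full model and invoking validity of \cref{eq:enc} yields $\sum_{i\text{ reached}}\axub_i \le \cost_n \le \stt(\beta)$ pointwise, so taking expectations gives $\mu(\overline{\post}) \le \stt(\beta)$, exactly the validity of $\hoare{\beta}{\pre}{\stmt_1;\cdots;\stmt_n}{\post}$.

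The main obstacle is precisely the interaction in this last step: the per-statement bound $\axub_i$ depends on the current random state, and runs may be truncated by failed assumes, so the union bound must be taken over a sub-distribution with state-dependent, additively accumulating budgets. Handling this cleanly---threading the correspondence between concrete runs and models of \cref{eq:enc} through the inductive peeling of sampling and assume statements---is where the real work lies; the base case (empty trace, where $\pre\Rightarrow(0\le\beta\land\post)$ makes $\mu(\overline{\post})=0$) is immediate.
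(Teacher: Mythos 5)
Your proof is correct in substance, but it decomposes the argument differently from the paper. The paper proves an auxiliary lemma (\cref{lem:enc}) by induction on the trace length: fixing an input state $\stt$ and an interpretation of the axiom functions, the non-halted models of the encoding that agree with $\stt$ on $\varsi$ project to a set of output states $R_\stt$, and the inductive invariant is $\sem{\trace}(\stt)(\overline{R_\stt}) \leq \model(\cost_n)$ --- the union bound is applied \emph{locally}, one sampling statement at a time, inside the induction, and assume statements are discharged directly by the sub-distribution semantics. \cref{thm:enc} then follows in one line, since validity of \cref{eq:enc} forces $R_\stt \subseteq \post$ and $\model(\cost_n) \leq \stt(\errp)$. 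You instead argue at the level of runs: (i) a purely logical step --- every non-blocked run avoiding all bad events extends to a model with $\halt_n = \false$, hence lands in $\post$; (ii) a single \emph{global} union bound over the events $E_i$, with conditioning on reaching the $i$-th sampling statement and an appeal to \cref{thm:soundness-axioms}; and (iii) a second model-construction to bound the accumulated budget $\sum_{i\ \mathrm{reached}} \axub_i \leq \cost_n \leq \stt(\errp)$ pointwise per run. Your two model-construction steps jointly do the work of the paper's single inductive invariant. What your route buys: the union-bound structure is more transparent, and the expectation argument in (ii)--(iii) handles state-dependent $\axub$ without the paper's claim that all models agreeing on $\varsi$ agree on the values of the $\cost_i$ (a step in the paper's proof of \cref{lem:enc} that is delicate when $\axub$ mentions non-input variables, as in the Bernoulli axiom). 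What it costs: the run-level probability space --- the events $E_i$, ``reach $i$'', and the conditioning --- is not directly furnished by the monadic semantics of \cref{fig:sem}; constructing it rigorously is exactly the induction you defer to at the end of your proposal, which is where your argument and the paper's \cref{lem:enc} reconverge.
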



Observe that in the above encoding the only free symbols are
the uninterpreted functions $f_1,\ldots,f_m$ introduced by the axiom families used in the encoding of sampling statements.
Thus, checking satisfiability involves synthesizing interpretations for $f_1, \ldots, f_m$.
(Equivalently, we can think of $f_1,\ldots,f_m$ as existentially quantified
so that we check validity of $\exists f_1,\ldots,f_m \forall V\ldots$.)

\begin{example}
  Recall the trace $\trace_1$ from \cref{sec:ex1}
  and \cref{fig:ex1} (program $\mathsf{ex1}$), where we proved:
  \[\hoare{p}{\true}{\underbrace{x \sim \bern(0.5)}_{\stmt_1}; \underbrace{\assume(x)}_{\stmt_2}; \underbrace{y \sim \bern(p)}_{\stmt_3}}{\neg y}\]
  Using the encoding in \cref{thm:enc}, we get the following formula:
  \begin{align*}
  \forall x,y,p, \cost_i, \halt_i \ldotp \left(\cost_0 = 0 \land \halt_0 = \false \land
  \bigwedge_{i=1}^3 \encode(i, \stmt_i)
  \right) \Longrightarrow (\cost_{3} \leq p \land (\neg \halt_3 \Rightarrow \neg y))
\end{align*}
To illustrate, $\encode(1,x \sim \bern(0.5))$ is the following constraint, using
the axiom family in \cref{tbl:axioms}:
\[
\cost_1 = \cost_0 +\underbrace{
\left\{\begin{array}{lr}
   0.5 & \text{if }f_x(p) = 1\\
   0.5 & \text{if }f_x(p) = 2\\
   0 & \text{otherwise}
\end{array}\right\}}_{\axub}
\mathbin{\land}
\left(h_1 = h_0 \lor \underbrace{(f_x(p) = 1 \land x) \lor (f_x(p) = 2 \land \neg x)}_{\neg \axasm}\right)
\]
The proof in \cref{sec:ex1} used
the interpretation $f_x(p) = 3$, allowing $x$ to take any value.
\end{example}

\subsection{From Synthesis to Craig Interpolation}

Now that we have defined our  logical constraints, we can apply
\emph{Craig interpolation} on the above encoding in \cref{thm:enc} to construct the labeling functions, $\labelp$ and $\labele$, for an automaton accepting $\trace$.

\paragraph{Craig Interpolants.}
The standard notion of \emph{sequence interpolants}~\citep{mcmillan2006lazy}
generalizes Craig interpolants between two formulas to a sequence of
unsatisfiable formulas in first-order logic.

\begin{definition}[Sequence interpolants]
  Let $\bigwedge_{i=1}^n \varphi_i$ be unsatisfiable.
  There exists a sequence of formulas $\psi_1,\ldots,\psi_n$
  such that:
  \begin{enumerate}
    \item $\varphi_1 \Rightarrow \psi_1$ and $\psi_n \Rightarrow \false$ are valid,
    \item for all $i \in (1,n)$, $\psi_i \land \varphi_{i+1} \Rightarrow \psi_{i+1}$ is valid, and
    \item $\fv(\psi_i) \subseteq  \fv(\varphi_1,\ldots,\varphi_i) \cap \fv(\varphi_{i+1},\ldots,\varphi_n)$.
  \end{enumerate}
  Note that sequence interpolation is equivalent to
  solving a form of recursion-free \emph{Horn clauses}~\cite{rummer2013classifying};
  we use an interpolation-based presentation to reduce notational overhead.
\end{definition}

\paragraph{Labeling Automata via Interpolation.}
Suppose that we have discovered interpretations for $f_1,\ldots,f_m$
that satisfy \cref{eq:enc} from \cref{thm:enc}.
This implies that the following formula, which is \cref{eq:enc} after negating it and instantiating $f_1,\ldots,f_m$ with their interpretations, is unsatisfiable:
\[
\left(\pre \land \cost_0 = 0 \land \bigwedge_{i=1}^n \encodeD(i,\stmt_i)\right) \land \neg(\cost_{n} \leq \beta \land (\neg\halt_n \Rightarrow \post))
\]
It follows that we can construct a \emph{sequence of Craig interpolants}
for the following problem:
\begin{align*}
 \underbrace{\pre \land \cost_0 = 0}_{\varphi_0} \land
 \underbrace{\encodeD(1,\stmt_1)}_{\varphi_1} \land
\cdots \land
\underbrace{\encodeD(n,\stmt_n)}_{\varphi_{n}} \land
\underbrace{ \neg(\cost_{n} \leq \beta \land (\neg \halt_{n} \Rightarrow \post))}_{\varphi_{n+1}}
\end{align*}
Every interpolant $\psi_i$ encodes the set of reachable
states and the failure probability after executing the first $i$ program statements
beginning from a state in $\pre$. The free-variable condition for interpolants
implies that the only free variables in $\psi_i$ are $\halt_i, \cost_i$, and
live program variables after the $i$th statement.
The challenge is that interpolants describe both the program state invariants
and the failure probability invariants, corresponding to the $\labelp$ and $\labele$
needed to label the failure automaton.
Fortunately, these labels can be extracted from the interpolants.
The following theorem formalizes the transformation and states its correctness.

\begin{theorem}[Well-labelings from interpolants]\label{thm:interpolant}
  Let $\{\psi_i\}_i$ be the interpolants computed as shown above.
  Let $\fsm_\trace = \tuple{Q,\delta,\labelp,\labele}$ be the failure automaton that accepts only the trace
  $\trace = \stmt_1,\ldots,\stmt_n$,
  i.e., $\delta = \{\qentry\lto{\stmt_1} q_1, q_1 \lto{\stmt_2} q_2, \ldots q_{n-1} \lto{\stmt_n} \qexit\}$.
  Set the labeling functions as follows:
  \begin{enumerate}
    \item $\labelp(\qentry) \triangleq \pre$ and $\labele(\qentry) \triangleq 0$.
    \item $\labelp(q_i) \triangleq \exists \cost_i \ldotp \psi_i[\halt_i \mapsto \false]$ and $\labelp(\qexit) \triangleq \exists \cost_n \ldotp \psi_n[\halt_n \mapsto \false]$.
    \item $\labele(q_i) \triangleq f(\varsd)$, where $f(\varsd)$
    is the function that returns, for any valuation of $\varsd$,
    the largest value of $\omega_i$ that satisfies $\exists \vars \setminus \varsd \ldotp \exists \halt_i \ldotp \psi_i$.
    For $\labele(\qexit)$, we use $\exists \vars \setminus \varsi \ldotp \exists \halt_n \ldotp \psi_n$.
  \end{enumerate}
  Then, $\fsm_\trace$ is well-labeled and implies $\hoare{\errp}{\pre}{\trace}{\post}$.
\end{theorem}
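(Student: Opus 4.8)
The plan is to verify the two well-labeledness conditions of \cref{def:abstract} directly from the three defining properties of the sequence interpolants $\{\psi_i\}$, and then read off $\hoare{\errp}{\pre}{\trace}{\post}$ from the boundary interpolants together with \cref{thm:labeled-sound}. Condition (1) is immediate: $\labele(\qentry) = 0$ by construction, and $\labele(\qexit)$ is obtained by existentially projecting $\psi_n$ onto $\varsi$, so it is a function of the input variables; its range lies in $[0,1]$ because every $\axub$ in \cref{tbl:axioms} is nonnegative and the last interpolant forces $\cost_n \leq \beta \leq 1$. A preliminary observation I would record is that the free-variable condition (3) of the interpolants is exactly what makes the existential projections defining $\labelp$ and $\labele$ mention only live program variables and the single pair $\cost_i,\halt_i$, so that the labels are well-formed.

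The substance of the proof is condition (2): for each edge $q_{i-1} \lto{\stmt_i} q_i$, I must show the local triple $\hoare{\wpprob(\labele(q_i),\stmt_i) - \labele(q_{i-1})}{\labelp(q_{i-1})}{\stmt_i}{\labelp(q_i)}$ is valid. I would proceed by case analysis on $\stmt_i$, in each case leaning on the inductive interpolant property $\psi_{i-1} \land \encodeD(i,\stmt_i) \Rightarrow \psi_i$. For an assignment $\var \gets \expr$, $\encodeD$ fixes $\cost_i = \cost_{i-1}$ and $\halt_i = \halt_{i-1}$, so the failure budget $\wpprob(\labele(q_i),\var\gets\expr)-\labele(q_{i-1})$ collapses to $0$ and the remaining obligation is the ordinary partial-correctness assignment triple, which follows by instantiating the interpolant implication on the non-halted branch $\halt \mapsto \false$. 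The assume case $\assume(\bexpr)$ is analogous: the budget is again $0$; when $\bexpr$ fails, the trace semantics returns the zero subdistribution and the triple holds vacuously, and when $\bexpr$ holds, the $\halt$-bookkeeping in $\encodeD$ lets $\labelp(q_i)$ be derived from the interpolant implication.

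The sampling case is the main obstacle and the one place where probabilistic reasoning enters. For $\var \sim \dexpr$ with axiom $\pr_{\var\sim\dexpr}[\axasm]\leq\axub$, $\encodeD$ sets $\cost_i = \cost_{i-1} + \axub$ and $\halt_i = \halt_{i-1} \lor \axasm$. I would first argue, purely from the interpolant implication, that the budget is large enough: extending any maximizing model of $\psi_{i-1}$ by $\cost_i = \cost_{i-1}+\axub$ yields a model of $\psi_i$, so $\labele(q_i) \geq \labele(q_{i-1}) + \axub$ and hence $\wpprob(\labele(q_i),\stmt_i)-\labele(q_{i-1}) \geq \axub$. I would then invoke soundness of the axioms (\cref{thm:soundness-axioms}): under $\sem{\dexpr}(\stt)$ the probability that $\axasm$ holds is at most $\stt(\axub)$, so the probability of landing outside the region recorded by $\labelp(q_i)$—which, through $\halt_i \mapsto \false$, is exactly the ``good'' event $\neg\axasm$—is at most $\axub$, matching the budget. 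The delicate point is relating the purely logical max-over-$\cost$ definition of $\labele$ to this incremental probability and confirming that setting $\halt_i \mapsto \false$ correctly isolates the non-failing event; I expect this $\halt$-bookkeeping argument, coupled with the cost-accumulation bound, to be the crux of the proof. Chaining the local triples then yields, via \cref{thm:labeled-sound}, the triple $\hoare{\labele(\qexit)}{\pre}{\trace}{\labelp(\qexit)}$.

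Finally, I would close the gap to the target $\hoare{\errp}{\pre}{\trace}{\post}$ using the two boundary interpolants. Property (1) of the interpolant sequence gives $\psi_n \Rightarrow (\cost_n \leq \beta \land (\neg\halt_n \Rightarrow \post))$; specializing to $\halt_n \mapsto \false$ shows $\labelp(\qexit) \Rightarrow \post$ and that the largest admissible value of $\cost_n$, namely $\labele(\qexit)$, is bounded by $\beta = \errp$. Together with $\labelp(\qentry) = \pre$ by construction, these give precisely the required precondition inclusion, postcondition inclusion, and failure bound, so the derived triple strengthens to $\hoare{\errp}{\pre}{\trace}{\post}$, completing the argument.
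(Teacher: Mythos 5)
Your proposal is correct and follows essentially the same route as the paper's proof: verifying the two well-labeledness conditions of \cref{def:abstract} by case analysis on statement type using the sequence-interpolant implications (with the $\halt_i \mapsto \false$ substitution and existential projection of $\cost_i$ to recover $\labelp$, and the max-over-$\cost_i$ model argument to bound the budget by $\axub$ via axiom soundness), then reading off postcondition inclusion and $\labele(\qexit) \leq \errp$ from the final interpolant. The only cosmetic deviation is that you claim the assignment budget is exactly $0$ where the paper only establishes (and only needs) $\wpprob(\labele(q_j),\stmt) - \labele(q_i) \geq 0$, and you invoke \cref{thm:labeled-sound} explicitly where the paper leaves that final step implicit.
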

Notice that for $\labelp$ we set $\halt_i$ to be $\false$, since we are
only interested in states that pass assume statements (reachable states). We
existentially quantify $\cost_i$, as it is not a program variable.
Also notice the technicality in constructing $\labele$; this arises because the interpolant is a \emph{relation} over values of $\cost_i$ and $\varsd$, while the label of $\labele(q_i)$ is technically a \emph{function} from $\varsd$ to [0,1].
In practice, we need not construct the function $f$; we can perform all needed checks using relations.


%

\definecolor{brickred}{rgb}{0.8, 0.25, 0.33}
\newcommand{\spec}[1]{{\color{blue} \left\{#1\right\}}}
\newcommand{\fprob}[1]{{\color{brickred} #1}}
\newcommand{\qs}{Q}
\newcommand{\db}{d}
\newcommand{\len}[1]{|#1|}
\newcommand{\eval}{\mathsf{eval}}

\newcommand{\sumv}{s}
\newcommand{\maxv}{\mathit{max}}
\newcommand{\best}{\mathit{b}}
\newcommand{\donev}{\mathit{done}}
\newcommand{\ansv}{\mathit{ans}}
\newcommand{\noisy}{\mathit{noisy}}
\newcommand{\futil}{\mathit{u}}

\newcommand{\rresponse}{\mathsf{randResp}}
\newcommand{\rnm}{\mathsf{noisyMax}}
\newcommand{\noisysum}{\mathsf{noisySum}}
\newcommand{\abovet}{\mathsf{aboveT}}
\newcommand{\sparsevec}{\mathsf{sparseVec}}
\newcommand{\expmech}{\mathsf{expMech}}
\newcommand{\searchref}{\mathsf{searchRef}}

\begin{figure}[t]
  \centering
  \smaller
  \begin{minipage}[c]{.25\textwidth}
\begin{algorithmic}
    \State $\vdash_\fprob{0.25} \spec{\true}$
    \Function{$\rresponse$}{$\emph{priv}$}
        \State $r \sim \textsf{unif}(\{ 0, 1 \}^2)$
        \If{$\mathit{fst}(r) = 1$}
            \State $\ansv \gets \emph{priv}$
        \Else
            \State $\ansv \gets \mathit{snd}(r)$
        \EndIf
        \State{\Return $\ansv$}
    \EndFunction
    \State $\spec{\ansv \iff \emph{priv}}$
\end{algorithmic}
\end{minipage}
\begin{minipage}[c]{.4\textwidth}
  \begin{algorithmic}
      \State $\vdash_\fprob{p} \spec{\eps > 0}$
      \Function{$\rnm$}{$\qs, d, \epsilon$}
          \State $\best, \emph{max}, i \gets \bot, \bot, 1$
          \State $a \gets \mathds{Z}[\len{\qs}]$
          \While{$i \leq \len{\qs}$}
              \State $q \gets \qs_i(d)$
              \State $a_i \sim \lap(q, \frac{2}{\eps})$
              \If{$a_i > \emph{max} \lor \best = \bot$}
                  \State $\best \gets i$
                  \State $\emph{max} \gets a_i$
              \EndIf
              \State $i \gets i + 1$
          \EndWhile
          \State \Return $\best$
      \EndFunction
      \State $\spec{\forall j \in [1,\len{\qs}] \ldotp \qs_\best(\db) \geq \qs_j(\db) - \frac{4}{\epsilon}\log \frac{\len{\qs}}{p}}$
  \end{algorithmic}
\end{minipage}
\begin{minipage}[c]{.3\textwidth}
  \begin{algorithmic}
      \State $\vdash_\fprob{p} \spec{\eps > 0}$
      \Function{$\abovet$}{$\qs, \db, T, \epsilon$}
      \State $i, \donev \gets 1, \mathit{false}$
          \State $t \sim \lap(T,\frac{1}{\eps})$
          \While{$i \leq \len{\qs} \land \neg \donev$}
              \State $q \gets \qs_i(\db)$
              \State $a \sim \lap(q,\frac{2}{\eps})$
              \If{$a > t$}
                  \State{$\donev \gets \mathit{true}$}
              \EndIf
              \State $i \gets i + 1$
          \EndWhile
          \State $\ansv \gets \donev \mathbin{?} i - 1 : \bot$
          \State \Return $\ansv$
      \EndFunction
      \State $\spec{%
          (\ansv \neq \bot \Rightarrow \varphi_{\top}) \land
          (\ansv = \bot \Rightarrow \varphi_{\bot})
      }$

  \end{algorithmic}
\end{minipage}
%
\vspace{.1in}

\begin{minipage}[c]{.3\textwidth}
  \begin{algorithmic}
      \State $\vdash_\fprob{p} \spec{\eps > 0}$
      \Function{$\noisysum$}{$\qs, \db, \epsilon$}
          \State $\sumv \gets 0$
          \State $i \gets 1$
          \State $a \gets \mathds{Z}[\len{\qs}]$
          \While{$i \leq \len{\qs}$}
              \State $q \gets \qs_i(\db)$
              \State $a_i \sim \lap(q, \frac{1}{\eps})$
              \State $\sumv \gets \sumv + a_i$
              \State $i \gets i + 1$
          \EndWhile
          \State \Return $\sumv$
      \EndFunction
      \State $\spec{|\sumv - \sum_{j=1}^{\len{\qs}} \qs_j(\db)|
        \leq \frac{\len{\qs}}{\eps} \log\frac{\len{\qs}}{p}}$
  \end{algorithmic}
\end{minipage}
\begin{minipage}[c]{.36\textwidth}
  \begin{algorithmic}
    \State $\vdash_\fprob{p} \spec{\eps > 0}$
    \Function{$\expmech$}{$R,\futil,\db,\eps$}
    \State $r_\best \gets \bot$
    \State $\maxv \gets 0$
    \For {$r \in R$}
    \State $\mathit{util} \gets \futil(r, d)$
    \State $\mathit{nu} \gets \olap(\mathit{util},\frac{2}{\eps})$
    \If {$\mathit{nu} > \maxv \lor r_\best = \bot$}
    \State $r_\best \gets r$
    \State $\maxv \gets \mathit{nu}$
    \EndIf
    \EndFor
    \State \Return $r_\best$
    \EndFunction
    \State $\spec{\forall j \in R \ldotp \futil(r_\best,\db) \geq \futil(j,\db) - \frac{2}{\epsilon}\log \frac{2\len{R}}{p}}$
  \end{algorithmic}
\end{minipage}
\begin{minipage}[c]{.33\textwidth}
  \begin{algorithmic}
      \State $\vdash_\fprob{p} \spec{\eps > 0}$
      \Function{$\sparsevec$}{$\qs, \db, T, \epsilon$}
          \State $i, \donev \gets 1, \mathit{false}$
          \State $t \sim \lap(T,\frac{1}{\eps})$
          \While{$i \leq \len{\qs} \land \neg \donev$}
              \State $q \gets \qs_i(\db)$
              \State $a \sim \lap(q,\frac{2}{\eps})$
              \If{$a > t$}
                  \State $\noisy \sim \lap(q,\frac{1}{\eps})$
                  \State{$\donev \gets \mathit{true}$}
              \EndIf
              \State $i \gets i + 1$
          \EndWhile
          \State $\ansv \gets \donev \mathbin{?} (i - 1, \noisy) : \bot$
          \State \Return $\ansv$
      \EndFunction
      \State $\spec{%
          (\ansv \neq \bot \Rightarrow \varphi_{\top}') \land
          (\ansv = \bot \Rightarrow \varphi_{\bot}')
      }$
  \end{algorithmic}
\end{minipage}

$
  \varphi_{\top} \triangleq
  \begin{cases}
    \forall j \in [1,\ansv).\, \qs_j(\db) \leq T +
    \frac{2}{\eps}\log\frac{2\len{\qs}}{p} + \frac{1}{\eps}\log\frac{2}{p}
    \\
    \qs_{\ansv}(\db) \geq T - \frac{2}{\eps}\log\frac{2\len{\qs}}{p} - \frac{1}{\eps}\log\frac{2}{p}
  \end{cases}
$
~~
$
  \varphi_{\bot} \triangleq \forall j \in  [1,\len{\qs}].\, \qs_j(\db) \leq T +
  \frac{2}{\eps}\log\frac{2\len{\qs}}{p} + \frac{1}{\eps}\log\frac{2}{p}
$
\[
\text{We elide } \varphi'_\top \text{ and } \varphi'_\bot, \text{which are defined similarly.}
\]

\caption{Privacy-preserving algorithms and their accuracy specifications}
\label{fig:bench}
\end{figure}

\section{Implementation and Case Studies}\label{sec:evaluation}

\subsection{Overview of Implementation}

We have implemented our approach atop the Z3 \abr{SMT} solver~\citep{demoura08}.
We encode statements using the following first-order theories: linear arithmetic, uninterpreted functions, and arrays.
Below, we describe our implementation;
we refer to \smref{smith2019:arxiv}{app:implementation} for further details.

\paragraph{Algorithmic Strategy.}
Our implementation is a determinization of the algorithm presented in \cref{sec:algorithm}.
To ensure that we get tight upper bounds on failure probability, our implementation aggressively tries to apply the $\cmerge$ rule---recall that the $\cmerge$ rule allows us take the maximum failure probability across two automata, instead of the sum.
Specifically, we modify the rule $\ctrace$ to return a set of traces $\trace_1, \ldots, \trace_n \in \lang(\prog) \cap \overline{\lang(\fsms)}$.
Then, we attempt to simultaneously label all traces with the same interpolants at nodes pertaining to the same control location.
To ensure that we compute similar interpolants across traces, we use the same distribution axiom for the same sampling instruction in all traces it appears in.
Finally, we attempt to apply the rule $\cgen$ to create cycles in the resulting automaton.

\paragraph{Discovering Axioms.}
Given a formula of the form $\exists f \ldotp \forall X \ldotp \varphi$,
we check its validity using a propose-and-check loop:
\rone we propose an interpretation of $f$ and then \rtwo check if $\forall X \ldotp \varphi$ is valid with that interpretation using the \abr{SMT} solver (more on this below).
The first step proposes interpretations of $f$ of increasing size,
e.g., for a unary function $f(x)$, it would try $0,1,x,x+1$, etc.
As we shall see, even for complex randomized algorithms from the literature, the required axioms are syntactically simple, so this simple strategy works rather well.

\paragraph{Checking Validity.}
The case studies to follow make heavy use of non-linear arithmetic (e.g., $\frac{x \cdot y}{z} + u > 0$) and transcendental functions (namely, $\log$).
Non-linear theories are generally undecidable.
To work around this fact, we implement an incomplete formula validity checker using an eager version of the \emph{theorem enumeration} technique recently proposed by~\citet{Srikanth2017}.
First, we treat non-linear operations as uninterpreted functions, thus overapproximating their semantics.
Second, we strengthen formulas by instantiating \emph{theorems} about those non-linear operations. For instance, the following theorem relates division and multiplication: $\forall x,y \ldotp y > 0 \Rightarrow \frac{x \cdot y}{y} = x$.
We then instantiate $x$ and $y$ with terms over variables in the formula. Since there are infinitely many possible instantiations of $x$ and $y$,
we restrict instantiations to terms of size 1, i.e., variables/constants.

Our implementation uses a fixed set of theorems about multiplication, division, and logarithms. These are instantiated for every given formula, typically resulting in $\sim$1000 additional conjuncts.

\paragraph{Interpolation Technique.}
Given the richness of the theories we use, we found that existing proof-based interpolation techniques either do not support the theories we require (e.g., the MathSAT solver~\citep{cimatti2013mathsat5}) or fail to find generalizable interpolants, e.g., cannot discover quantified interpolants (e.g., Z3).
As such, we implement a \emph{template-guided} interpolation technique~\citep{albarghouthi2013beautiful,rummer2013exploring},
where we force interpolants to follow syntactic forms that appear in the program.
Specifically, for every Boolean predicate $\varphi$ appearing in the program, the specification, or the axioms, we create a template $\varphi^t$ by replacing its variables with placeholders, denoted $\wild_i$. For instance, given $x > y$, we generate the template $\wild_1 > \wild_2$.

Given a set of templates, our interpolation technique searches for an interpolant as a conjunction of instantiations of those templates, where each placeholder can be replaced by a well-typed term over formula variables.
Given the infinite set of possible instantiations, our implementation fixes the size of possible instantiations (e.g., to size 1), and proceeds by finding the smallest possible interpolants in terms of number of conjuncts.
If it cannot, it expands the search to terms of larger sizes.

\subsection{Case Studies in Privacy-Preserving Algorithms}
\emph{Differential privacy}~\citep{DMNS06} is a strong probabilistic property
modeling statistical data privacy. Informally, a randomized database query
satisfies differential privacy if adding/removing a single individual's
private data to/from the database does not change the output distribution very much, so that
differentially private algorithms reveal little about any single individual's record.
To achieve this property, algorithms add random noise at key points in the computation.
Sophisticated differentially private algorithms are known for a wide
variety of common data analyses, and differential privacy is starting to see
deployments in both industry~\citep{erlingsson2014rappor,johnson2018towards} and
government~\citep{abowdschmutte,haney2017utility}.

Intuitively, more random noise yields stronger privacy guarantees at the expense
of accuracy---the noisy answers may be too far from the exact answers to be of any practical use.
Therefore, the designer of a differentially private algorithm aims to maximize
accuracy of the computed results while achieving some target level of privacy.
We now consider a number of algorithms from the
differential privacy literature and demonstrate our technique's ability to
automatically prove their accuracy guarantees.  The algorithms and their
specifications are shown in \cref{fig:bench} and described below;
\cref{tbl:results} provides runtime and other statistics, which we discuss  later in this section.


\paragraph{Randomized Response ($\rresponse$).}
One of the oldest randomized schemes for protecting privacy is \emph{randomized
response}, proposed by~\citet{warner1965randomized} decades before the
formulation of differential privacy. In the typical setting, an individual has a
single bit (0 or 1) as their private data, representing e.g. the
presence of some disease or genetic marker.  Under randomized response, the
individual flips two fair coins: if the first result is heads, the individual
reports their bit honestly, otherwise they ignore their private bit and report
the result of the second flip. In this way, randomized response guarantees a degree
of privacy by introducing plausible deniability---an individual's reported bit
could have been the result of chance. At the same time, randomized response guarantees
a weak notion of accuracy, as the output is biased towards the true private bit with
probability $3/4$.

\begin{table}[t]
  \centering
  \caption{Results on private algs. PA: \# of proposed axioms; TI: \# of theorem instantiations; time is in sec.}
  \label{tbl:results}
  \small
  \begin{tabular}{lllll}
    \toprule
    Algorithm & Axiom(s) synthesized & PA & TI & Time \\
    \midrule
    $\rresponse$ & $\emph{priv} \iff \mathit{snd}(r)$ & 162 & 0 & 2\\
    $\noisysum$ & $\ifrac{\len{\qs}}{p}$ & 5 & 5496 & 98\\
    $\rnm$ & $\ifrac{\len{\qs}}{p}$ & 4 & 1768 & 33 \\
    $\expmech$ & $\ifrac{\len{R}}{p}$ & 3 & 1768 & 27 \\
    $\abovet$ & $\ifrac{2}{p}$ and  $\ifrac{2\len{\qs}}{p}$ & 22 & 752 & 23 \\
    $\sparsevec$ & $\ifrac{3}{p}$, $\ifrac{3\len{\qs}}{p}$, and $\ifrac{3}{p}$ & 941 & 1330 & 97 \\
    \bottomrule
  \end{tabular}
\end{table}

We encode randomized response as the first program in \cref{fig:bench} and prove
the accuracy guarantee. In the code, $\mathit{priv}$ is the individual's private
bit. The program draws two bits uniformly and then decides what
to return; $\mathit{fst}$ and $\mathit{snd}$ extract the results of the first
and second bits, respectively. The accuracy guarantee states that
the returned answer is equal to the true private bit, except with probability at
most $1/4$.
Our implementation synthesizes the axiom $\mathit{priv} \iff \mathit{snd}(r)$; this ensures that the second bit has the same value as $\mathit{priv}$,
so if the first bit is 0 and the else branch is taken, the algorithm is forced to return the right result, with a failure probability of $1/4$.

\paragraph{Noisy Sum ($\noisysum$).}
Our next algorithm computes the sum of a set of numeric queries, adding noise to
the answer of each query in order to ensure differential privacy.
This is a simplified version of the private counters by~\citet{CSS10}
and~\citet{DNPR10}, which are used to publish aggregate statistics privately, e.g., total number of website visitors.

The $\noisysum$ program
takes three inputs: a set $\qs$ of integer-valued queries, a database $\db$ holding
the private data, and a parameter $\epsilon \in \mathds{R}$ representing the desired level of
privacy.\footnote{%
  We encode $\qs$ as an integer array where index $i$ holds the result of $\qs_i(\db)$.}
The program populates an integer array $a$ with
answers to each query, with noise drawn from the Laplace distribution
with scale controlled by $\epsilon$; smaller $\epsilon$ is more private, but
requires more noise. Finally, the output is the sum of all noisy answers.

The accuracy guarantee bounds how far the noisy sum deviates from the true sum
with failure probability $p$, where $p$ is a parameter. Our implementation synthesizes an axiom for each Laplace sampling, setting the failure
probability to be $p/\len{\qs}$ each time.
Therefore, at step $i$, $|a_i - \qs_i(\db)| \leq \frac{1}{\eps}\log(|\qs|/p)$.
Since there are
$\len{\qs}$ iterations,
after the loop exits
we have $|s - \sum_j^{|\qs|} \qs_j(\db)| \leq \frac{|Q|}{\eps}\log(|Q|/p)$
with a failure probability of at most $|Q| \cdot \frac{p}{|Q|} = p$.

\paragraph{Report Noisy Max ($\rnm$) and the Exponential Mechanism ($\expmech$).}
Our next pair of algorithms select an approximate maximum element from a set of private data.

In Report Noisy Max~\citep{dwork2014algorithmic}, $\rnm$ in \cref{fig:bench}, the algorithm is presented with a set
$\qs$ of integer queries, a private database $\db$, and a privacy level
$\epsilon$. The algorithm then evaluates each query on $\db$ and adds
Laplace random noise to protect privacy.
Finally, the index of the query with
the largest noisy value is returned.
For example, if each query counts the
number of patients with a certain disease, then Report Noisy Max will report a
disease that may not be true most prevalent disease, but whose count is not too
far from the true maximum count.

The postcondition states that the
answer of the returned query $\qs_\best$ is not too far below the answer of the actual
maximum query. To achieve failure probability $p$, our implementation synthesizes an axiom for the Laplace sampling statement with failure probability $p/\len{\qs}$.
Since the loop executes $\len{\qs}$ times, we establish
that the postcondition holds with probability $p$.
To do so, the interpolation engine discovers a number of key facts; we outline two of them:
$$\forall j \in [1,i) \ldotp |a_j - \qs_j(d)| \leq  \frac{2}{\epsilon}\log \frac{\len{\qs}}{p} \quad \text{ and } \quad \forall j \in [1,i) \ldotp a_\best \geq a_j$$
The first formula specifies that, for every element of $j$ of $a$,
its distance from the corresponding valuation of $\qs_j(\db)$ is bounded above by $2/\epsilon\log \len{\qs}/p$---this follows directly from the choice of distribution axiom.
The second formula states that the best element is indeed larger than all previously seen ones.
Upon loop exit, these facts, along with others, are sufficient to imply the postcondition.
Notice that the
$2/\epsilon\log \len{\qs}/p$
in the first formula weakens to $4/\epsilon\log \len{\qs}/p$
in the postcondition. This is due to the two-sided error introduced by the absolute value in the Laplace axiom.
The proof computed for $\rnm$ is presented in detail in \smref{smith2019:arxiv}{app:rnm}.

The algorithm $\expmech$ is a discrete version of the seminal Exponential
mechanism~\citep{MT07}, a fundamental algorithm in differential privacy.  This
algorithm is used to achieve differentially privacy in non-numerical queries, as well as a mechanism for achieving certain notions of fairness in decision-making algorithms~\citep{dwork2012fairness}.
$\expmech$ takes a set $R$ of possible output elements, a \emph{utility
function} $\futil$ mapping each element of $R$ and private database to a numeric
score, a private database $d$, and privacy parameter $\epsilon$. The
algorithm aims to return an element of $R$ that has large utility on the given
database. $\expmech$ differs from $\rnm$ through the use of the exponential
distribution; because the exponential distribution never produces results lower
than the shift, the accuracy bound for the $\expmech$ is better. The distance to the
true maximum is at most $\frac{2}{\eps}\log(2|R|/p)$ instead of $\frac{4}{\eps}\log(|Q|/p)$, with
failure probability at most $p$.  To prove this, our implementation
synthesizes an axiom analogous to that used for $\rnm$.

\paragraph{Above Threshold ($\abovet$) and the Sparse Vector Mechanism ($\sparsevec$).}
A useful differential privacy primitive is to return the first query in a
list with a numeric answer (approximately) above some given threshold,
ignoring queries with small answers. Our final two privacy examples
do just this. The Above Threshold algorithm~\citep{dwork2014algorithmic}
takes a list $\qs$ of queries, a private database $d$, a numeric threshold $T$,
and the target privacy level $\epsilon$. First, the program computes a noisy
threshold $t$ by adding noise to the true threshold $T$.  The program loops
through the queries, comparing the noisy answer of each query to the
noisy threshold. If the noisy answer is above the noisy threshold, the program
sets the flag $\mathit{done}$ and exits the loop. Finally, the algorithm
returns the index of the approximately above threshold query, or a default value
$\bot$ if no such query was found.

The accuracy guarantee requires some care. There are two
cases: the returned value is either a query index, or $\bot$.  In the first
case, $q_{\ansv}$ should have true value not too far below the exact threshold
$T$, and all prior queries should have true value not too far above $T$. In the
second case, no query was found to be above threshold after adding noise, so no true
answer should be too far above $T$. To prove this property, we synthesize axioms for the Laplace sampling instructions with
different failure probabilities: $\frac{p}{2}$ for the threshold sampling, and
$\frac{p}{2\len{\qs}}$ for each loop sampling. There is one threshold
sampling and at most $\len{\qs}$ loop iterations, so the total failure probability
is at most $\frac{p}{2} + \len{\qs} \cdot \frac{p}{2 \len{\qs}} = p$.

A slightly more involved variant of this algorithm, called Numeric Sparse
Vector~\citep{dwork2014algorithmic}, also returns a noisy answer to the above
threshold query along with the query's index. Again, the accuracy property
describes the two cases---above threshold query found, and no above threshold
queries. In both cases, the noisy query answer should be close to the true
answer. The proof proceeds much like in the simpler variant, adjusting the
failure probabilities when applying axioms in order to take the additional noisy
answer sampling into account.

\paragraph{Discussion of Results.}
\cref{tbl:results} summarizes the results of applying our implementation to the above algorithms.
The table lists the synthesized axiom per sampling statement---recall that our implementation strategy forces different instances of a sampling statement to use the same axiom.
Additionally, we list the number of proposed and checked axioms (PA),\footnote{%
  PA does not include the many possible axiom instantiations that are not well-typed.}
the largest number of theorem instantiations for dealing with non-linear arithmetic (TI), and the total time in seconds.

Consider the $\abovet$ algorithm. The implementation attempts 22 different pairs (because there are two sampling statements) of axioms. \cref{tbl:results} lists the synthesized interpretation of the function $f(\varsi)$ for the first and second sampling statements. The implementation discovers the axiom that assigns a failure probability $p/2$ for the first sampling statement and $p/(2|Q|)$ for the second sampling statement.
Proving accuracy of $\abovet$ takes 23 seconds and 752 theorems are instantiated to interpret non-linear arithmetic.
Notice that $\noisysum$ takes the longest amount of time, even though it only attempts 5 axioms. This is due to the large number ($\sim$5500) of theorem instantiations.
For $\sparsevec$, the implementation proposes 941 axioms before discovering the shown axioms.

To the best of our knowledge, no existing tools can automatically reason about the algorithms and accuracy properties we have discussed here.
The algorithms we considered are small yet sophisticated.
As the number of sampling statements increases, the space of possible axioms grows combinatorially, impacting synthesis performance. As research into constraint-based program synthesis progresses, our approach can directly benefit from these developments.

\newcommand{\rely}{\emph{unrel}}
\subsection{Case Study in Unreliable Hardware}
To demonstrate our approach's versatility, we consider another possible
application: analyzing programs executing on approximate hardware, which is unreliable but efficient.


%
We use the program $\searchref$ from the \emph{Rely}
system by \citet{carbin2013verifying}, shown in \cref{fig:rely}, which
implements a pixel-block search algorithm from x264 video encoders. The program
receives a constant number of pixel blocks $(\mathit{nblocks} = 20)$ of
size $16 \times 16$ ($\mathit{height} \times \mathit{width}$).

\begin{wrapfigure}{r}{0.5\textwidth}
  \vspace{-.1in}
\begin{algorithmic}[0]
    \State $\mathit{nblocks,height,width} \gets 20,16,16$
    \Function{$\searchref$}{$\mathit{pblocks}$, $\mathit{cblock}$}
        \State {\textbf{reliable}: $i, j,k$}
        \State {\textbf{unreliable}: $\mathit{minssd}, \mathit{minblock}, \emph{ssd}, t, t_1, t_2$}

        \State{$i \gets 0$}
        \While{$i < \mathit{nblocks}$}
          \State{$ssd, j \gets 0,0$}
            \While{$j < \mathit{height}$}
              \State $k \gets 0$
              \While {$ k < \mathit{width}$}
                \State{$t_1, t_2 \gets \mathit{pblocks}[i, j], \mathit{cblock}[j]$}
                \State{$t \gets t_1 \mathbin{\dot{-}} t_2$}
                \State{$ssd \gets ssd \mathbin{\dot{+}} (t \mathbin{\dot{\times}} t)$}
                \State $k \gets k + 1$
              \EndWhile
              \State{$j \gets j + 1$}
            \EndWhile
            \If{$\mathit{sdd} \mathbin{\dot{<}} \mathit{minssd}$}
                \State{$\mathit{minssd}, \mathit{minblock} \gets \emph{ssd},i$}
            \EndIf
            \State{$i \gets i + 1$}
        \EndWhile
        \State \Return $\mathit{minblock}$
    \EndFunction
\end{algorithmic}
\caption{Reliability ex.~\citep{carbin2013verifying}}
\label{fig:rely}
\vspace{-.1in}
\end{wrapfigure}
This program is expected to provide adequate video encoding despite potential hardware failures.
Rely's programming model exposes unreliable arithmetic operations, denoted with
a dot (e.g. $\dot{+}, \dot{-}$, etc.), which may fail with small
probability (say, $10^{-7}$).
Reading from variables typed as \emph{unreliable} may also fail with a small probability.
Rely assumes loops over unreliable data have a constant bound on the
number of iterations, so these loops can be
unrolled.

Our goal is to prove the probability of a reliable execution is at least 0.99, where reliability implies no failures along the execution.\footnote{Rely multiplies this probability by the reliability of the inputs ($\mathit{pblocks}$, $\mathit{cblock}$)---this does not impact the analysis.}
To do so, we analyze a version of the program instrumented with a Boolean flag $\rely$,
which is initialized to $\false$. We model each unreliable operation by adding a
sampling from the Bernoulli distribution to determine whether the operation
fails.
For instance, a read $y \gets x$ from an unreliable $x$ is transformed into
$y \gets x; \rely \gets \rely \lor \bern(10^{-7})$.
We then use our implementation to prove $\hoare{0.01}{\true}{\searchref}{\rely = \false}$.

Unlike Rely, we do not assume independent failures.
Our analysis thus gives a more conservative estimate of failure probability, but, as a benefit, retains
soundness even if failures are correlated.
Nevertheless, we are able to prove that the program is reliable with probability $\geq 0.992832$, compared to the $0.994885$ computed by Rely.
Moreover, since our approach is symbolic, we can prove a \emph{symbolic} reliability bound as a function of the number of blocks and their size.
This allows us to ask: \emph{how many blocks can we use, and how large, and still be reliable?}
We automatically establish the parameterized failure probability
$1.4 \cdot 10^{-6} \cdot \mathit{nblocks} \cdot \mathit{height} \cdot \mathit{width}$,  describing how program parameters affect reliability.
For instance, we can increase the number of blocks to 25 and still maintain $\geq 0.99$
reliability, or quadruple the size of each block to $32^2$ pixels and get
$\geq .97$ reliability.
In both settings, our approach completes the proof in less than 2 seconds.

%

\section{Related Work}\label{sec:relatedwork}

\paragraph{Interpolation \& Trace Abstraction.}
%
In software verification, interpolants were first used for
constructing predicate abstract domains in \emph{counterexample-guided
abstraction refinement} (\abr{CEGAR}).~\citep{henzinger2004abstractions}.
McMillan's work on lazy abstraction with interpolants~\citep{mcmillan2006lazy}
used proofs of correctness of program traces to directly construct Hoare-style annotations.
By unrolling the program's \abr{CFG} into a tree and adding annotations, he showed
how to generalize a tree of paths into an automaton by adding back edges, proving the correctness
of infinitely many traces.

Our approach is inspired by work by
\citet{Heizmann10,heizmann2013software,heizmann2009refinement}, which provided
an insightful and general view of interpolation-based verification through the
lens of automata. Elegance aside, the automata view better suits our probabilistic
setting: in McMillan's original formulation, a program can be unrolled into a tree, as paths
with common prefixes can be combined. In our setting this combining is more subtle---our $\mergeop$ is
a restricted version. Further, the automata view allows us to maintain sets of traces separately and
sum up their probabilities of failure.

\paragraph{Deductive Probabilistic Verification.}
Deductive verification techniques for probabilistic
programs include probabilistic Hoare
logics~\citep{RandZ15,BEGGHS16,Hartog:thesis,Chadha07}.
and the lightweight
probabilistic logic of~\citet{DBLP:conf/icalp/BartheGGHS16} our technique is closely related to, just as the
classical interpolation-based techniques mirror Hoare-style proofs.
Deductive techniques are highly expressive, but the complex proofs typically must be
constructed manually or in an interactive setting. In contrast, our approach has
the advantage of automation.

\paragraph{Pre-Expectation Calculus.} The pre-expectation calculus and associated predicate transformers \citep{Kozen:1985, Morgan:1996} can prove properties of probabilistic programs, but have practical obstacles to full automation. Computing pre-expectations across sampling instructions yields an integral over the distribution being sampled from. Complex distributions, like the infinite-support Laplace distribution, yield correspondingly complex integrals that are difficult to reason about. Any automation of the pre-expectation calculus will need to establish algebraic properties about these mathematical expressions. Our use of distribution axioms (\cref{sec:interpolation}) obviates the need to reason directly about integrals via a reduction to synthesis.

\paragraph{Martingales.} Martingales---probabilistic analogues of loop invariants---are used in automated tools to prove termination conditions \citep{Chakarov-martingale,Chatterjee2016,Chatterjee:2016:AAQ:2837614.2837639,Chatterjee:2017:SIP:3009837.3009873,mciver2016new} and properties of expected values \citep{BEFFH16}. Automated martingale synthesis techniques are restricted to linear or polynomial invariants, which alone are unable to prove the accuracy properties we are interested in.

\paragraph{Probabilistic Model Checking.}
Probabilistic model checking is perhaps the
most well-developed technique for automated reasoning about probabilistic
systems.  Traditionally, it focused on
temporal properties of \emph{Markov
Decision Processes} (\abr{MDP})---surveys by \citet{kwiatkowska2010advances}
and \citet{Katoen:2016:PMC:2933575.2934574} overview the current
state of the art.

Our program model can be cast as an infinite-state \abr{MDP},
with non-determinism at program entry to pick an initial state.
There have been a number of abstraction-based techniques for reducing the
size of large (or infinite) \abrs{MDP}  ~\citep{kattenbelt,Kattenbelt2010,hermanns08}.  To our
knowledge, existing works cannot handle the programs and properties we
consider here. The general limitation is the inability of existing
 model checking techniques to handle distribution
expressions---e.g., a Laplace whose scale is a parameter---and failure
probabilities that are expressions.  Probabilistic
\abr{CEGAR}~\citep{hermanns08} uses a guarded-command language where probabilistic choice is a
real-value determining the probability of executing each command.  Other
techniques limit distribution expressions to
finite distributions with constant parameters~\citep{kattenbelt}.

\citet{teige2011generalized} consider interpolation in \emph{stochastic
Boolean satisfiability}~\citep{littman2001stochastic},
where formulas contain existential and probabilistic quantifiers. The approach
has been used for generalizing bounded encodings of finite-state \abrs{MDP}, in
an analogous fashion to the original work on interpolation-based model
checking~\citep{mcmillan2003interpolation}.

\paragraph{Other Probabilistic Analyses.}
Probabilistic abstract interpretation~\citep{cousot2012probabilistic} generalizes the
abstract interpretation framework to a probabilistic setting; other
techniques can be cast in this
framework~\citep{monniaux2000abstract,monniaux2001backwards,monniaux2005abstract,claret2013bayesian}.
Recently, \citet{wangpmaf} presented \abr{PMAF}, an elegant algebraic framework
for constructing analyses of probabilistic programs.  The approach is
rather general, accepting recursive programs and supporting interprocedural
analyses. Unlike \abr{PMAF}, whose results depend highly on the expressiveness of the chosen abstract domain, our technique constructs abstractions on demand, \emph{\`a la} interpolation-based verification, at the risk of never generalizing.  \abr{PMAF} instantiations considered by \citet{wangpmaf} cannot
prove our target accuracy properties, but alternative instantiations might achieve something similar.

Another line of work reduces probabilistic verification to a form of
counting~\citep{Albarghouthi17,chistikov15,belle15,mardziel2011dynamic}.  To
compute the probability that a formula is \abr{SAT}, these techniques count the number
of satisfying assignments---or perform numerical volume
estimation in the infinite-state case.  While these techniques can compute very
precise---often exact---probabilities, they target simpler program models.
Specifically, programs have no inputs, probability distribution are not
parameterized, and loops are handled via unrolling.

Our technique is related to works verifying
relational probabilistic properties, including
differential privacy and
uniformity~\citep{AlbarghouthiH18,AlbarghouthiH18b}. These systems encode the space of
\emph{coupling proofs} as a constraint-based synthesis problem. Our technique handles
different properties, but shares the high-level design principle of reducing probabilistic reasoning to logical reasoning.

Computer algebra  and symbolic
inference methods (e.g.,
\citep{gehr2016psi,cusumano2018incremental,narayanan2016probabilistic}) have
 been applied to probabilistic programs in different domains (e.g.,
\citep{gehr2018bayonet}). While these tools can automatically generate symbolic
representations of output distributions, proving
properties about these distributions remains challenging. Modern implementations
use a variety of custom heuristics and reduction strategies to try to simplify
complex algebraic terms, a computationally-expensive task.

\section{Conclusions and Future Directions} \label{sec:conclusions}
We have presented a generalization of trace abstraction for proving
accuracy properties of probabilistic programs.
This required four key ideas: \rone representing
probabilistic traces with failure automata labeled with formulas and probabilities, \rtwo merging and generalizing
these failure automata, \rthree axiomatizing distributions by
solving a  synthesis problem, and \rfour applying Craig interpolation
to construct labels of failure automata. These
ideas enable automated verification of accuracy properties using logic-based techniques, while handling rich programs and properties.

For future work, we see trace abstraction modulo probability being extended with other kinds of
probabilistic reasoning, perhaps based on independence or expectations; the challenge is keeping
the complexity of the logical encoding under control.
Another natural path is to connect to recent work on probabilistic abstract interpretation, e.g. by~\citet{wangpmaf}. One could imagine enhancing a standard abstract domain with failure probabilities. However,
the treatment of sampling instructions is unclear---there are typically multiple axioms for a given
distribution, and a proof may need several different axiomatizations to prove a target property.

\begin{acks}
We thank Thomas Reps, Zachary Kincaid, and the anonymous referees for their comments on earlier drafts of this work.
This work is supported by the National Science Foundation CCF under Grant Nos. 1566015, 1704117, 1652140, and 1637532.
\end{acks}

\bibliography{header,references}

\pagebreak
\appendix


\section{Omitted Proofs} \label{app:proofs}

\subsection{Proof of \cref{thm:gproofrule}}
\begin{proof}
  By definition of the semantics, the output distribution of a program $\prog$
  on input state $\stt$ is
  \[
    \sem{\prog}(\stt) = \sum_{\trace \in \lang(\prog)} \sem{\trace}(\stt) .
  \]
  Hence for any input state $\stt \in \pre$, we have
  \[
    \sem{\prog}(\stt)(\overline{\post}) = \sum_{\trace \in \lang(\prog)} \sem{\trace}(\stt)(\overline{\post})
    \leq \sum_{\trace \in \lang(\fsms)} \sem{\trace}(\stt)(\overline{\post})
    \leq \stt(\errp)
  \]
  by the trace inclusion and failure probability upperbound conditions.
\end{proof}

\subsection{Proof of \cref{thm:proofrule}}
\begin{proof}
  Let $\stt \in \pre$ be any input state satisfying the pre-condition. For
  each automaton $\fsm_i$, the pre-condition inclusion condition implies that
  $\stt \in \pre \subseteq \labelp_i(\qentry_i)$ and so
  \cref{thm:labeled-sound} gives
  \[
    \sum_{\trace \in \lang(\fsm_i)} \sem{\trace}(\stt)(\overline{\labelp_i(\qexit_i)})
    \leq \stt(\labele_i(\qexit_i)) .
  \]
  By the post-condition inclusion property, we also have
  $\overline{\post} \subseteq \overline{\labelp_i(\qexit_i)}$ and so
  \[
    \sum_{\trace \in \lang(\fsm_i)} \sem{\trace}(\stt)(\overline{\post})
    \leq \stt(\labele_i(\qexit_i)) .
  \]
  Finally we can conclude by the trace inclusion and failure probability
  upperbound conditions:
  \[
    \sem{\prog}(\stt)(\overline{\post})
    = \sum_{\trace \in \lang(\prog)} \sem{\trace}(\stt)(\overline{\post})
    \leq \sum_{i = 1}^n \sum_{\trace \in \lang(\fsm_i)} \sem{\trace}(\stt)(\overline{\post})
    \leq \sum_{i = 1}^n \stt(\labele_i(\qexit_i))
    \leq \stt(\errp) .
  \]
\end{proof}

\subsection{Proof of \cref{thm:labeled-sound}}
\begin{proof}
  We first consider the simpler case when $\fsm$ has no directed loops. In such
  an automaton, the valuation of the deterministic variables $\varsd$ at any
  node $q_i$ is the same for all execution traces starting at $\qentry$ with
  initial state $\stt_0$ and reaching $q_i$; we write $v_i$ for these
  valuations, and we write $v_{in}$ and $v_{ac}$ for these valuations at
  $\qentry$ and $\qexit$, respectively.

  We need to work with a slightly more general version of well-labeled automata,
  where the initial and final nodes are labeled by a function of the
  deterministic variables $\varsd$. We show that for any initial state $\stt_0
  \in \labelp(\qentry)$, we have
  \[
    \sum_{\trace \in \lang(\fsm)} \mu(\overline{\labelp(\qexit)})
    \leq v_{ac}(\labele(\qexit)) - v_{in}(\labele(\qentry)) .
  \]
  where $\mu = \sem{\trace}(\stt_0)$ is the output distribution.  Note that when
  $\labele(\qentry) = 0$ and $\labele(\qexit)$ is labeled by input variables
  $\varsi$ only, we recover:
  \[
    \sum_{\trace \in \lang(\fsm)} \mu(\overline{\labelp(\qexit)})
    \leq \stt_0(\labele(\qexit)) .
  \]
  The proof is by induction on the number $k$ of branches (i.e., nodes with two
  outgoing e

  In the base case $k = 0$, the automaton represents a sequential composition
  $\stmt_1; \cdots; \stmt_n$. Let the corresponding nodes be $q_0, \dots, q_n$,
  with $q_0 = \qentry$ and $q_n = \qexit$. Since the probability labels
  $\labelp(q_i)$ depend on deterministic variables only, given any initial state
  $s_0 \in \labele(q_0)$ there is a sequence of valuations $v_0, \dots, v_n$ for
  the deterministic variables such that the deterministic variables $\varsd$ of
  any state with non-zero probability in $\sem{\stmt_1; \cdots ;
  \stmt_{i}}(s_0)$ are set to $v_i$, with $v_0 = s_0(\varsd)$. By the
  well-labeled condition, we have:
  \[
    \hoare{v_i(\labele(q_i)) - v_{i - 1}(\labele(q_{i - 1})) }
    {\labelp(q_{i - 1}) \land \varsd = v_{i - 1}}
    {\stmt_i}
    {\labelp(q_i) \land \varsd = v_i}
  \]
  By the sequential composition rule of the union bound logic, we have
  \[
    \hoare{v_n(\labele(\qexit)) - v_0(\labele(\qentry))}
    {\labelp(\qentry) \land \varsd = v_0}
    {\stmt_1 ; \cdots ; \stmt_n}
    {\labelp(\qexit)}
  \]
  By definition, $v_n = v_{ac}$ and $v_0 = v_{in}$ so we have
  \[
    \hoare{v_{ac}(\labele(\qexit)) - v_{in}(\labele(\qentry))}
    {\labelp(\qentry) \land \varsd = v_{in}}
    {\stmt_1 ; \cdots ; \stmt_n}
    {\labelp(\qexit)}
  \]
  and we conclude by soundness of the union bound logic.

  Now, suppose there are $k > 0$ branches in $\fsm$. Starting from the initial
  node $\qentry$, let the first branching node be $q_r$ with outgoing
  edges to $q_t$ and $q_f$, labeled by $\assume(\bexpr)$ and $\assume(\neg
  \bexpr)$ respectively. We let $\fsm_0$ be the sub-automaton with
  initial node $\qentry$ and final node $q_r$; note that this automaton is a
  single path along nodes $q_0 = \qentry, q_1, \dots, q_r$ with edge labels
  $\stmt_1, \dots, \stmt_r$. Letting $\mu_r = \sem{A_0}(\stt_0)$ be the output
  distribution of this automaton, the base case yields
  \[
    \sum_{\trace \in \lang(\fsm_0)} \mu_r(\overline{\labelp(q_r)})
    \leq v_r(\labele(q_r)) - v_{in}(\labele(\qentry)) .
  \]
  Now, we consider the rest of the automaton.  Let $\fsm_t$ be the sub-automaton
  of all reachable nodes starting from $q_t$, and let $\fsm_f$ be the
  sub-automaton starting from $q_f$. Note that $\fsm_t$ and $\fsm_f$ are both
  well-labeled automata with entry nodes $q_t$ and $q_f$ respectively, and have
  at most $k - 1$ branching nodes each.  Since the $\assume$ statements do not
  modify this variables, $v$ is also the deterministic valuation of $\varsd$ at
  the entry nodes of $q_t$ and $q_f$. By induction, for any state $\stt \in
  \labelp(q_b)$ such that $\stt(\varsd) = v_r$ and $b \in \{ t, f \}$ we have
  \[
    \sum_{\trace \in \lang(\fsm_b)} \sem{\trace}(\stt)(\overline{\labelp(\qexit)})
    \leq v_{ac}(\labele(\qexit)) - v_r(\labele(q_b))
  \]
  To combine our bounds for $\fsm_0, \fsm_t, \fsm_f$ back together, we assume
  that the state labels at the branching node $q_r$ satisfy
  \[
    \labelp(q_r) \subseteq \labelp(q_t) \cap \{ \stt \mid \stt(\bexpr) \}
    \qquad\text{and}\qquad
    \labelp(q_r) \subseteq \labelp(q_f) \cap \{ \stt \mid \stt(\neg \bexpr) \} .
  \]
  If either fails, then the edge condition for well-labeled automata ensures
  that $v_{ac}(\qexit) - v_r(q_r) \geq v_r(q_b) - v_r(q_r) \geq 1$ and so
  $v_{ac}(\qexit) - v_{in}(\qentry) \geq 1$, and our target bound is trivial.
  Now, every trace in $\lang(\fsm)$ is of the form $q_0, \dots, q_r, q_b, \dots,
  \qexit$ for $b = t$ or $b = f$; since $\fsm$ has no loops, the trace after
  $q_r$ is entirely contained in $\fsm_b$.

  Now, we decompose $\mu_r = \mu_t + \mu_f + \mu_{\mathit{err}}$ into three pieces:
  \begin{itemize}
    \item $\mu_{\mathit{err}}$ is the restriction to states not in $\labelp(q_r)$;
    \item $\mu_{t}$ is the restriction to states in $\labelp(q_r)$ with
      $\bexpr$ is true;
    \item $\mu_{f}$ is the restriction to states not in $\labelp(q_r)$ with
      $\bexpr$ false.
  \end{itemize}
  Note that all states in the support of $\mu_t$ and $\mu_f$ lie in
  $\labelp(q_t)$ and $\labelp(q_f)$, respectively. Since $\fsm_0, \fsm_t,
  \fsm_f$ are all loop free with at most $k - 1$ branches, applying the
  induction hypothesis gives
  \begin{align}
    \sum_{\trace \in \lang(\fsm)} \sem{\trace}(\stt_0) (\overline{\labelp(\qexit)})
    &\leq \dbind(\mu_t, \sem{\fsm_t})(\overline{\labelp(\qexit)})
    + \dbind(\mu_f, \sem{\fsm_f})(\overline{\labelp(\qexit)})
    + |\mu_{\mathit{err}}|
    \tag{semantics}
    \\
    &\leq |\mu_t| \cdot (v_{ac}(\labele(\qexit)) - v_r(\labele(q_t)))
    + |\mu_f| \cdot (v_{ac}(\labele(\qexit)) - v_r(\labele(q_f)))
    \notag
    \\
    &+ (v_r(\labele(q_r)) - v_{in}(\labele(\qentry)))
    \tag{IH}
    \\
    &\leq |\mu_t| \cdot (v_{ac}(\labele(\qexit)) - v_r(\labele(q_r)))
    + |\mu_f| \cdot (v_{ac}(\labele(\qexit)) - v_r(\labele(q_r)))
    \notag
    \\
    &+ (v_r(\labele(q_r)) - v_{in}(\labele(\qentry)))
    \tag{$\labele(q_r) \leq \labele(q_b)$}
    \\
    &= (|\mu_t| + |\mu_f|) \cdot v_{ac}(\labele(\qexit))
    + (1 - |\mu_t| - |\mu_f|) \cdot v_r(\labele(q_r))
    - v_{in}(\labele(\qentry))
    \notag
    \\
    &\leq v_{ac}(\labele(\qexit)) - v_{in}(\labele(\qentry))
    \tag{$\labele(q_r) \leq \labele(\qexit)$}
  \end{align}
  Finally, we consider the general case where $\fsm$ may have directed loops.
  The basic idea is to reduce to the acyclic case we have just considered by
  performing finite unrollings of $\fsm$. The argument uses standard
  constructions on automata and regular expressions (see, e.g., prior work
  giving an algebraic view of program schemes~\citep{AngusKozen}); we just
  sketch the proof
  here. Let $C$ be the set of all statements appearing in $\fsm$. We can view
  $\fsm$ as a deterministic automaton $D$ over the alphabet $\Sigma = Q \times Q
  \times C$ by viewing each transition $q_i \lto{\stmt} q_j$ as a transition on
  letter $(q_i, q_j, \stmt)$. To make this a deterministic automaton, we can add
  a new dead node $q_{\mathit{dead}}$ with a self loop on all letters, and add a
  transition from every existing node $q \in Q$ to $q_{\mathit{dead}}$ on all letters
  that don't appear as outgoing transitions from $q$ in $\fsm$. Then, we mark
  $\qexit$ as the sole accepting node in $D$. Now, the language $\lang_D$
  accepted by $D$ is evidently equal to the language $\lang(\fsm)$ of all traces
  in $\fsm$.

  By Kleene's theorem, this language can also be represented as a regular
  expression $R$ over $\Sigma$. Now, we can define finite unrollings in terms of
  $R$. For $n \in \mathbb{N}$, let $R_n$ be the regular expression obtained by
  repeatedly replacing each subterm $r^*$ where $r$ is star-free by the finite
  approximation $1 + r + \cdots + r^n$; the order of replacement will not matter
  for our purposes. Now $\lang(R) = \cup_n \lang(R_n)$, and $\lang(R_i)
  \subseteq \lang(R_j)$ for all $i \leq j$. Again by Kleene's theorem, the
  language of each $R_n$ is recognized by a deterministic finite automaton; let
  $D_n$ be a minimal automaton for each $R_n$.

  Now since the language of $R_n$ is finite and $D_n$ is minimal, the only
  cycles in $D_n$ must occur as self-loops on a single (non-accepting) dead
  node $p_n$. All transitions from the initial node to non-dead nodes must be
  labeled by $(\qentry, -, -)$. There are at most two such transitions since
  there are at most two transitions out of $\qentry$ in the original automaton
  $\fsm$, and if there are two transitions they must be of the form $(\qentry,
  q_t, \assume(\bexpr))$ and $(\qentry, q_f, \assume(\neg \bexpr))$. By a
  similar inductive argument, each non-dead node has at most two outgoing
  transitions to non-dead nodes and if there are two transitions, they are of
  the form $(q, q_t, \assume(\bexpr))$ and $(q, q_f, \assume(\neg \bexpr))$.
  Thus, we can associate each node $p$ in $D_n$ with a node $a(p)$ in $\fsm$
  and convert $D_n$ to a well-labeled acyclic automaton $\fsm_n$ by labeling
  $\labelp(p) \triangleq \labelp(a(p))$ and $\labele(p) \triangleq
  \labele(a(p))$ and removing the dead node; note that $\lang(\fsm_n) =
  \lang(D_n) = \lang(R_n)$.

  Finally, let $\stt$ be any initial state in $\labelp(\qentry)$. By reduction
  to the acyclic case, we have
  \[
    \sum_{\trace \in \lang(\fsm_n)}
    \sem{\trace}(\stt)(\overline{\labelp(\qexit)}) \leq \stt(\labele(\qexit))
  \]
  for every $n \in \mathbb{N}$. Since the left-hand side is increasing in $n$
  and bounded above by $\stt(\labele(\qexit))$, the limit exists and we have
  \[
    \lim_{n \to \infty} \sum_{\trace \in \lang(\fsm_n)}
    \sem{\trace}(\stt)(\overline{\labelp(\qexit)}) \leq \stt(\labele(\qexit)) .
  \]
  But since $\lang(\fsm_n)$ is increasing and $\cup_n \lang(\fsm_n) =
  \lang(\fsm)$, we conclude
  \[
    \sum_{\trace \in \lang(\fsm)}
    \sem{\trace}(\stt)(\overline{\labelp(\qexit)}) \leq \stt(\labele(\qexit)) .
  \]
\end{proof}
\subsection{Proof of \cref{thm:sound-alg}}
\begin{proof}
  We show by induction on the derivation of rules used by the algorithm that the
  automaton set $\fsms$ is always well-labeled, and each automaton in $\fsms$
  satisfies the pre- and post-condition inclusion properties in
  \cref{thm:proofrule}.  The base case, rule $\cinit$, is trivial. Each trace
  added to the automaton set by $\ctrace$ is well-labeled by construction, and
  the simplification rules $\cgen$ and $\cmerge$ keep the automaton set
  well-labeled by definition (\cref{lem:sound-merge}). Finally, if the
  termination rule $\ccorr$ fires, then the automata are well-labeled and
  satisfy pre- and post-condition inclusion properties by induction, and the
  side-conditions guarantee the trace inclusion and failure probability
  upperbound conditions. Therefore by \cref{thm:proofrule}, the accuracy
  judgment is valid.
\end{proof}

\subsection{Proof of \cref{thm:enc}}
We first begin by proving the following lemma, which captures correctness of the encoding of $\trace$.
Specifically, the following lemma formalizes the correspondence between models of the encoding and the support of the output distribution of $\trace$:
we show that for any initial state $\stt$, the models of the logical encoding correspond to a set of states $R_\stt$ and a failure probability $c$ such that
$\sem{\trace}(\stt)(\overline{R_\stt}) \leq c$.

\begin{lemma}[Soundness of $\encodeD$]\label{lem:enc}
Fix trace $\trace = \stmt_1;\cdots;\stmt_n$.
Let \[\varphi \triangleq \cost_0 = 0 \land \halt_0 = \false \land \bigwedge_{i=1}^n \encodeD(i,\stmt_i)\]
where all uninterpreted functions resulting from distribution axiom families have been given a fixed interpretation.
Fix a state $\stt \in \stts$.
Let $\model_1,\ldots,\model_m$ be the set of models of $\varphi$
such that $\stt(\varsi) = \model_i(\varsi)$ and $\model_i(\halt_n) = \false$, for all $i\in [1,m]$.
Let \[R_\stt = \{\stt' \mid  \stt'(\vars) = \model_i (\vars) \}\]
Then, for any $\model_s \models \varphi$ such that $\model_s(\varsi) = \stt(\varsi)$, we have $\sem{\trace}(\stt)(\overline{R_\stt}) \leq \model_\stt(\cost_n)$.
\end{lemma}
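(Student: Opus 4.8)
The plan is to prove the bound by induction on the length $n$ of the trace $\trace = \stmt_1; \cdots; \stmt_n$, strengthening the statement to hold for every prefix. Writing $\trace_i = \stmt_1; \cdots; \stmt_i$, $\mu_i = \sem{\trace_i}(\stt)$, and $\varphi_i = (\cost_0 = 0 \land \halt_0 = \false \land \bigwedge_{j=1}^i \encodeD(j,\stmt_j))$, I let $R^i_\stt$ be the set of states whose program-variable valuation agrees with some model of $\varphi_i$ that fixes $\varsi$ to $\stt(\varsi)$ and sets $\halt_i = \false$. The inductive claim is that $\mu_i(\overline{R^i_\stt}) \leq \model(\cost_i)$ for every model $\model \models \varphi_i$ with $\model(\varsi) = \stt(\varsi)$; the lemma is the case $i = n$. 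A preliminary observation I would record first is that, since $\trace$ is a single linear trace in \abr{SSA} form, the deterministic variables $\varsd$ evolve as fixed functions of the inputs $\varsi$, so all models agreeing with $\stt$ on $\varsi$ assign the same values to $\varsd$; because each $\axub$ is a function of $\varsd$, the accumulated cost $\cost_i$ takes a single value across all such models, making $\model(\cost_i)$ well-defined and justifying the ``for any $\model_s$'' phrasing.

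The base case $i = 0$ is immediate: $\mu_0 = \dunit(\stt)$, $R^0_\stt = \stts$, and $\model(\cost_0) = 0$. For the inductive step I would case on the shape of $\stmt_{i+1}$ using \cref{fig:sem} and \cref{fig:enc}. For an assignment $\Assg{\var}{\expr}$, $\mu_{i+1}$ is the deterministic push-forward of $\mu_i$ along $\stt' \mapsto \stt'[\var \mapsto \stt'(\expr)]$, the encoding leaves $\cost$ and $\halt$ unchanged, and $R^{i+1}_\stt$ is the image of $R^i_\stt$ under this same map, so $\mu_{i+1}(\overline{R^{i+1}_\stt}) = \mu_i(\overline{R^i_\stt}) \leq \model(\cost_i) = \model(\cost_{i+1})$. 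For an assume $\Assm{\bexpr}$, $\mu_{i+1}$ is the restriction of $\mu_i$ to the states satisfying $\bexpr$ and $R^{i+1}_\stt = R^i_\stt \cap \{\stt' \mid \stt'(\bexpr)\}$ (since $\halt_{i+1} = \halt_i \lor \neg\bexpr$), whence $\mu_{i+1}(\overline{R^{i+1}_\stt}) \leq \mu_i(\overline{R^i_\stt}) \leq \model(\cost_i) = \model(\cost_{i+1})$ using $\mu_{i+1} \leq \mu_i$ pointwise.

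The crux is the sampling case $\Rand{\var}{\dexpr}$, where I would unfold the bind in \cref{fig:sem} and split the failure mass of $\mu_{i+1}$ into two sources. A state $\stt'[\var \mapsto x]$ lies outside $R^{i+1}_\stt$ only if either $\stt'$ already lay outside $R^i_\stt$, or the sampled value makes the axiom assumption fail, i.e.\ $\axasm$ holds at $\stt'[\var \mapsto x]$ (recall $\halt_{i+1} = \halt_i \lor \axasm$). This gives
\begin{align*}
  \mu_{i+1}(\overline{R^{i+1}_\stt}) \leq \mu_i(\overline{R^i_\stt}) + \sum_{\stt' \in R^i_\stt} \mu_i(\stt') \cdot \pr_{x \sim \sem{\dexpr}(\stt')}[\axasm].
\end{align*}
By soundness of the distribution axioms (\cref{thm:soundness-axioms}), $\pr_{x}[\axasm] \leq \stt'(\axub)$, and by the preliminary observation every $\stt' \in R^i_\stt$ agrees with $\model$ on $\varsd$, so $\stt'(\axub) = \model(\axub) = \model(\cost_{i+1}) - \model(\cost_i)$. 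Since $\mu_i(R^i_\stt) \leq 1$ and $\model(\axub) \geq 0$, the second term is at most $\model(\cost_{i+1}) - \model(\cost_i)$, and combining with the inductive hypothesis $\mu_i(\overline{R^i_\stt}) \leq \model(\cost_i)$ yields $\mu_{i+1}(\overline{R^{i+1}_\stt}) \leq \model(\cost_{i+1})$, closing the induction.

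I expect the main obstacle to be exactly this sampling step: bridging the static semantics of models (which select a single value of $\var$ satisfying $\neg\axasm$) and the probabilistic sampling semantics (which integrates over the whole support of $\sem{\dexpr}(\stt')$). The argument only goes through because the axiom bounds the probability of the ``bad'' event $\axasm$ uniformly by $\axub$ and because $\axub$ is a deterministic quantity, letting the per-statement upper bounds accumulate additively into $\cost_n$ exactly as a union bound. The remaining care is bookkeeping around \abr{SSA} form and the precise definition of $R^i_\stt$, ensuring the push-forward and restriction identities in the assignment and assume cases are exact rather than merely inequalities.
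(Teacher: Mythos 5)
Your proposal is correct and follows essentially the same route as the paper's proof: induction on the trace length with a case analysis on the last statement (assignment, assume, sampling), a preliminary observation that the cost variables are deterministic across all models agreeing on the inputs, and a union-bound argument combined with soundness of the distribution axioms (\cref{thm:soundness-axioms}) to handle the sampling case. The only differences are presentational---you spell out the union bound for sampling as an explicit sum over states and phrase the induction over prefixes rather than peeling off the last statement---neither of which changes the substance of the argument.
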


\begin{proof}
  First, we note that  all models $\model_s \models \varphi$
  such that $\model_s(\varsi) = s(\varsi)$ agree on the value of $\omega_i$.
  This is because the constraints $\omega_i$ are functions of $\varsi$.
  Second, note that by construction, there is always $\model_s \models \varphi$---i.e., it is never unsatisfiable.

  We proceed by induction on the length of $\trace$.
  For $n = 1$, we have three cases. Fix an $\stt$ as in lemma statement.
  \begin{itemize}
  \item \emph{Case 1:} $\trace = \var \gets \expr$.

  We have $\varphi \triangleq \var = \expr \land \cost_1 = 0 \land \halt_1 = \false$ (after simplification).
  From $\varphi$, $\model_s(\cost_1) = 0$.
  Therefore, lemma states:
  $\sem{\var \gets \expr}(\stt)(\overline{R_\stt}) \leq 0$.
  Suppose this does not hold, then, by definition of $\sem{\var \gets \expr}$,
  there is a state $\stt' \in \stts \setminus R_\stt$ such that
  $\stt' = \stt[v \mapsto \stt(\expr)]$.
  However, by definition of $\varphi$, $\stt' \in R_\stt$,
  since $\model_i(\vars) = \stt'(\vars)$, for some $i$.

  \item \emph{Case 2:} $\trace = \assume(\bexpr)$.

  We have $\varphi \triangleq \cost_1 = 0 \land (\halt_1 =  \neg \bexpr)$ (after simplification).
  From $\varphi$, $\model_s(\cost_1) = 0$.
  Therefore, lemma states:
    $\sem{\assume(\bexpr)}(\stt)(\overline{R_\stt}) \leq 0$.
    Suppose this does not hold, then, by definition of $\sem{\assume(\bexpr)}$,
    we have $\stt \in \stts \setminus R_\stt$ and $\stt(\bexpr) = \true$.
    However, by definition of $\varphi$, $\stt \in R_\stt$,
    iff $s(\bexpr) = \true$.

  \item \emph{Case 3:} $\trace = \var \sim \dexpr$.

  We have $\varphi \triangleq  \cost_1 = \axub \land \halt_1 = \axasm$ (after simplification).
  Lemma states that
  $\sem{\var \sim \dexpr}(\stt)(\overline{R_\stt}) \leq \model_s(\axub)$.
  This follows from the definition of a distribution axiom: that
  $\pr_{\var \sim \dexpr}[\axasm] \leq \axub$ is true for any valuation of $\vars \setminus \{\var\}$.
  \end{itemize}

Assume that lemma holds for traces of length $n$.
We show that it also holds for $n+1$,
where $\trace'$ is a trace of length $n$.

\begin{itemize}
\item \emph{Case 1:} $\trace = \trace'; \var \gets \expr$.

The encoding is $\varphi \triangleq \varphi' \land \var = \expr \land \cost_{n+1} = \cost_{n} \land \halt_{n+1} = \halt_{n}$.

Let $\model'_1, \ldots, \model_m'$ and $R_\stt'$ be defined for $\varphi'$ and $\trace'$, as per lemma statement.
By hypothesis, $\sem{\trace'}(s)(\overline{R_\stt'}) \leq M_s(\omega_n)$.
By semantics of assignment, we have
$\sem{\trace'; \var \gets \expr}(s)(\overline{X}) \leq \model_s(\omega_{n+1})$,
where $X = \{\stt \mid \stt' \in R_\stt', \stt = \stt'[\var \gets \stt'(\expr)]\}$.
Observe that $X = R_\stt$:
by definition of $\varphi$, its models are a subset of $\{\model_1',\ldots,\model_m'\}$ such that $\var = \expr$.
It then follows that $\sem{\trace}(\stt)(\overline{R_\stt}) \leq \model_s(\cost_{n+1})$.


\item \emph{Case 2:} $\trace = \trace'; \assume(\bexpr)$.

The encoding is $\varphi \triangleq \varphi' \land \cost_n = \cost_{n-1} \land (\halt_n = (\halt_{n - 1} \lor \neg \bexpr))$.
Let $\model'_1, \ldots, \model_m'$ and $R_\stt'$ be defined for $\varphi'$ and $\trace'$, as per lemma statement.
By hypothesis, $\sem{\trace'}(s)(\overline{R_\stt'}) \leq M_s(\omega_n)$.
We know that models $\{\model_i\}$ of $\varphi$ are a subset of $\{\model'_i\}$
where $\bexpr$ is $\true$.
Therefore $\overline{R_\stt} \supseteq \overline{R_\stt'}$.
But we have that all states in $\overline{R_\stt} \setminus \overline{R_\stt'}$
are those were $\bexpr$ is false.
By definition of $\sem{\assume(\bexpr)}$, all those states are assigned probability 0.
Therefore, $\sem{\trace}(\stt)(\overline{R_\stt}) \leq \model_s(\cost_{n+1})$.

\item \emph{Case 3:} $\trace = \trace'; \var \sim \dexpr$.

The encoding is $\varphi \triangleq \varphi' \land \cost_{n+1} = \cost_n + \axub \land \halt_{n+1} = (\halt_n \lor \axasm)$.
Let $\model'_1, \ldots, \model_m'$ and $R_\stt'$ be defined for $\varphi'$ and $\trace'$, as per lemma statement.
By hypothesis, $\sem{\trace'}(s)(\overline{R_\stt'}) \leq M_s(\omega_n)$.
Let $X$ be the set of all states that satisfy $\neg \axasm$.
From $\varphi$, we know that $R_\stt = R_\stt' \cap X$.
By the union bound and the distribution axiom,
$\sem{\trace'; \var \sim \dexpr}(\stt)(\overline{R_\stt'} \cup \overline{X}) \leq M_s(\omega_{n+1})$


\end{itemize}

\end{proof}
Now, correctness of \cref{thm:enc} follows from \cref{lem:enc}.

\subsection{Proof of \cref{thm:soundness-axioms}}
\begin{proof}
  Soundness of the Bernoulli and Uniform axioms is straightforward. The Laplace
  axiom is \citet[Lemma 5]{DBLP:conf/icalp/BartheGGHS16}. The exponential axiom
  follows from the Laplace axiom, noting that
  \[
    \stt(\olap(\var_1, \var_2))(z) \leq 2 \cdot \stt(\lap(\var_1, \var_2))(z)
  \]
  for all $z > \stt(v_1)$, so the failure probability for the exponential axiom
  is at most twice the failure probability for the Laplace axiom.
\end{proof}

\subsection{Proof of \cref{thm:interpolant}}

\begin{proof}
  Notice that by construction
  we have $\labelp(\qentry) \triangleq \true$ and $\labele(\qentry) = 0$.

  We first show that $\labelp(\qexit) \Rightarrow \post$.
  By construction of encoding:
  $\bigwedge_{i=1}^n \varphi_i \Rightarrow (\neg \halt_n \Rightarrow \post)$ is valid.
  By definition of sequence interpolants:
  $\psi_n \Rightarrow (\neg \halt_n \Rightarrow \post)$ is valid.
  Therefore $\psi_n[\halt_n \mapsto \false] \Rightarrow \post$ is valid.
  Since $\cost_n$ does not appear in $\post$,
  $\exists \cost_n \ldotp \psi_n[\halt_n \mapsto \false] \Rightarrow \post$ is valid.

  Second, we show that $\pre \Rightarrow \labele(\qexit) \leq \errp$.
  By construction,
  $\labele(\qexit) \triangleq f(\varsi)$, where $f(\varsi)$
  is the function that returns, for any valuation of $\varsi$,
  the largest value of $\omega_n$ that satisfies $\exists \vars \setminus \varsi \ldotp \exists \halt_n \ldotp \psi_n$.
  By definition of sequence interpolants:
  $\psi_n \Rightarrow \cost_n \leq \beta$.
  Since $\beta$ is over $\varsi$,
  we have
$\exists \vars \setminus \varsi \ldotp \exists \halt_n \ldotp \psi_n \Rightarrow \omega_n \leq \beta$ is valid.
Pick a model $\model$ of $\pre$ with the largest
possible $\omega_n$ interpretation that satisfies $\exists \vars \setminus \varsi \ldotp \exists \halt_n \ldotp \psi_n$.
By construction of the encoding this model exists,
since any model satisfying $\pre$ can be extended to a model of $\bigwedge_i \varphi_i$.
It follows that this model satisfies $\omega_n \leq \beta$.

Finally, we need to show that for every edge $q_i \lto{\stmt} q_j$,
where $j = i+1$,
we have
\[
  \hoare{\wpprob(\labele(q_j), \stmt) - \labele(q_i)}{\labelp(q_i)}{\stmt}{\labelp(q_j)}
\]
We break the proof by statement type:
\begin{itemize}
  \item \emph{Assignment:}
    From definition of seq. interpolants, we know the following is valid
    \[
    \psi_i \land \var = \expr \land \cost_j = \cost_i \land \halt_j = \halt_i
    \Rightarrow \psi_j
    \]
    Set $\halt_i$ to $\false$ on left-hand side of implication.
    The following is valid:
    \[
    \psi_i[\halt_i \mapsto \false] \land \var = \expr \land \cost_j = \cost_i \land \halt_j = \false
    \Rightarrow \psi_j
    \]
    It follows that we can set $h_j$ to $\false$ on
    both sides, resulting in the following valid statement:
    \[
    \psi_i[\halt_i \mapsto \false] \land \var = \expr \land \cost_j = \cost_i
    \Rightarrow \psi_j[\halt_j \mapsto \false]
    \]

    Weaken rhs by existentially quantifying $\cost_j$.
    The following is valid:
    \[
    \psi_i[\halt_i \mapsto \false] \land \var = \expr \land \cost_j = \cost_i
    \Rightarrow \exists \cost_j \ldotp \psi_j[\halt_j \mapsto \false]
    \]
    Since $\cost_i$ is, by encoding, a function of $\varsi$, we can project it out on the lhs. The following is valid:
    \[
    (\exists \cost_i \ldotp \psi_i[\halt_i \mapsto \false]) \land \var = \expr \land \cost_j = \cost_i
    \Rightarrow \exists \cost_j \ldotp \psi_j[\halt_j \mapsto \false]
    \]
    As a result, we can drop the $\cost_j = \cost_i$
    constraint, resulting in the following valid statement:
    \[
    (\exists \cost_i \ldotp \psi_i[\halt_i \mapsto \false]) \land \var = \expr
    \Rightarrow \exists \cost_j \ldotp \psi_j[\halt_j \mapsto \false]
    \]
    This implies that the following Hoare triple, since
    $\labelp(q_i) \equiv \exists \cost_i \ldotp \psi_i[\halt_i \mapsto \false]$
    and $\labelp(q_j) \equiv \exists \cost_j \ldotp \psi_j[\halt_j \mapsto \false]$:
    \[\hoare{c}{\labelp(q_i)}{\stmt}{\labelp(q_j)}\]
    for any $c \in [0,1]$.

    It now remains to show that
    $\wpprob(\labele(q_j), \stmt) - \labele(q_i) \geq 0$,
    for any state $\stt$ in $\labelp(q_i)$.
    From our constraint, for any values of $\cost_i$ and $\varsd$
    that satisfy
    $\exists \vars \setminus \varsd \ldotp \exists \halt_i \ldotp \psi_i$,
    the same values where $\cost_j = \cost_i$ also satisfy
    $\exists \vars \setminus \varsd \ldotp \exists \halt_j \ldotp \psi_j$.
    Therefore, it is always the case that
    $\wpprob(\labele(q_j), \stmt) - \labele(q_i) \geq 0$

    \item \emph{Sample:}
    Following a similar simplification path to the one we used
    for assignment statements,
    we arrive at the following valid statement:
    \[
    (\exists \cost_i \ldotp \psi_i[\halt_i \mapsto \false]) \land \neg \axasm
    \Rightarrow \exists \cost_j \ldotp \psi_j[\halt_j \mapsto \false]
    \]
    Since we know that $\pr[\axasm] \leq \axub$, from the applied axiom,
    this implies that the following Hoare triple, since
    $\labelp(q_i) \equiv \exists \cost_i \ldotp \psi_i[\halt_i \mapsto \false]$
    and $\labelp(q_j) \equiv \exists \cost_j \ldotp \psi_j[\halt_j \mapsto \false]$:
    \[\hoare{\axub}{\labelp(q_i)}{\stmt}{\labelp(q_j)}\]

    It now remains to show that
    $\wpprob(\labele(q_j), \stmt) - \labele(q_i) \geq \axub$,
    for any state $\stt$ in $\labelp(q_i)$.

    Following argument from base case of \cref{lem:enc},
    we establish the specification.
    From our constraint, for any values of $\cost_i$ and $\varsd$
    that satisfy
    $\exists \vars \setminus \varsd \ldotp \exists \halt_i \ldotp \psi_i$,
    the same values where $\cost_j = \cost_i + \axub$ also satisfy
    $\exists \vars \setminus \varsd \ldotp \exists \halt_j \ldotp \psi_j$.
    Therefore, it is always the case that
    $\wpprob(\labele(q_j), \stmt) - \labele(q_i) \geq \axub$

    \item \emph{Assume:} Similar to sampling statements.

\end{itemize}

\end{proof}

\section{A Simplified Encoding} \label{app:enc}
The encoding in \cref{sec:interpolation} is designed for full generality:
it assumes that a trace may be infeasible, which is why it introduces the auxiliary variables $\halt_i$ to track states that cannot make it through the trace.
In the case where the trace is feasible for some input states,
the encoding and interpolation problems become much simpler by doing away with the auxiliary $\halt_i$ variables.
The simplified version of $\encodeD$ is shown in \cref{fig:encs}.

Henceforth we assume that for a trace $\trace$,
all Boolean expressions appearing in assume statements are over $\varsd$.
Second, we assume that there is a state $\stt$
such that $\trace(\stt)$ is a distribution.

\begin{figure}[t]
  \begin{align*}
    \encodeD(i, \var \gets \expr) &\triangleq \var = \expr \land \cost_i = \cost_{i-1} \\
    \encodeD(i, \assume(\bexpr)) &\triangleq \bexpr \land \cost_i = \cost_{i-1}  \\
    \encodeD(i, \var \sim \dexpr) &\triangleq \neg \axasm \land \cost_i = \cost_{i-1} + \axub
    \qquad \text{given axiom family: } \pr_{\var \sim \dexpr} [\axasm] \leq \axub
  \end{align*}
  \caption{Simplified logical encoding of statement semantics for feasible traces}
  \label{fig:encs}
\end{figure}

\begin{theorem}[Soundness of simplified encoding]\label{thm:simpenc}
  The specification
  $\hoare{\errp}{\pre}{\stmt_1,\ldots,\stmt_n}{\post}$
  is valid if the following formula is satisfiable:
  \begin{align}\label{eq:enc}
  \forall \vars, \cost_i \ldotp \left(\pre \land \cost_0 = 0 \land \bigwedge_{i=1}^n \encodeD(i,\stmt_i)\right) \Longrightarrow (\cost_{n} \leq \beta \land \post)
  \end{align}
\end{theorem}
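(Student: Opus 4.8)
The plan is to adapt the argument for \cref{thm:enc}, which rested on the correspondence \cref{lem:enc} between models of the logical encoding and the support of the output distribution. First I would establish a simplified analogue of \cref{lem:enc}: fixing the trace $\trace = \stmt_1;\cdots;\stmt_n$, the formula $\varphi \triangleq \cost_0 = 0 \land \bigwedge_{i=1}^n \encodeD(i,\stmt_i)$ (with the axiom functions given fixed interpretations) and an input state $\stt$, collect the models $\model_1,\ldots,\model_m$ of $\varphi$ that agree with $\stt$ on $\varsi$, and set $R_\stt \triangleq \{\stt' \mid \stt'(\vars) = \model_i(\vars) \text{ for some } i\}$. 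The claim is that, whenever $\varphi$ is satisfiable under $\stt$, every such model assigns the same value $\model_\stt(\cost_n)$, and $\sem{\trace}(\stt)(\overline{R_\stt}) \leq \model_\stt(\cost_n)$. That all these models agree on each $\cost_i$ is immediate, since each $\axub$ is a function of $\varsi$ only, so the accumulated failure probability is pinned down once $\varsi$ is fixed.

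Next I would prove this lemma by induction on the length of $\trace$, following the structure of \cref{lem:enc}, splitting base and inductive steps by statement type. The assignment and sampling cases are essentially unchanged: assignment contributes no failure probability and refines $R_\stt$ exactly, while sampling adds $\axub$ and restricts $R_\stt$ to states satisfying $\neg \axasm$, so the union bound together with soundness of the distribution axioms (\cref{thm:soundness-axioms}) gives the bound. The genuinely different case is $\assume(\bexpr)$: the simplified encoding conjoins $\bexpr$ directly rather than updating a blocking flag $\halt_i$. Because every assume predicate is over $\varsd$ by hypothesis, and the deterministic variables evolve deterministically from $\stt$, the truth of $\bexpr$ is fixed by $\stt$; when it holds, the assume changes neither the distribution (since $\sem{\assume(\bexpr)}(\stt)=\dunit(\stt)$) nor the failure probability, and $R_\stt$ is unchanged, so the inductive bound is preserved.

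With the lemma in hand, I would derive the theorem. Suppose the displayed formula is satisfiable, i.e., some interpretation of the axiom functions makes the universally quantified implication valid, and fix any $\stt \in \pre$. If the trace is infeasible from $\stt$ — some assume fails, again a deterministic property of $\varsd$ — then $\sem{\trace}(\stt)$ is the zero sub-distribution and $\sem{\trace}(\stt)(\overline{\post}) = 0 \leq \stt(\errp)$ holds trivially. Otherwise the trace is feasible and $\varphi$ is satisfiable under $\stt$; by validity of the implication, every model of $\pre \land \cost_0 = 0 \land \bigwedge_i \encodeD(i,\stmt_i)$ agreeing with $\stt$ satisfies both $\cost_n \leq \beta$ and $\post$. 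The $\post$ conjunct forces $R_\stt \subseteq \post$, hence $\overline{\post} \subseteq \overline{R_\stt}$; the $\cost_n \leq \beta$ conjunct combined with the lemma gives $\sem{\trace}(\stt)(\overline{R_\stt}) \leq \model_\stt(\cost_n) \leq \stt(\beta) = \stt(\errp)$. Chaining $\sem{\trace}(\stt)(\overline{\post}) \leq \sem{\trace}(\stt)(\overline{R_\stt})$ with this yields $\sem{\trace}(\stt)(\overline{\post}) \leq \stt(\errp)$, which is exactly validity of $\hoare{\errp}{\pre}{\trace}{\post}$.

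The main obstacle I anticipate is the assume case and the attendant feasible/infeasible dichotomy. In the full encoding of \cref{thm:enc}, the $\halt_i$ variables absorb blocked executions so that $\varphi$ is never unsatisfiable; dropping them means $\varphi$ becomes unsatisfiable precisely when the trace is infeasible, and I must argue carefully that this coincides with $\sem{\trace}(\stt)$ being the zero sub-distribution, so that no failure probability is lost. The hypothesis that assume predicates range over $\varsd$ is what makes this clean: feasibility becomes a deterministic property of $\stt$, letting me separate the trivial infeasible case from the feasible case, where the induction applies essentially verbatim.
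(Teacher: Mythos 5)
Your proposal is correct and takes essentially the same route as the paper: your ``simplified analogue'' of \cref{lem:enc} is precisely the paper's \cref{lem:simpenc}, proved by the same induction on trace length with the same three-way case analysis, and the theorem then follows by the same chaining of $\overline{\post} \subseteq \overline{R_\stt}$ with $\cost_n \leq \beta$. The only difference is presentational: you discharge infeasible traces explicitly via the feasible/infeasible dichotomy (using that assume predicates range over $\varsd$), whereas the paper dispatches this with its standing assumption that the trace is feasible for some state, so nothing of substance is changed.
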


\cref{thm:simpenc} follows from the next lemma:

\begin{lemma}[Soundness of simplified $\encodeD$]\label{lem:simpenc}
Fix trace $\trace = \stmt_1;\cdots;\stmt_n$.
Let $\varphi \triangleq \cost_0 = 0  \land \bigwedge_{i=1}^n \encodeD(i,\stmt_i)$,
where all uninterpreted functions resulting from distribution axiom families have been given a fixed interpretation.
Fix a state $\stt \in \stts$.
Let $\model_1,\ldots,\model_m$ be the set of models of $\varphi$
such that $\stt(\varsi) = \model_i(\varsi)$, for all $i\in [1,m]$.
Let \[R_\stt = \{\stt' \mid  \stt'(\vars) = \model_i (\vars) \}\]
Then, for any $\model_i$, we have $\sem{\trace}(\stt)(\overline{R_\stt}) \leq \model_i(\cost_n)$.
\end{lemma}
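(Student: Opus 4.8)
The plan is to prove \cref{lem:simpenc} by induction on the length $n$ of the trace, following the structure of the proof of \cref{lem:enc} but discarding the $\halt_i$ bookkeeping, which is no longer needed once we restrict to feasible traces with assume guards over $\varsd$. First I would record two preliminary observations that the statement implicitly relies on. Each $\cost_i$ is by construction a function of the input variables $\varsi$ alone: the recurrence keeps $\cost_i = \cost_{i-1}$ at assignments and assumes, and adds $\axub$ (which depends only on $\varsi$ through the fixed interpretations of the axiom functions) at sampling statements. Hence every model $\model_i$ agreeing with $\stt$ on $\varsi$ assigns the \emph{same} value to $\cost_n$, which is exactly what licenses the phrase ``for any $\model_i$''. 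Second, feasibility of $\trace$ ensures the set $\{\model_i\}$ is nonempty, so $R_\stt$ and the claimed inequality are well defined.

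For the base case $n=1$ I would split into the three statement types. If $\stmt_1 = \var \gets \expr$, then $\varphi$ forces $\cost_1 = 0$ and constrains $\var = \expr$; since $\sem{\var \gets \expr}(\stt) = \dunit(\stt[\var \mapsto \stt(\expr)])$ places all mass on a single state lying in $R_\stt$, the failure mass is $0 = \model_i(\cost_1)$. If $\stmt_1 = \assume(\bexpr)$, then $\cost_1 = 0$, and by the semantics of $\assume$ either $\stt(\bexpr)$ holds and the unique output state is in $R_\stt$, or it fails and the output is the zero subdistribution; either way the failure mass is $0$. If $\stmt_1 = \var \sim \dexpr$, then $R_\stt$ is precisely the set of states satisfying $\neg\axasm$, so $\overline{R_\stt}$ is the event $\axasm$, and the bound $\sem{\var\sim\dexpr}(\stt)(\overline{R_\stt}) \leq \model_i(\cost_1) = \model_i(\axub)$ is exactly the soundness of the distribution axiom established in \cref{thm:soundness-axioms}.

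For the inductive step I would write $\trace = \trace'; \stmt_{n+1}$, apply the induction hypothesis to $\trace'$ to get a set $R_\stt'$ with $\sem{\trace'}(\stt)(\overline{R_\stt'}) \leq \model_i(\cost_n)$, and then analyze the last statement through $\sem{\trace'; \stmt_{n+1}}(\stt) = \dbind(\sem{\trace'}(\stt), \sem{\stmt_{n+1}})$. For an assignment, the final step is deterministic and $R_\stt$ is the image of $R_\stt'$ under the update $\var \mapsto \expr$, so no failure mass is created and $\cost_{n+1} = \cost_n$. For an assume, $R_\stt = R_\stt' \cap \{\stt' \mid \stt'(\bexpr)\}$, so the states in $R_\stt' \setminus R_\stt$ all falsify $\bexpr$ and are sent to probability $0$ by the assume semantics; hence the failure mass cannot increase and again $\cost_{n+1} = \cost_n$.

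The main obstacle is the sampling case of the inductive step, where $\cost_{n+1} = \cost_n + \axub$. Writing $X$ for the states satisfying $\neg\axasm$, we have $R_\stt = R_\stt' \cap X$, so $\overline{R_\stt} = \overline{R_\stt'} \cup \overline{X}$, and I would bound the failure probability by the union bound: the mass on $\overline{R_\stt'}$ is at most $\model_i(\cost_n)$ by the induction hypothesis, while the mass on $\overline{X}$ contributed by the fresh sample falsifying $\neg\axasm$ is at most $\axub$. The two delicate points here are (i) that the axiom bound $\pr_{\var\sim\dexpr}[\axasm]\leq\axub$ must hold uniformly over \emph{every} intermediate state $\stt'$ in the support of $\sem{\trace'}(\stt)$, which is precisely the ``for all valuations of $\vars\setminus\{\var\}$'' guarantee of a distribution axiom together with \cref{thm:soundness-axioms}; and (ii) that $\axub$ is constant across these states since it depends only on $\varsi$, so the two terms sum to exactly $\model_i(\cost_n) + \axub = \model_i(\cost_{n+1})$. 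Once this case is handled, the induction closes, and \cref{thm:simpenc} follows by reading off $R_\stt \subseteq \post$ and $\cost_n \leq \errp$ from the validity of \cref{eq:enc}.
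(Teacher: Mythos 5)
Your proposal is correct and follows essentially the same route as the paper's own proof: induction on the trace length with a three-way case split on the last statement, the same two preliminary observations (that the $\cost_i$ constraints are determined by $\varsi$, and that feasibility guarantees a model exists), and the same key step for sampling—writing $\overline{R_\stt} = \overline{R_\stt'} \cup \overline{X}$ and applying the union bound together with the pointwise soundness of the distribution axiom over all intermediate states. If anything, you spell out the sampling case and the uniformity of $\axub$ in more detail than the paper does, which only says "by the union bound and the distribution axiom."
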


\begin{proof}
  First, we note that  all models $\model_i \models \varphi$
  agree on the value of $\omega_i$.
  This is because the constraints $\omega_i$ are functions of $\varsi$.
  Second, note that by our assumption, there is always $\model_i \models \varphi$---i.e., it is never unsatisfiable.

  We proceed by induction on the length of $\trace$.
  For $n = 1$, we have three cases. Fix an $\stt$ as in lemma statement.
  \begin{itemize}
  \item \emph{Case 1:} $\trace = \var \gets \expr$.

  We have $\varphi \triangleq \var = \expr \land \cost_1 = 0$.
  From $\varphi$, $\model_i(\cost_1) = 0$, for all $i$.
  Therefore, lemma states:
  $\sem{\var \gets \expr}(\stt)(\overline{R_\stt}) \leq 0$.
  Suppose this does not hold, then, by definition of $\sem{\var \gets \expr}$,
  there is a state $\stt' \in \stts \setminus R_\stt$ such that
  $\stt' = \stt[v \mapsto \stt(\expr)]$.
  However, by definition of $\varphi$, $\stt' \in R_\stt$,
  since $\model_i(\vars) = \stt'(\vars)$, for some $i$.

  \item \emph{Case 2:} $\trace = \assume(\bexpr)$.

  We have $\varphi \triangleq \bexpr \land \cost_1 = 0 $.
  From $\varphi$, $\model_i(\cost_1) = 0$, for all $i$.
  Therefore, lemma states:
    $\sem{\assume(\bexpr)}(\stt)(\overline{R_\stt}) \leq 0$.
    Suppose this does not hold, then, by definition of $\sem{\assume(\bexpr)}$,
    we have $\stt \in \stts \setminus R_\stt$ and $\stt(\bexpr) = \true$.
    However, by definition of $\varphi$, $\stt \in R_\stt$,
    iff $s(\bexpr) = \true$.

  \item \emph{Case 3:} $\trace = \var \sim \dexpr$.

  We have $\varphi \triangleq \neg \axasm \land \cost_1 = \axub$.
  Lemma states that
  $\sem{\var \sim \dexpr}(\stt)(\overline{R_\stt}) \leq \model_i(\axub)$,
  for all $i$.
  This follows from the definition of a distribution axiom: that
  $\pr_{\var \sim \dexpr}[\axasm] \leq \axub$ is true for any valuation of $\vars \setminus \{\var\}$.
  \end{itemize}

Assume that lemma holds for traces of length $n$.
We show that it also holds for $n+1$,
where $\trace'$ is a trace of length $n$.

\begin{itemize}
\item \emph{Case 1:} $\trace = \trace'; \var \gets \expr$.

The encoding is $\varphi \triangleq \varphi' \land \var = \expr \land \cost_{n+1} = \cost_{n} $.

Let $\model'_1, \ldots, \model_m'$ and $R_\stt'$ be defined for $\varphi'$ and $\trace'$, as per lemma statement.
By hypothesis, $\sem{\trace'}(s)(\overline{R_\stt'}) \leq M_i'(\omega_n)$.
By semantics of assignment, we have
$\sem{\trace'; \var \gets \expr}(s)(\overline{X}) \leq \model_i'(\omega_{n+1})$,
where $X = \{\stt \mid \stt' \in R_\stt', \stt = \stt'[\var \gets \stt'(\expr)]\}$.
Observe that $X = R_\stt$:
by definition of $\varphi$, its models are a subset of $\{\model_1',\ldots,\model_m'\}$ such that $\var = \expr$.
It then follows that $\sem{\trace}(\stt)(\overline{R_\stt}) \leq \model_i(\cost_{n+1})$, for all $i$.


\item \emph{Case 2:} $\trace = \trace'; \assume(\bexpr)$.

The encoding is $\varphi \triangleq \varphi' \land \bexpr \land \cost_n = \cost_{n-1}$.
Let $\model'_1, \ldots, \model_m'$ and $R_\stt'$ be defined for $\varphi'$ and $\trace'$, as per lemma statement.
By hypothesis, $\sem{\trace'}(s)(\overline{R_\stt'}) \leq M_i'(\omega_n)$,
for all $i$.
We know that models $\{\model_i\}$ of $\varphi$ are a subset of $\{\model'_i\}$
where $\bexpr$ is $\true$.
Therefore $\overline{R_\stt} \supseteq \overline{R_\stt'}$.
But we have that all states in $\overline{R_\stt} \setminus \overline{R_\stt'}$
are those were $\bexpr$ is false.
By definition of $\sem{\assume(\bexpr)}$, all those states are assigned probability 0.
Therefore, $\sem{\trace}(\stt)(\overline{R_\stt}) \leq \model_i(\cost_{n+1})$, for all $i$.

\item \emph{Case 3:} $\trace = \trace'; \var \sim \dexpr$.

The encoding is $\varphi \triangleq \varphi' \land \neg \axasm \land \cost_{n+1} = \cost_n + \axub$.
Let $\model'_1, \ldots, \model_m'$ and $R_\stt'$ be defined for $\varphi'$ and $\trace'$, as per lemma statement.
By hypothesis, $\sem{\trace'}(s)(\overline{R_\stt'}) \leq M_i'(\omega_n)$,
for all $i$.
Let $X$ be the set of all states that satisfy $\neg \axasm$.
From $\varphi$, we know that $R_\stt = R_\stt' \cap X$.
By the union bound and the distribution axiom,
$\sem{\trace'; \var \sim \dexpr}(\stt)(\overline{R_\stt'} \cup \overline{X}) \leq M_i(\omega_{n+1})$, for all $i$.


\end{itemize}
\end{proof}

Assume we construct a sequence of interpolants for the above
encoding as described in \cref{sec:interpolation}.
Then, the following theorem holds, which is the same as \cref{thm:interpolant}, but without handling $\halt_i$ variables.

\begin{theorem}[Well-labelings from interpolants]\label{thm:simpinterpolant}
  Let $\{\psi_i\}_i$ be the interpolants computed as shown above.
  Let $\fsm_\trace = \tuple{Q,\delta,\labelp,\labele}$ be the failure automaton that accepts only the trace
  $\trace = \stmt_1,\ldots,\stmt_n$,
  i.e., $\delta = \{\qentry\lto{\stmt_1} q_1, q_1 \lto{\stmt_2} q_2, \ldots q_{n-1} \lto{\stmt_n} \qexit\}$.
  Set the labeling functions as follows:
  \begin{enumerate}
    \item $\labelp(\qentry) \triangleq \pre$ and $\labele(\qentry) \triangleq 0$.
    \item $\labelp(q_i) \triangleq \exists \cost_i \ldotp \psi_i$ and $\labelp(\qexit) \triangleq \exists \cost_n \ldotp \psi_n$.
    \item $\labele(q_i) \triangleq f(\varsd)$, where $f(\varsd)$
    is the function that returns, for any valuation of $\varsd$,
    the largest value of $\omega_i$ that satisfies $\exists \vars \setminus \varsd  \ldotp \psi_i$.
    For $\labele(\qexit)$, we use $\exists \vars \setminus \varsi \ldotp \psi_n$.
  \end{enumerate}
  Then, $\fsm_\trace$ is well-labeled and implies $\hoare{\errp}{\pre}{\trace}{\post}$.
\end{theorem}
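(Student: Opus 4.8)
The plan is to verify the three defining pieces of a well-labeled failure automaton from \cref{def:abstract}, namely $\labelp(\qexit) \Rightarrow \post$, the exit failure bound $\labele(\qexit) \leq \errp$, and condition (2) on every edge, exactly as in the proof of \cref{thm:interpolant}, but stripping out all of the $\halt_i$ bookkeeping: in the simplified encoding of \cref{fig:encs}, assume statements are encoded directly as guards $\bexpr$ rather than by tracking blocked executions, so no $\halt_i \mapsto \false$ substitutions are needed. Throughout I would rely on the three defining properties of the sequence interpolants $\{\psi_i\}$ for the split $\varphi_0 \land \cdots \land \varphi_{n+1}$, where $\varphi_0 \triangleq \pre \land \cost_0 = 0$, $\varphi_i \triangleq \encodeD(i,\stmt_i)$ for $1 \leq i \leq n$, and $\varphi_{n+1} \triangleq \neg(\cost_n \leq \beta \land \post)$, with $\psi_i$ summarizing the prefix through statement $i$.

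For the exit conditions, $\psi_n$ separates $\varphi_0 \land \cdots \land \varphi_n$ from $\varphi_{n+1}$, so $\psi_n \Rightarrow (\cost_n \leq \beta \land \post)$ is valid. Since $\cost_n$ does not occur in $\post$, existentially quantifying it gives $\labelp(\qexit) = (\exists \cost_n \ldotp \psi_n) \Rightarrow \post$. For the failure bound, $\psi_n \Rightarrow \cost_n \leq \beta$ and $\beta$ ranges only over $\varsi$; because every model of $\pre$ extends (by construction of the encoding) to a model of $\bigwedge_i \varphi_i$, the largest $\cost_n$ satisfying $\exists \vars \setminus \varsi \ldotp \psi_n$ is bounded by $\beta$, which is precisely $\labele(\qexit) \leq \errp$.

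For the per-edge condition I would proceed by statement type on $q_i \lto{\stmt} q_j$ with $j = i+1$, using the inductive interpolant property $\psi_i \land \varphi_{i+1} \Rightarrow \psi_{i+1}$. In each case the state half follows the assignment manipulation of \cref{thm:interpolant}: instantiate $\encodeD(i+1,\stmt)$, weaken the right side by $\exists \cost_j$, project $\cost_i$ (a function of $\varsi$) out of the left side, and discard the now-unconstrained $\cost$-equalities to obtain $\labelp(q_i) \land \chi \Rightarrow \labelp(q_j)$, where $\chi$ is $\var = \expr$, $\bexpr$, or $\neg\axasm$. For assignment and assume this is the state triple with failure cost $0$; for sampling, soundness of the axiom ($\pr[\axasm] \leq \axub$, \cref{thm:soundness-axioms}) upgrades it to failure cost $\axub$. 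The second half of each case is the failure accounting: I must show $\wpprob(\labele(q_j),\stmt) - \labele(q_i)$ is at least the statement's cost ($0$ for assignment/assume, $\axub$ for sampling). This follows from the $\cost$-update in $\encodeD$---$\cost_j = \cost_i$ or $\cost_j = \cost_i + \axub$---which guarantees that a maximizing witness $(\varsd,\cost_i)$ of $\psi_i$ yields a witness $(\varsd,\cost_i + \axub)$ of $\psi_j$ (or $(\varsd,\cost_i)$ when the cost is $0$), so the maximal $\cost$ at $q_j$ dominates that at $q_i$ by the required amount.

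The main obstacle is precisely this failure-probability half. The subtlety is that $\labele$ is specified as a \emph{function} obtained as the pointwise maximum of a \emph{relation}---the projection $\exists \vars \setminus \varsd \ldotp \psi_i$---so the argument cannot merely chase implications but must compare suprema of $\cost$ across adjacent interpolants and show that the maximizing witness at $q_i$ transports forward along the encoding's $\cost$-constraint. I expect this to reuse almost verbatim the transport argument in the sampling case of \cref{thm:interpolant} together with the base case of \cref{lem:simpenc}, now without any $\halt_i$ reasoning, making the overall proof strictly simpler than that of the general theorem.
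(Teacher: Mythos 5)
Your proposal is correct and takes essentially the same route as the paper: the paper's own proof of \cref{thm:simpinterpolant} consists entirely of the remark ``Similar to \cref{thm:interpolant},'' and your argument is precisely that adaptation---the exit-condition reasoning, the per-statement case analysis, and the witness-transport argument for the $\labele$ bounds carried over from \cref{thm:interpolant} with all $\halt_i$ bookkeeping removed. The only point worth flagging is that the claim that every model of $\pre$ extends to a model of the encoding now rests on the feasibility assumption stated in \cref{app:enc}, a subtlety the paper itself also leaves implicit.
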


\begin{proof}
  Similar to \cref{thm:interpolant}.
\end{proof}

\section{Capturing the Union Bound Logic} \label{app:ahl}

\begin{figure*}
  \begin{mathpar}
    \inferrule*[Right=Assn]
    { ~ }
    { \hoare{0}{ \Phi[\var \mapsto \expr] }{ \Assg{\var}{\expr} }{ \Phi } }
    \and
    \inferrule*[Right=Rand]
    { \forall \store.\, \store(\Phi) \implies \Pr_{\sem{\Rand{\var}{\dexpr}}(\store)}(\neg \Psi) \leq \store(\errp) }
    { \hoare{\errp}{ \Phi }{ \Rand{\var}{\dexpr} }{ \Psi } }
    \and
    \inferrule*[Right=Seq]
    { \hoare{\errp}{ \Phi }{ \prog }{ \Psi } \\ \hoare{\errp'}{ \Psi }{ \prog' }{ \Theta } }
    { \hoare{\errp + \errp'}{ \Phi }{ \Seqn{\prog}{\prog'} }{ \Theta } }
    \and
    \inferrule*[Right=If]
    { \hoare{\errp}{ \Phi \land \bexpr }{ \prog }{ \Psi } \\
    \hoare{\errp}{ \Phi \land \neg \bexpr }{ \prog' }{ \Psi } }
    { \hoare{\errp}{ \Phi }{ \Cond{\bexpr}{\prog}{\prog'} }{ \Psi } }
    \and
    \inferrule*[Right=While]
    { \forall \store, k.\, \store(\Phi \land \bexpr \land \expr_v = k)
      \implies \Pr_{\sem{\prog}(\store)}( \expr_v \geq k ) = 0 \\\\
      \expr_v : \mathbb{N} \\
      \models \Phi \land \expr_v \leq 0 \implies \neg \bexpr \\
      \hoare{\errp}{ \Phi \land \bexpr }{ \prog }{ \Phi } }
    { \hoare{\rho \cdot \errp}{ \Phi \land e_v \leq \rho }{ \Whil{\bexpr}{\prog} }{ \Phi \land \neg \bexpr } }
    \and
    \inferrule*[Right=Weak]
    { \models (\Phi' \implies \Phi) \land (\Psi \implies \Psi') \land (\errp \leq \errp')  \\
      \hoare{\errp}{ \Phi }{ \prog }{ \Psi } }
    { \hoare{\errp'}{ \Phi' }{ \prog }{ \Psi' } }
    \and
    \inferrule*[Right=Frame]
    { MV(\prog) \cap FV(\Phi) = \emptyset }
    { \hoare{0}{ \Phi }{ \prog }{ \Phi } }
    \and
    \inferrule*[Right=And]
    { \hoare{\errp}{ \Phi }{ \prog }{ \Psi } \\
    \hoare{\errp'}{ \Phi }{ \prog }{ \Psi' } }
    { \hoare{\errp + \errp'}{ \Phi }{ \prog }{ \Psi \land \Psi' } }
    \and
    \inferrule*[Right=Or]
    { \hoare{\errp}{ \Phi }{ \prog }{ \Psi } \\
    \hoare{\errp}{ \Phi' }{ \prog }{ \Psi } }
    { \hoare{\errp}{ \Phi \lor \Phi' }{ \prog }{ \Psi } }
    \and
    \inferrule*[Right=False]
    { ~ }
    { \hoare{1}{ \Phi }{ \prog }{ \bot } }
  \end{mathpar}
  \caption{The union bound logic, core rules \citep{DBLP:conf/icalp/BartheGGHS16} \label{fig:ahl}}
\end{figure*}

Our trace abstraction technique is inspired by the union bound
logic (\abr{aHL}), proposed by \citet{DBLP:conf/icalp/BartheGGHS16}. The core rules of
this program logic are presented in \cref{fig:ahl}; the only omitted rules are
the ones for the skip command (trivial to add to our language) and the rules for
external and internal procedure calls (we do not consider interprocedural
analysis). We comment briefly on a few rules; the others are largely standard.
The sampling rule \textsc{[Rand]} encodes distribution axioms. The most
complicated rule is \textsc{[While]}---intuitively, the side-conditions ensure
that there is a non-increasing integer variant $\expr_v$ whose initial value
bounds the maximum number of loop iterations. The program logic also features an
interesting complement of structural rules. Along with the usual rule of
consequence \textsc{[Weak]} and rule of constancy \textsc{[Frame]}, the
disjunction rule \textsc{[Or]} combines two pre-conditions (keeping the failure
probability unchanged) and the conjunction rule \textsc{[And]} combines two
post-conditions, while summing failure probabilities. Finally, the rule
\textsc{[False]} states that a judgment with failure probability at most $1$ can
prove any post-condition.

A minor but important difference between the setting in \abr{aHL} and our setting is
in the treatment of the failure probability expression $\errp$. In \abr{aHL}, these
expressions range over some fixed set of \emph{logical variables}, which appear
only in assertions and not in programs.  In our setup, we would model these
variables as \emph{input variables} $\varsi$, which may appear in programs by
cannot be modified. We will assume that input variables $\varsi$ correspond
precisely to the logical variables in \abr{aHL}.

We will show that our proof technique is complete with respect to the logic \abr{aHL},
subject to two restrictions on \abr{aHL} proofs:
\begin{enumerate}
  \item The rule \textsc{[While]} is applied to ``for''-loops.\footnote{%
      This can be slightly generalized to loops with a deterministic
    variant $e_v$, but we make this restriction to simplify proofs.}
  \item The rule \textsc{[Or]} is not used.
\end{enumerate}
Both of these restrictions stem from how our approach keeps track of the failure
probability. Roughly speaking, the original \abr{aHL} can analyze loops where the
guard is probabilistic but there is a deterministic bound on the number of
iterations. Since our failure probabilities must be deterministic along the
trace, we cannot directly handle such loops. However, these programs still have
a deterministic bound on the number of iterations and so they can be directly
transformed to be of the following form:
\[
  \Whil{\bexpr}{\prog} \equiv \Assg{i}{0} ; \Whil{i < \rho}{\Assg{i}{i + 1} ; \Cond{\bexpr}{\prog}{\Skip}}
\]
The situation with the rule \textsc{[Or]} is similar. If we have two
well-labeled automata modeling the two proofs in the premise, we would like to
combine them into a single automata but this is not possible---the labels on the
edges would need to be of the form $\Assm{\Phi}$ or $\Assm{\Phi'}$, but these
guards to not appear in the program $\prog$.  While it does not appear possible
to eliminate the \textsc{[Or]} rule, in our experience this rule is quite rarely
used. The rule can also be avoided entirely by applying a program transformation
to mark the logical cases:
\[
  \prog \equiv \Cond{\Phi}{\prog}{\prog}
\]
and then applying the standard conditional rule \textsc{[If]}.

We will prove completeness in two steps. First, we will show that for any
derivable judgment in \abr{aHL}, there exists a well-labeled automata modeling the
judgment (i.e., satisfying the conditions of \cref{thm:proofrule} for the given
pre-condition, post-condition, failure probability, and program). Then, we show
that well-labeled automata derived from programs can be found by a run of our
algorithm, given some labeling oracle $\labeltrace$.

Before we begin, we fix an automata representation of imperative programs once
and for all. Each automaton will have one entry node and one exit node. The rest
of the nodes, edge labels, and transition structure will be constructed
inductively given a program $\prog$.
\begin{itemize}
  \item Basic statements $\stmt \in \stmts$. Automaton with single edge from
    entry to exit node labeled by $\stmt$.
  \item Sequential composition $\Seqn{\prog}{\prog'}$. Identify the exit node
    for the automaton from $\prog$ with the entry node for the automaton from
    $\prog'$.
  \item Conditionals $\Cond{\bexpr}{\prog}{\prog'}$. Make new entry node, add
    directed edges labeled by $\Assm{\bexpr}$ and $\Assm{\neg \bexpr}$ to the
    entry nodes of automata from $\prog$ and $\prog'$ respectively, and then
    identify the exit nodes of the two automata.
  \item Loops $\Whil{\bexpr}{\prog}$. Make new entry node with an
    $\Assm{\bexpr}$ edge to the entry node of the automaton for $\prog$, and an
    $\Assm{\neg \bexpr}$ edge to a new exit node. From the exit node of $\prog$,
    add an edge back to the new entry node labeled $\Assm{\bexpr}$ and an edge
    to the new exit node labeled $\Assm{\neg \bexpr}$.
\end{itemize}
We call such automata derived from programs \emph{well-structured}.

\begin{theorem}[Completeness of well-labeled automata] \label{thm:complete-label}
  Let $\hoare{\beta}{\Phi}{\prog}{\Psi}$ be derivable in the fragment of \abr{aHL}
  indicated above. Then, there exists a well-structured and well-labeled
  automaton $\fsm$ satisfying the conditions of \cref{thm:proofrule} for this
  accuracy specification.
\end{theorem}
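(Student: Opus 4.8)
The plan is to proceed by structural induction on the \abr{aHL} derivation of $\hoare{\beta}{\Phi}{\prog}{\Psi}$, constructing for each derivation a well-structured automaton $\fsm$ (via the inductive automata-from-programs construction fixed just above the theorem) together with labelings $\labelp,\labele$. I would maintain the stronger invariant that $\labelp(\qentry)=\Phi$, $\labele(\qentry)=0$, $\labelp(\qexit)=\Psi$, and $\labele(\qexit)=\beta$ as a function over $\varsi$, so that the four conditions of \cref{thm:proofrule} for the singleton $\{\fsm\}$ — trace inclusion (here $\lang(\fsm)=\lang(\prog)$ by well-structuredness), pre- and post-condition inclusion, and the failure bound — all hold with equality. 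The whole argument then reduces each remaining \abr{aHL} rule (everything except the excluded \textsc{[Or]} and general \textsc{[While]}) to local checks of \cref{def:abstract}.

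For the base and structural rules the constructions are direct. \textsc{[Assn]} and \textsc{[Rand]} give single-edge automata whose one edge-condition is exactly the rule's premise, using $\wpprob(0,\Assg{\var}{\expr})=0[\var\mapsto\expr]=0$ and $\wpprob(\beta,\Rand{\var}{\dexpr})=\beta$. \textsc{[Frame]} is handled by labeling every node of the well-structured automaton for $\prog$ with the constant $\Phi$ and failure $0$: since $MV(\prog)\cap FV(\Phi)=\emptyset$, each statement — including samplings of fresh variables, via the nondeterministic ``otherwise'' axiom with failure $0$ — validates $\hoare{0}{\Phi}{\stmt}{\Phi}$. \textsc{[False]} pushes all failure onto the last edge: set every $\labelp$ to $\true$, $\labele=0$ except $\labele(\qexit)=1$, and note $\hoare{1}{\true}{\stmt}{\bot}$ is vacuously valid. \textsc{[Weak]} only relabels the entry/exit of the inductively built automaton — strengthening $\labelp(\qentry)$ to $\Phi'$, weakening $\labelp(\qexit)$ to $\Psi'$, raising $\labele(\qexit)$ to $\beta'$ — and each change is monotone in the direction that preserves every edge condition.

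The compositional rules carry the real bookkeeping, but each mirrors one automaton operation. The key observation is that the edge condition of \cref{def:abstract} depends only on the \emph{difference} $\wpprob(\labele(q_j),\stmt)-\labele(q_i)$, which is invariant under adding any fixed expression over $\varsi$ to all failure labels of a sub-automaton (input variables are never modified, so this survives $\wpprob$). Thus \textsc{[Seq]} glues the exit of $\fsm_{\prog}$ to the entry of $\fsm_{\prog'}$ — the shared node carries the consistent label $\Psi$ — after shifting every failure label of $\fsm_{\prog'}$ up by $\beta$, yielding exit failure $\beta+\beta'$. \textsc{[If]} builds a fresh entry with the two mutually exclusive $\assume(\bexpr)/\assume(\neg\bexpr)$ edges into the branch automata (assume adds no failure) and joins their exits; both branches reach the common exit with failure $\beta$, and soundness of that single exit failure follows from the branch decomposition already used in the proof of \cref{thm:labeled-sound}, mirroring the $\mergeop$ soundness argument of \cref{lem:sound-merge}. \textsc{[And]} overlays the two automata for the \emph{same} program (identical well-structured shapes), labeling each node by $\labelp_1\land\labelp_2$ and $\labele_1+\labele_2$; the edge condition for the conjunction with summed budget is precisely the single-statement union bound, matching \textsc{[And]} itself.

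The main obstacle is \textsc{[While]} restricted to for-loops. Using the normal form $\Assg{i}{0};\Whil{i<\rho}{\Assg{i}{i+1};\Cond{\bexpr}{\prog}{\Skip}}$, I would label the loop-guard node with the counter-dependent invariant $\labelp\triangleq\Phi\land 0\le i\le\rho$ and the \emph{deterministic} failure $\labele\triangleq\beta\cdot i$. The initializer gives head failure $\beta\cdot i[i\mapsto 0]=0$, matching $\labele(\qentry)=0$; each loop body adds exactly $\beta$ (the then-branch invokes the inductive hypothesis on $\prog$ for failure $\beta$, the $\Skip$ else-branch adds $0$, and their join takes the max $\beta$), which agrees with the increment $\beta(i{+}1)-\beta i=\beta$; and exit at $i=\rho$ carries failure $\beta\cdot\rho=\rho\cdot\beta$, as the rule's conclusion requires. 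The delicate point is verifying the back edge: this is an instance of the generalization rule $\cgen$, and well-labeledness of the cycle holds precisely because the body labeling is \emph{inductive} — its per-iteration failure increment and its $\Phi$-in/$\Phi$-out predicates line up with the linear invariant $\beta\cdot i$, so the loop-head condition is reproduced after every unrolling. I expect most of the effort to go into making this inductiveness precise (including that $i,\rho,\expr_v$ are deterministic, so $\beta\cdot i$ is a legal $\varsd$-valued label) and into checking that the merge inside $\Cond{\bexpr}{\prog}{\Skip}$ preserves well-labeledness, after which the global bound follows from \cref{thm:labeled-sound}.
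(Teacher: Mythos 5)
Your proposal is correct and takes essentially the same route as the paper's own proof: induction on the \abr{aHL} derivation, constructing the well-structured automaton rule by rule with the same labelings the paper uses (single edges for \textsc{[Assn]}/\textsc{[Rand]}, gluing with a uniform failure shift for \textsc{[Seq]}, fresh entry and join for \textsc{[If]}, predicate-conjunction with failure-sum overlay for \textsc{[And]}, and the counter-dependent loop-head label $\beta \cdot i$ with per-iteration increment $\beta$ for the for-loop form of \textsc{[While]}). The only cosmetic deviations are that you maintain the \cref{thm:proofrule} conditions as equalities (so \textsc{[Weak]} relabels where the paper simply reuses the inductively obtained automaton) and that in \textsc{[False]} the exit predicate must be $\bot$ rather than $\true$ for postcondition inclusion---as your own edge triple $\hoare{1}{\true}{\stmt}{\bot}$ already presupposes.
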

\begin{proof}
  Let $\fsm$ be the well-structured automaton corresponding to $\prog$. We will
  show that the nodes of $\fsm$ can each be labeled by a predicate and a failure
  probability expression, such that the entire automaton is well-labeled and
  satisfies the conditions of \cref{thm:proofrule}.  By induction on the proof
  derivation.
  \begin{description}
    \item[\textsc{[Assn]}]
      Label the entry and exit nodes by the pre- and post-condition
      respectively, with failure probability $0$.
    \item[\textsc{[Rand]}]
      Label the entry and exit nodes by the pre- and post-condition
      respectively, with failure probability $0$ and $\beta$.
    \item[\textsc{[Seq]}]
      Take the well-labelings for $\prog$ and $\prog'$ by induction. Label the
      node at the join point with invariant $\Psi$. For each node in the
      $\prog'$ automaton, add $\beta$ to the failure probability label.
    \item[\textsc{[If]}]
      Take the well-labelings for $\prog$ and $\prog'$ by induction. We may
      label the entry nodes $\Phi \land \bexpr$ and $\Phi \land \neg \bexpr$
      while preserving the well-labeling. Label the new entry node by $\Phi$
      with failure probability $0$, and the new exit node by $\Psi$ with failure
      probability $\beta$.
    \item[\textsc{[While]}]
      Let $\rho$ be the loop upper bound and let $i$ be the loop counter.  Take
      the well-labeling of the body $\prog$ by induction. By assumption on the
      structure of the while loop, there is a single transition from the body
      entry node $q_0$ to another node $q_1$, and it is labeled by $\Assg{i}{i +
      1}$. Furthermore, $q_0$ and $q_1$ are both labeled with failure
      probability $0$. Add the deterministic expression $(i - 1) \cdot \beta$ to
      all failure probability labels except at node $q_0$, and set the failure
      probability of $q_0$ to be $i \cdot \beta$. Label the new initial node by
      $\Phi$ and failure probability $0$, and the new exit node by $\Phi \land
      \neg \bexpr$ and failure probability $\rho \cdot \beta$.
    \item[\textsc{[Weak]}]
      Take the well-labeled automaton by induction.
    \item[\textsc{[Frame]}]
      Label all nodes by $\Phi$ and failure probability $0$.
    \item[\textsc{[And]}]
      Take the two well-labelings $(\labele_1, \labelp_1)$ and $(\labele_2,
      \labelp_2)$ by induction. By assumption, both  of these well-labeled
      automata have the same structure (given by the well-structured automaton
      corresponding to $\prog$). Set the new labeling functions to be $\labele =
      \labele_1 \land \labele_2$, and $\labelp = \labelp_1 + \labelp_2$.
    \item[\textsc{[Or]}]
      Not allowed.
    \item[\textsc{[False]}]
      Label the entry node by $\Phi$ and failure probability $0$. Label all
      other nodes by $\bot$ and failure probability $1$.
  \end{description}
\end{proof}

\begin{theorem}[Completeness of algorithm] \label{thm:complete-algo}
  Let $\fsm$ be a well-structured and well-labeled automaton. Then, there exists
  a run of our algorithm in \cref{alg:overall} given some labeling oracle
  $\labeltrace_\fsm$ that produces $\fsm$ along its execution.
\end{theorem}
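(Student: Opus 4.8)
The plan is to reduce the statement to a purely structural reconstruction problem and then discharge it by exhibiting an explicit schedule of the rules $\ctrace$, $\cgen$, and $\cmerge$. First I would define the labeling oracle $\labeltrace_\fsm$ so that, on any trace $\trace$ that is a path of $\fsm$ from $\qentry$ to $\qexit$, it returns the single-trace automaton accepting $\trace$ whose nodes carry exactly the labels $\labelp,\labele$ that $\fsm$ assigns to the corresponding nodes along that path. Because $\fsm$ is well-labeled, every edge of $\fsm$ satisfies the local condition of \cref{def:abstract}, and these conditions are inherited verbatim by the copied single-trace automaton (its first node gets $\labele=0$ since the path starts at $\qentry$); hence $\labeltrace_\fsm$ always returns a well-labeled automaton, as required by $\ctrace$.

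Next I would fix the finite set of traces to sample. Let $\fsm^-$ be $\fsm$ with all loop back edges deleted; by well-structuredness (\cref{thm:complete-label}) $\fsm^-$ is acyclic, and every non-back edge of $\fsm$ --- including the loop-exit edges $\Assm{\neg\bexpr}$ --- lies on some $\qentry$--$\qexit$ path of $\fsm^-$. I would take $T$ to be the finite set of such paths and, starting from $\cinit$, add them one at a time with $\ctrace$ before performing any generalization. Since the automata accumulated so far accept exactly the traces already added, each new distinct path of $T$ still lies in $\lang(\prog)\cap\overline{\lang(\fsms)}$, so the precondition of $\ctrace$ is met; and by the previous paragraph each addition keeps $\fsms$ well-labeled.

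With every edge of $\fsm^-$ now present in some single-trace automaton, I would rebuild the control-flow structure using $\cmerge$ and $\cgen$. Two traces of $T$ share their maximal common prefix up to a branch node of $\fsm$, where --- since $\fsm$ is well-formed control flow --- they first differ on $\Assm{\bexpr}$ versus $\Assm{\neg\bexpr}$; because all labels were copied off $\fsm$, the shared prefixes carry identical $\labelp$ and $\labele$, so condition (3) of \cref{def:merge} holds and $\cmerge$ applies. Iterating these merges rebuilds the branching skeleton, with the exit labels matching $\fsm$ since the max and union in \cref{def:merge} are taken over equal copied labels. Finally, for each deleted back edge $q_b \lto{\stmt} q_h$ I would apply $\cgen$: its side condition $\hoare{\wpprob(\labele(q_h),\stmt)-\labele(q_b)}{\labelp(q_b)}{\stmt}{\labelp(q_h)}$ is exactly the local well-labeledness of that edge in $\fsm$, which holds by hypothesis, so the rule fires and reinstates the loop. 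The output is then an automaton with the same nodes, transitions, and labels as $\fsm$.

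The hard part will be reconstructing $\fsm$'s internal join points exactly. The basic $\cmerge$ of \cref{def:merge} shares only a common prefix (and the single accepting node), so where two conditional branches of $\fsm$ rejoin at an internal node and continue along a shared suffix, naive merging duplicates that suffix and yields a language-equivalent but non-isomorphic automaton. I would handle this either by invoking the suffix-merging extension of $\cmerge$ alluded to after \cref{def:merge} (which is again well-labeledness-preserving by the argument of \cref{lem:sound-merge}), or, if only the basic operation is permitted, by weakening the target to an automaton with $\lang=\lang(\fsm)$ together with a well-labeling witnessing the same specification --- which is all that $\ccorr$ and \cref{thm:proofrule} actually require. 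Two further points need care: the failure-probability labels around loops are functions of the loop counter (the $i\cdot\errp$ and $(i-1)\cdot\errp$ expressions of \cref{thm:complete-label}), so I must check that copying them onto an unrolled trace and then closing the back edge with $\cgen$ preserves the local conditions as identities over $\varsd$; and the schedule must add all of $T$ before generalizing, since $\cgen$ enlarges $\lang(\fsms)$ to infinitely many looping traces and would otherwise block later $\ctrace$ steps.
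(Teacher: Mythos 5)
Your proposal matches the paper's own proof essentially step for step: the oracle copies the labels of $\fsm$ onto sampled traces, $\ctrace$ plus $\cmerge$ rebuild the loop-free skeleton obtained by deleting back edges, and $\cgen$ reinstates those finitely many back edges, whose side conditions hold because they are exactly the local well-labeledness conditions of $\fsm$. The join-point subtlety you flag is real --- the paper's sketch silently assumes repeated $\cmerge$ recovers $\fsm$ exactly, which the prefix-only $\mergeop$ of \cref{def:merge} cannot guarantee when branches rejoin at internal nodes --- and both of your workarounds (invoking the suffix-merging extension mentioned after \cref{def:merge}, or settling for a language-equivalent well-labeled automaton with the same exit labels, which is all that \cref{thm:proofrule} and $\ccorr$ actually require) are sound.
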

\begin{proof}
  We provide a sketch of the proof. First, our algorithm can recover any
  loop-free well-labeled automaton (possibly not well-structured). In a bit more
  detail, let $\lang(\fsm)$ be the set of all paths from entry to exit node;
  note that this set is finite for loop-free automata. By repeatedly applying
  $\ctrace$, our algorithm can label each of these traces using the
  well-labeling in $\fsm$, yielding a set of well-labeled traces. Then by
  repeatedly applying $\cmerge$, our algorithm can merge all traces and recover
  the automaton $\fsm$.

  Now, suppose that $\fsm$ is well-structured but not loop-free. We can convert
  $\fsm$ to a loop-free automaton $\fsm_{lf}$ by simply deleting each back edge
  from the exit node of each while loop back to its corresponding entry node;
  dropping edges evidently keeps the automaton well-labeled. By the previous
  argument, our algorithm can generate $\fsm_{lf}$ by repeatedly applying
  $\ctrace$ and $\cmerge$. Then, we can apply $\cgen$ repeatedly to add the
  deleted edges, noting that there are at most finitely many such edges since
  the originally program has finitely many loops. These new edges preserve
  well-labeling and recover $\fsm$.
\end{proof}

As an immediate corollary, we have the following completeness result.

\begin{corollary}
  Let $\hoare{\beta}{\Phi}{\prog}{\Psi}$ be derivable in the fragment of \abr{aHL}
  indicated above.  Then, there exists a run of our algorithm in
  \cref{alg:overall} given some labeling oracle $\labeltrace_\fsm$ terminating
  successfully with rule $\ccorr$.
\end{corollary}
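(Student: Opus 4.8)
The plan is to chain the two completeness results just established, \cref{thm:complete-label} and \cref{thm:complete-algo}, and then check that the termination rule $\ccorr$ becomes enabled. First, since $\hoare{\beta}{\Phi}{\prog}{\Psi}$ is derivable in the indicated fragment of \abr{aHL}, \cref{thm:complete-label} supplies a single well-structured and well-labeled automaton $\fsm$ that meets all four side-conditions of \cref{thm:proofrule} for this specification: trace inclusion $\lang(\prog) \subseteq \lang(\fsm)$, precondition inclusion $\Phi \subseteq \labelp(\qentry)$, postcondition inclusion $\labelp(\qexit) \subseteq \Psi$, and the failure bound $\stt(\labele(\qexit)) \leq \stt(\beta)$ for every $\stt \in \Phi$. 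This $\fsm$ is the artifact I will drive the algorithm to reconstruct.

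Next I would invoke \cref{thm:complete-algo}: because $\fsm$ is well-structured and well-labeled, there is a run of \cref{alg:overall}, relative to an oracle $\labeltrace_\fsm$, that produces $\fsm$ along its execution. Concretely, following the construction in the proof of that theorem, the run applies $\ctrace$ and $\cmerge$ to assemble the loop-free skeleton $\fsm_{lf}$ and then applies $\cgen$ to reinstate the back edges that were deleted to obtain $\fsm_{lf}$, leaving the automaton set in the configuration $\fsms = \{\fsm\}$.

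It then remains to observe that from this configuration $\ccorr$ is enabled. Its two premises are exactly the trace inclusion condition $\lang(\prog) \subseteq \lang(\fsms) = \lang(\fsm)$ and the failure probability upper bound, which for the singleton set reduces to $\sum_i \stt(\labele_i(\qexit_i)) = \stt(\labele(\qexit)) \leq \stt(\beta)$ for all $\stt \in \Phi$; both are inherited verbatim from the \cref{thm:proofrule} conditions guaranteed by \cref{thm:complete-label}. Firing $\ccorr$ therefore terminates the run successfully, which is the claim. The precondition and postcondition inclusions, which are not explicit premises of $\ccorr$, are preserved as run-invariants by the oracle $\labeltrace_\fsm$ (it reproduces the labels of $\fsm$), matching the ``by construction'' justification accompanying the rule.

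Since both structural ingredients are already in hand, this corollary is genuinely immediate and presents no substantial obstacle; the only care needed is bookkeeping. The single point I would verify is that the run can be arranged to halt precisely at the configuration $\fsms = \{\fsm\}$---that no stray automaton survives the $\cmerge$ phase and that every deleted back edge has been restored by $\cgen$, so that $\lang(\prog) \subseteq \lang(\fsms)$ holds rather than only a finite-unrolling approximation---after which the two $\ccorr$ premises collapse, for a singleton automaton set, to the conditions furnished by \cref{thm:complete-label}.
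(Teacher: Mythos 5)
Your proposal is correct and follows essentially the same route as the paper's own proof: invoke \cref{thm:complete-label} to obtain a well-labeled automaton satisfying the conditions of \cref{thm:proofrule}, invoke \cref{thm:complete-algo} to obtain a run of the algorithm constructing it, and observe that at that configuration the premises of $\ccorr$ hold so the run terminates successfully. The extra bookkeeping you supply (checking the two $\ccorr$ premises explicitly for the singleton set $\fsms = \{\fsm\}$, and noting pre/postcondition inclusion holds by construction of the oracle) is exactly what the paper leaves implicit in the phrase ``rule $\ccorr$ applies.''
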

\begin{proof}
  By \cref{thm:complete-label}, there exists a well-labeled automaton $A$
  proving the specification. By \cref{thm:complete-algo}, there is a run of the
  algorithm that constructs this automaton. At that point in the execution, rule
  $\ccorr$ applies and the algorithm succeeds.
\end{proof}

\section{Implementation Details}\label{app:implementation}
This section expands on \cref{sec:evaluation} by providing additional implementation details and examples.

\paragraph{Algorithmic Strategy.}
Our implementation is a determinization of the algorithm presented in \cref{sec:algorithm}.
To ensure that we prove the given specifications by computing tight upper bounds on failure probability, our implementation aggressively tries to apply the $\cmerge$ rule---recall that the $\cmerge$ rule allows us take the maximum failure probability across two automata, instead of the sum.
Specifically, we modify the rule $\ctrace$ to return a set of traces $\trace_1, \ldots, \trace_n \in \lang(\prog) \cap \overline{\lang(\fsms)}$.
Then, we attempt to simultaneously label all traces with the same interpolants at nodes pertaining to the same control location.
To ensure that we compute similar interpolants across traces, we attempt to use the same distribution axiom for the same sampling instruction in all traces it appears in.
Finally, we apply the rule $\cgen$ to attempt to create cycles into the resulting automaton.

The  pseudocode in \cref{fig:implementation}
shows our determinization of the algorithm from \cref{sec:algorithm}.
The loop at \cref{line:ax} goes through axioms as described below, proposing one axiom in every iteration and checking it.
Notice that for every occurrence of a sampling statement, across all traces $\trace_j$, it attempts the same axiom---this is used to force a successful $\cmerge$.
\cref{line:interp} computes interpolants for every trace $\trace_j$'s encoding $\Psi_j$.
This procedure also tries to find the same interpolants for the same control locations---this ensures success of $\cmerge$ and $\cgen$.
In all case studies in \cref{sec:evaluation},
the algorithm succeeds by considering all traces that execute 0 or 1 iterations of every loop.

\begin{figure}
\begin{algorithmic}[1]
    \State $\fsms \gets \emptyset$
    \State $i \gets 1$ \Comment counter
    \While {$\ccorr$ does not apply}
      \State {\color{blue}The following lines implement $\ctrace$
      for a set of traces}
      \State Get all paths $\trace_1, \ldots, \trace_n \in \lang(\prog) \setminus \lang(\fsms)$ that go through each loop at most $i$ times.
      \State For every $\trace_j$, let $\Psi_j$ be the encoding in \cref{thm:enc},
      where different occurrences of the same sampling statement use the same parameter $f(\varsi)$ in their distribution axiom.
      \State $\donev \gets \false$
      \While {not $\donev$}\label{line:ax}
        \State pick an interpretation $M$ for every $f(\varsi)$ in $\{\Psi_j\}_j$
        \If {$M \models \bigwedge_j \Psi_j$}
          \State $\donev \gets \true$
          \State $axioms \gets M$
        \EndIf
      \EndWhile
      \State Compute interpolants for every $\Psi_j$ where $f(\varsi)$ are instantiated by $axioms$ and create well-labeled automata $\{\fsm_j\}_j$
      \label{line:interp}
      \State Add $\{\fsm_j\}_j$ to $\fsms$
      \State  {\color{blue}The following repeatedly applies \cmerge}
      \State Apply $\cmerge$ to every pair of automata in $\fsms$
      until it does not apply any more
      \State {\color{blue} The following loop repeatedly applies $\cgen$}
      \For {every $A_i \in \fsms$}
        \For {all $q,q'$ in  $\fsm_i$
        s.t. $q,q'$ denote the same loop head in $\prog$}
        \State Apply $\cgen$ to $q,q'$ with $\stmt \in \stmts$ being the loop exit condition
        \EndFor
      \EndFor
      \State $i \gets i + 1$
    \EndWhile
\end{algorithmic}
\caption{Implementation of nondeterministic algorithm in \cref{alg:overall}}
\label{fig:implementation}
\end{figure}

\paragraph{Discovering Axioms.}
Given a formula of the form $\exists f \ldotp \forall X \ldotp \varphi$,
we check its validity using a propose-and-check loop:
\rone we propose an interpretation of $f$ and then \rtwo check if $\forall X \ldotp \varphi$ is valid with that interpretation using the \abr{SMT} solver (more on this below).
The first step proposes interpretations of $f$ of increasing size,
e.g., for a unary function $f(x)$, it would try $0,1,x,x+1$, etc.

Note that this enumerative approach will encounter many axiom parameters
that are not well-typed or do not satisfy the conditions
required for the parameters.
For example, for the Laplace axiom family, we have $f(\varsi) \in (0,1]$.
Therefore, any instantiation that may be real-valued and $\leq 0$ or $>1$ is rejected.

\paragraph{Checking Validity.}
The case studies we consider make heavy use of non-linear arithmetic (e.g., $\frac{x \cdot y}{z} + c > 0$) and transcendental functions (namely, $\log$).
Non-linear theories are generally undecidable.
To work around this fact, we implement an incomplete formula validity checker using an eager version of the \emph{theorem enumeration} technique recently proposed by~\citet{Srikanth2017}.
First, we treat non-linear operations as uninterpreted functions, thus overapproximating their semantics.
Second, we strengthen formulas by instantiating \emph{theorems} about those non-linear operations. For instance, the following theorem relates division and multiplication: $\forall x,y \ldotp y > 0 \Rightarrow \frac{x \cdot y}{y} = x$.
We then instantiate $x$ and $y$ with terms over variables in the formula. Of course, there are infinitely many possible instantiations of $x$ and $y$;
we thus restrict instantiations to terms of size 1, i.e., variables or constants.

Our implementation uses a fixed set of theorems about multiplication, division, and log. These are instantiated for every given formula, typically resulting in $\sim$1000 additional conjuncts.
To give an intuition, we list some of those theorems below:
\begin{itemize}
  \item $\forall x,y \ldotp y > 0 \Rightarrow \frac{x \cdot y}{y} = x$
  \item $\forall x,y,z \ldotp z > 0 \Rightarrow \frac{x \cdot y}{z} + \frac{x}{z} = \frac{x \cdot (y+1)}{z}$
  \item $\forall x,y \ldotp x \geq 0 \land y \geq 0 \Rightarrow x \cdot y \geq 0$
  \item $\forall x,y \ldotp x \geq 0 \land y > 0 \Rightarrow \frac{x}{y} \geq 0$
\end{itemize}
In all of the differentially private algorithms, we can prove correctness by treating $\log$  completely as uninterpreted, requiring no $\log$ specific theorems, just the fact that, e.g., $\log(x) + \log(x) = 2\log(x)$.

\paragraph{Interpolation Technique.}
Given the richness of the theories we use, we found that existing proof-based interpolation techniques either do not support the theories (e.g., the MathSAT solver) or fail to find generalizable interpolants, e.g., cannot discover quantified interpolants (e.g., Z3).
As such, we implemented a \emph{template-guided} interpolation technique~\citep{albarghouthi2013beautiful,rummer2013exploring},
where we force interpolants to follow syntactic forms that appear in the program.
Specifically, for every Boolean predicate $\varphi$ appearing in the program, the specification, or the axioms, we create a template $\varphi^t$, which is $\varphi$ but with variables replaced by placeholders, denoted $\wild_i$. For instance, given $x > y$, we generate the template $\wild_1 > \wild_2$.

Since the failure probabilities, encoded in variables $\cost_i$
increase additively by accumulating $\axub$ expressions from the distribution axioms, we use the template $$\cost_i \leq \sum_{j=1}^{n}\wild_j*\axub_j,$$
where $\axub_j$ is the failure probability of the axiom used in the $j$th sampling statement, assuming there are $n$ such statements along the path,
and $\wild_j$ can take terms of $\varsd$---following the restriction on labels.

Given a set of templates, our interpolation technique searches for an interpolant as a conjunction of instantiations of those templates, where each $\wild_i$ can be replaced by a well-typed term over formula variables.
Given the infinite set of possible instantiations, our implementation fixes the size of possible instantiations (e.g., to size 1), and proceeds by finding the smallest possible interpolants in terms of number of conjuncts.
If it cannot, it expands the search to terms of larger sizes.
We ensure that the special variables $\cost_i$ only appear in their set of inequality predicates defined above. Therefore, given an interpolant $I$, we can syntactically divide it into $I = I_\vars \land I_\cost$,
where $I_\vars$ is over program variables $\vars$ and $I_\cost \triangleq \cost_i \leq \ldots$ provides the upper bound on failure probabilities at that point along the trace.

\subsection{Proof of Report Noisy Max ($\rnm$)}\label{app:rnm}
We give an abridged form of the proof computed for Report Noisy Max
in \cref{fig:rnm}.
The set of queries $\qs$ is assumed to be non-empty,
and for simplicity, we let $\best$ be initialized to 1 instead of $\bot$
and modify the conditional to check if $\best = 1$---resulting in an equivalent program.
The bottom automaton shows a merge of the two paths through the conditional in the loop.
Notice that the propagated error probability is
\[
\frac{p \cdot (i-1)}{|Q|}
\]
This is because in each iteration, we apply the Laplace axiom
with
\[
f(\varsi) = \frac{p}{|Q|}
\]
After $k$ loop iterations, $i = k+1$, and therefore we have accumulated a failure probability of $\frac{p \cdot (i-1)}{|Q|}$.
(If the program were rewritten so as $i$ starts at 0 and the loop condition is $i < |Q|$, we would have the simpler failure expression $\frac{p \cdot i}{|Q|}$.)
Finally, we can infer that the total failure probability is $p$.
This is due to the state label (blue) $\varphi$.

The label $\varphi$ is defined as the conjunction of the following formulas, which we simplify for presentation:

  $$|Q| + 1\geq i \geq 1$$
  $$i \geq \best \geq 1$$
  $$i \neq 1 \Rightarrow \best < i$$
  $$\forall j \in [1,i) \ldotp |a_j - \qs_j(d)| \leq  \frac{2}{\epsilon}\log \frac{\len{\qs}}{p}$$
  $$\forall j \in [1,i) \ldotp a_\best \geq a_j$$

The first two conjuncts specify the range of values $i$ takes throughout the loop iterations.
The third conjunct specifies that $i$ leaps ahead of $\best$ after the first loop iteration, since $i$ is always incremented at the end of the loop, and $\best$ can at most be $i-1$ at that point. (The syntactic form of an implication is derived from the conditional's predicate.)
The fourth conjunct specifies that, for every element of $j$ of $a$,
its distance from the corresponding valuation of $\qs_j(\db)$ is bounded above by $\frac{2}{\epsilon}\log \frac{\len{\qs}}{p}$, which follows from the choice of the axiom.
Finally, the last conjunct states that the best element is indeed larger than all previously seen ones.

The last two conjuncts are primarily responsible for implying the postcondition (via the triangle inequality):
$$\forall j \in [1,\len{\qs}] \ldotp \qs_\best(\db) \geq \qs_j(\db) - \frac{4}{\epsilon}\log \frac{\len{\qs}}{p}$$
Notice that the
$\frac{2}{\epsilon}\log \frac{\len{\qs}}{p}$
in the fourth conjunct translates to $\frac{4}{\epsilon}\log \frac{\len{\qs}}{p}$
in the postcondition. This is due to the absolute value.

\begin{figure}[t]
\includegraphics[scale=1.15]{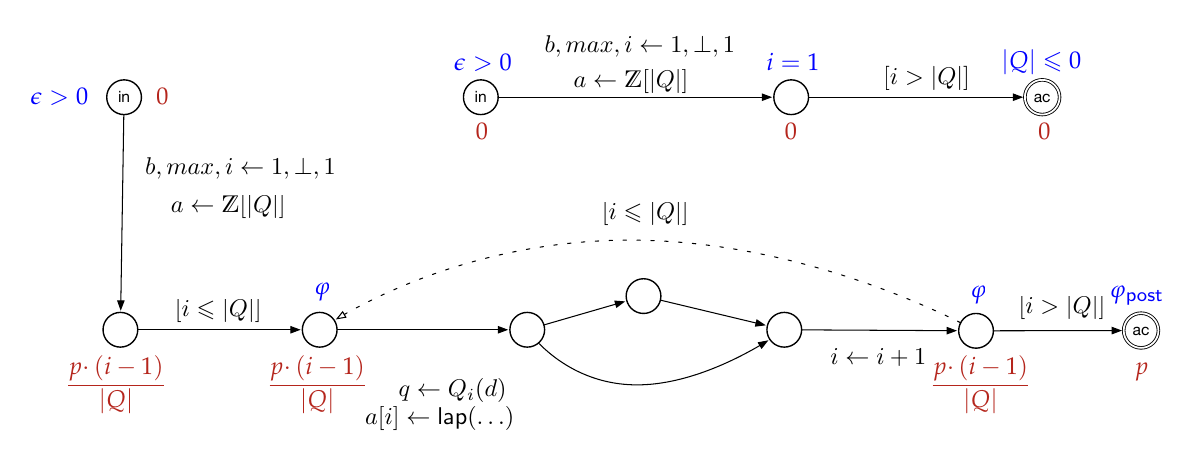}
\caption{Main loop of verification algorithm}
\label{fig:rnm}
\end{figure}

\end{document}